\newif\ifEnv
\newcounter{MYtempeqncnt}
\newtheorem{theorem}{\textbf{Theorem}}
\newtheorem{prop}{\textbf{Proposition}}
\newtheorem{remark}{\textbf{Remark}}
\newtheorem{lemma}{\textbf{Lemma}}
\newcommand{\Fc}{\mathcal{F}}
\newcommand{\Gc}{\mathcal{G}}
\newcommand{\Lc}{\mathcal{L}}
\newcommand{\Pc}{\mathcal{P}}
\newcommand{\Qc}{\mathcal{Q}}
\newcommand{\Tc}{\mathcal{T}}
\newcommand{\Xc}{\mathcal{X}}
\newcommand{\Pb}{\mathbb{P}}
\newcommand{\Eb}{\mathbb{E}}
	\newcommand{\cF}{\mathcal{F}}
	\newcommand{\cP}{\mathcal{P}}
	\newcommand{\cQ}{\mathcal{Q}}
\NewDocumentCommand{\ml}{m O{M} O{Q_n^{[M]}}} {\mathcal{L}_{#1}(#2\rightarrow #3)}
\begin{document}

\title{Weakly Private Information Retrieval with Heterogeneity in Servers' Trustfulness}
\author{Wenyuan Zhao, Yu-Shin Huang, Ruida Zhou, and Chao Tian}
\maketitle
\date{}

\maketitle

\begin{abstract}
We study the problem of weakly private information retrieval (PIR) when there is heterogeneity in servers' trustworthiness under the maximal leakage (Max-L) metric and mutual information (MI) metric. A user wishes to retrieve a desired message from N non-colluding servers efficiently, such that the identity of the desired message is not leaked in a significant manner; however, some servers can be more trustworthy than others. We propose a code construction for this setting and optimize the probability distribution for this construction. For the Max-L metric, it is shown that the optimal probability allocation for the proposed scheme essentially separates the delivery patterns into two parts: a completely private part that has the same download overhead as the capacity-achieving PIR code, and a non-private part that allows complete privacy leakage but has no download overhead by downloading only from the most trustful server. The optimal solution is established through a sophisticated analysis of the underlying convex optimization problem and a reduction between the homogeneous setting and the heterogeneous setting. For the MI metric, the homogeneous case is studied first for which the code can be optimized with an explicit probability assignment, while a closed-form solution becomes intractable for the heterogeneous case. Numerical results are provided for both cases to corroborate the theoretical analysis.
\end{abstract}


\section{Introduction}
Private information retrieval (PIR) systems \cite{Chor1995}, initiated by Chor et al. in 1995, were motivated by the necessity to safeguard user privacy during information retrieval. In the standard PIR framework, a user aims to retrieve a specific message from $N$ independent servers, each holding a complete set of $K$ messages. The critical challenge here is to prevent the servers from deducing which message the user is accessing. Typically, messages are quite large, making the download from the servers the dominant communication cost. The efficiency of a PIR system is measured in terms of its PIR capacity: the highest possible information bits per downloaded bit, which was definitively established by Sun and Jafar \cite{Sun2017}. Subsequently, an alternative optimal code, known as the TSC code, was introduced by Tian et al. in 2019 \cite{tian2019capacity}, featuring the smallest possible message length and query set. Many variations and extensions of the canonical PIR problem have been studied, such as PIR with colluding servers\cite{t1,t2,zhou2022two}, storage-constrained PIR \cite{c1,c2,c3,tian2023shannon,c5,c6,c7,c8,c9,c10,zhu2019new,vardy2023private}, PIR with symmetric privacy requirement \cite{d1,d2,d3}, and PIR with side information \cite{s1,s2,s3,s4,s5,s6,s7,lu2023single}; a more comprehensive literature survey can be found in \cite{ulukus2022private}.

In certain scenarios, absolute privacy may not be a critical requirement. In these cases, users might tolerate a scenario in which servers can guess the identity of the desired message with low confidence. This leads to the concept of Weakly Private Information Retrieval (W-PIR), which imposes a more relaxed privacy constraint \cite{Asonov2002,Toledo2016,Samy2019,ZhuqingJia2019,Lin2019,Zhou2020a,Lin2021,Samy2021,lin2021capacity,yakimenka2022optimal}. In exchange for loss of privacy, a higher retrieval rate can be achieved. The assessment of privacy leakage in W-PIR has been approached through several different metrics. Differential privacy was used in \cite{Toledo2016,Samy2019}; conditional entropy was used in \cite{ZhuqingJia2019}; mutual information (MI) was used in \cite{Lin2019}; and the maximal leakage (Max-L) metric (see \cite{Issa2020}) was adopted in \cite{Zhou2020a,Lin2021,qian2022improved}. The W-PIR code in \cite{Zhou2020a} was obtained by breaking the uniform distribution on the retrieval patterns in the TSC code, which increasingly favors the direct download pattern in the code as the privacy requirement is relaxed. The improved construction in \cite{qian2022improved} is based on the observation that in the high leakage regime, the direct retrieval of the desired message becomes increasingly efficient, effectively reducing the overall cost of the retrieval process in scenarios with higher permissible leakage levels.

In this work, we consider the setting where there is heterogeneity in the trustfulness of the servers, that is, some of the servers may be more trustworthy than others. To address this, we propose a general version of the improved W-PIR code \cite{qian2022improved}, which we designate as the W-PIR$^{\#}$ code. Our focus is on fine-tuning the probability allocation within this W-PIR$^{\#}$ framework. Our findings reveal that the optimal strategy for the Max-L metric is surprisingly straightforward: it is essentially a probabilistic sharing between the original TSC code and a direct download from the most trustworthy server. This optimal solution emerges from an intricate examination of a convex optimization problem, initially framed in the context of homogeneously trusted servers. A key aspect of our analysis is that the Max-L setting facilitates a natural transition from a homogeneous trust environment to one that is heterogeneous. For the MI metric, an explicit probability allocation is given in the homogeneous case first, which requires a delicate analysis of the property of the entropy and mutual information. However, a closed-form solution becomes intractable for the heterogeneous setting under the MI metric, for which numerical results are provided to corroborate the theoretical analysis.

The remainder of the paper is organized as follows. In Section \ref{sec:preliminaries}, we introduce the notation and briefly review the maximal leakage metric, the mutual information metric, and the TSC scheme. The proposed optimal code scheme and a motivating example are presented in section \ref{sec:code}. Sections \ref{sec:maxL} and \ref{sec:MI} are devoted to the main results under the Max-L and MI settings, respectively, and Section \ref{sec:conclusion} concludes this paper.

\section{Preliminaries}
\label{sec:preliminaries}
We present a formal introduction to the Weakly Private Information Retrieval (W-PIR) problem, considering both Max-L and MI metrics, with homogeneously and heterogeneously trusted servers. Then we review the TSC code proposed in \cite{tian2019capacity}, followed by a discussion on a particular variant of this code tailored to construct W-PIR codes.

\subsection{Information Retrieval Systems}
There are a total of $K$ mutually independent messages $W_{1:K} := (W_1, W_2, \ldots, W_K)$, each of which consists of $L$ i.i.d. symbols uniformly distributed in a finite set $\Xc$:
\begin{align*}
    & W_k := (W_k[1], W_k[2], \ldots, W_k[L]), ~k\in [1:K], \\
    & L = H(W_1) = H(W_2) = \cdots = H(W_K),
\end{align*}
where $[1:K]$ is used to denote the set $\{ k: k=1,2,\ldots,K\}$ in the sequel, and the entropy is calculated using the logarithm of base $|\Xc|$. Without loss of generality, we assume $K\geq 2$, and an independent copy of all the messages $W_{1:K}$ is stored in a total of $N$ non-colluding servers, respectively. An information retrieval code comprises specific component functions of queries, answers, and decoders. When a user seeks to retrieve a message $W_k$, $k \in [1:K]$, from $N$ servers without disclosing the identity of $k$ to any individual server, a private random key $F^{\ast} \in \Fc^{\ast}$ is firstly used to generate the queries
\begin{align}
Q^{[k]}_n := \phi_n(k, F^{\ast}), \quad n \in [1:N],
\end{align}
where $Q^{[k]}_n \in \Qc_n$, the union of all possible queries $Q^{[k]}_n$ over all $k$. Upon receiving $Q^{[k]}_n$, the server-$n$ generates an answer $A^{(q)}_n$ as a function of the query $Q^{[k]}_n$ and stored messages $W_{1:K}$, which is produced as
\begin{align}
    A^{(q)}_n := \varphi_n(q, W_{1:K}), \quad n \in [1:N]. \label{eqn:varphi}
\end{align} 
In this work, we assume that the answer symbols are distributed in a finite set $\mathcal{Y}$ which is the same as message symbols, i.e., $\mathcal{X}=\mathcal{Y}$. The length of the answer is denoted as $\ell_n^{(q)}$, which may vary according to the query but not to the messages. 

For simplicity of notation, we denote $A^{(Q^{[k]}_{n})}_n$ by $A^{[k]}_n$ and $\ell_n^{(Q^{[k]}_{n})}$ as $\ell_n^{[k]}$, both of which are random variables. Using all the answers from $N$ servers, the user intends to decode the message $\hat{W}_k$ with the function
\begin{align}
    \hat{W}_k := \psi( A^{[k]}_{1:N}, k, F^*).
\end{align}
An information retrieval code is said to be valid only when the desired message is recovered accurately, that is, $\hat{W}_k = W_k$. 

The normalized (worst-case) average download cost is then defined as
\begin{align}
    D := \max_{k \in [1:K]} \Eb\left[\frac{1}{L} \sum_{n = 1}^N \ell_n^{[k]} \right],
\end{align}
where the expectation is taken over the set of all random keys $F^*$. Note that $D$ is determined solely by queries and query distribution, without being influenced by the realization of specific messages or the selection of the desired message index $k$.

\subsection{Maximal Leakage and MI Leakage Metric}
We consider \emph{weakly private information retrieval} (W-PIR) where the user intends to control the amount of information leakage that a database can infer from queries. The privacy leakage in the identity of the desired message, represented as the index $M$ of $W_M$, is due to the query $Q^{[M]}_n$ sent by the user and must be measured for control. We adopted two metrics to study this leakage: Max-L and MI metric.

\vspace{0.1cm}
\noindent{\it The Max-L metric $\Lc(M \rightarrow Q^{[M]}_n)$:} The Max-L leakage on the message $M$ to the $n$-th server is defined in \cite{Zhou2020a} and \cite{Issa2020} as
\begin{align}
	\Lc(M\rightarrow Q_n^{[M]}) = \log_2\bigg{(}\sum_{q \in \Qc_n} \max_{k \in [1:K]} \Pb(Q^{[k]}_n = q )\bigg{)}, \label{eq:max-leakage-def}
\end{align}
which in fact does not depend on the probability distribution of $M$. The server can estimate $M$ more accurately from $Q^{[M]}_n$ in the sense that $\Lc(M\rightarrow Q_n^{[M]})$ is large; on the other hand, the retrieval is completely private when $\Lc(M\rightarrow Q_n^{[M]})=0$, and the distribution of $Q^{[k]}_n$ and $Q^{[k']}_n$ is identical for any $k,k'\in [1:K]$. In \cite{Issa2020}, it was suggested that the exponential of leakage can also be viewed as a measure of information leakage. We can measure the overall privacy leakage under the Max-L metric as $\rho_{(\text{Max-L})}$ by the weighted sum of the exponential leakage amounts to the individual servers
\begin{align*}
\rho_{(\text{Max-L})}\triangleq \sum_{n=1}^N \gamma_n2^{\Lc(M \rightarrow Q^{[M]}_n)},
\end{align*}
where $\gamma_n>0$. Here, the exponential is taken for simplifying the analysis. Without loss of generality, we assume $\gamma_1\leq \gamma_2\leq\ldots\leq \gamma_N$. In most cases, we shall choose to normalize the weights so that $\sum_{n=1}^N \gamma_n=1$, however, this is not critical, and we shall utilize this fact in subsequent derivations. Note that the $\gamma_n$'s are strictly positive, as otherwise the problem becomes trivial since the optimal strategy is to directly retrieve all messages from this completely trusted server. For the homogeneous trust setting, we simply set $\gamma_n=1/N$. 

\vspace{0.1cm}
\noindent{\it The MI metric $\text{MI}(M\rightarrow Q_n^{[M]})$:} The MI leakage on the message $M$ to the $n$-th server is defined as 
\begin{align}
	\text{MI}(M\rightarrow Q_n^{[M]}) := I(M;Q_n), \label{eq:MI}
\end{align}
where $Q_n$ is the random variable which represents the query induced jointly by the random key $F^*$ and the random message index $M$. We assume the message index $M$ is uniformly distributed in $[1:K]$ in this work, and the overall leakage under the MI metric is similarly defined as the weighted sum of the leaked MI to the individual servers
\begin{align}
\rho_{(\text{MI})}\triangleq \sum_{n=1}^N \gamma_n\text{MI}(M\rightarrow Q_n^{[M]}).
\end{align}

A \emph{valid} code for W-PIR with $K$ messages and $N$ servers under the download cost constraint $d$ is a collection of functions $(\{\phi_n\}_{n \in [1:N]},\{\varphi_n\}_{n \in [1:N]},\psi)$ that can correctly retrieve the desired message and additionally satisfies the download constraint $D\leq d$. A leakage $\rho$ is called achievable for the download cost constraint $d$, if there exists a valid code such that the leakage $\Lc\leq \rho$ under the download constraint $d$. The closure of the collection of these $(\rho,d)$ pairs is called the achievable $(\rho,d)$ region, denoted by $\mathcal{G}_{\text{Max-L}}$ and $\mathcal{G}_{\text{MI}}$ under the Max-L metric and the MI metric, respectively.

\subsection{The TSC Code and its Permuted Variant}
We briefly review the TSC code in \cite{tian2019capacity}, and propose one permuted variant of this code that is suitable for our cases where limited information leakage is allowed. In the TSC code, each message consists of symbols $L = N-1$, i.e., 
\begin{align*}
    W_k=(W_k[1], W_k[2], \ldots, W_k[N-1]), ~k\in [1:K],
\end{align*}
where a dummy symbol $W_k[0]=0$ is prepended at the beginning of all messages. To better facilitate the construction of the new code, especially in a heterogeneous setting, we provide a variation of the original construction, which can be viewed as probabilistic sharing between all permutations (over $N$ servers) of the PIR code in \cite{tian2019capacity}. 

Define a random vector $F\triangleq(F_1, F_2, \ldots, F_{K-1})$ of length-$(K-1)$, which is uniformly distributed in $[0:N-1]^{K-1}$ and a random bijective mapping $\pi: [1:N]\rightarrow [0:N-1]$ (a permutation on the set $[1:N]$ but downshifted by $1$). The random key $F^*$ is defined as the concatenation of $F$ and $\pi$: 
\begin{align}
    F^*: = (F,\pi)=(F_1,F_2,\dots,F_{K-1},\pi),
\end{align} 
where $F_1,\dots,F_{K-1},\pi$ are mutually independent random variables, and the distribution of $\pi$ will be specified later. We shall use $f$ to denote a specific realization of the random vector $F$, and use $\mathcal{F}$ to denote the set of $[0:N-1]^{K-1}$, i.e., the set of possible values of the partial random key $F$.

The function $\phi_n^*(k, F^*)$ to generate the query $Q_n^{[k]}$ for retrieving message-$k$ from server-$n$ is defined as 
\begin{equation}
\begin{aligned}
     \phi_n^*(k, F^*) \triangleq \bigg( & F_1, F_2, \dots,F_{k-1}, \big(\pi(n)-\sum_{j = 1}^{K-1}F_j \big)_N, F_{k}, F_{k+1}, \dots, F_{K-1} \bigg), \label{eqn:tsc-phi}
\end{aligned}
\end{equation}
where $(\cdot)_N$ denotes the modulo $N$ operation. Each query is a length-$K$ vector, and we denote the $m$-th symbol of $Q_{n}^{[k]}$ as $Q_{n,m}^{[k]}$. Upon receiving the query, the answer $A_n^{[k]}$ is returned from server-$n$ generated by the function
\begin{align}
    \varphi^{\ast}_n(q,W_{1:K})& \triangleq W_1[Q_{n,1}^{[k]}]\oplus W_2[Q_{n,2}^{[k]}]\oplus \cdots \oplus W_K[Q_{n,K}^{[k]}]\notag\\
    & = W_k[(\pi(n)-\sum_{j = 1}^{K-1}F_j)_N] \oplus \mathscr{I},
\end{align}
where $\oplus$ is addition in the given finite field. As a result, $\mathscr{I}$ is the interference signal given by
\begin{equation}
    \begin{split}
        \mathscr{I} = W_1[F_1]\oplus\cdots\oplus W_{k-1}[F_{k-1}] \oplus W_{k+1}[F_{k}] \oplus \cdots \oplus W_{K}[F_{K-1}].
    \end{split}
\end{equation}
Given the existence of an $n^{\ast} \in [1:N]$ such that $(\pi(n^{\ast})-\sum_{j = 1}^{K-1}F_j)_N = 0$, it implies that $A_{n^{\ast}}^{[k]} = \mathscr{I}$. Consequently, the user can retrieve the desired message $W_k$ by subtracting $\mathscr{I}$ from $A_{n}^{[k]}$ for all $n \neq n^{\ast}$. It should be noted that with a probability of $N^{-(K-1)}$, the interference signal $\mathscr{I}$ may only consist of dummy symbols, which eliminates the need for its download entirely. In such cases, direct retrieval can be achieved by accessing the desired message from $N-1$ servers, one symbol per server. The download cost is therefore 
\begin{align}
    D^{\ast} = \frac{N}{N-1}\left(1-\frac{1}{N^{K-1}}\right) + \frac{1}{N^{K-1}} =\frac{ 1 - N^{-K} }{ 1 - N^{-1} },
\end{align}
matching the capacity result given in \cite{Sun2017}. It can be shown that there is no privacy leakage regardless of the distribution of the random permutation $\pi$, since for each fixed permutation the resultant code is private. An example of the code (with adjusted probabilities for W-PIR) is given in Section \ref{sec:code} (Tables \ref{tab:N3K2}, the lower halves without the $\#$ parts); more details can be found in \cite{tian2019capacity}.

\subsection{Weakly PIR: Reassigned Probabilities in TSC}
In the permuted variant of the generalized TSC code, we can reduce the download cost by assigning a higher probability to random keys when $F_1=F_2=\ldots=F_{K-1}=0$, i.e. the pattern for which the retrieval downloads the messages without interference at the cost of $L$. If the probabilities of these random keys are very high, then the messages will be more likely to be downloaded directly from the $N-1$ servers, resulting in privacy leakage but a lower download cost. If the probabilities of these random keys are the same as for all other keys, then we have the original permuted variant of the TSC code, resulting in completely private retrieval. By adjusting these probability assignments, we obtain a range of weakly private information retrieval codes achieving different trade-offs between the download cost and the privacy leakage. Almost all existing W-PIR code constructions essentially utilize this approach in some manner \cite{Lin2019,Zhou2020a,Lin2021,Samy2021}.

\section{W-PIR$^\#$: Generalized TSC Code with Escape Retrieval Symbols} 
\label{sec:code}

\subsection{The Proposed Code Construction}

For high-leakage situations, the weakly private information retrieval code given above by reassigning probabilities in the TSC code does not perform well. To see this, consider the extreme case of the minimum download cost point, this code will download the messages directly from $N-1$ servers, resulting in a privacy leakage to all these servers. However, we can instead directly download the message from a single server, therefore, leaking the message index to only one server. This motivates the addition of such direct download patterns in our proposed new code, and these download patterns are denoted as $\#$. 

We next present the W-PIR$^\#$ code, which is essentially a probabilistic sharing scheme between the generalized TSC code and the direct retrieval patterns from individual servers. In this code, we again set $L=N-1$. The random key $F^*$ is generated from the set $\Fc^*$ with a probability distribution $\Pb_k(F^*)$, where $\Fc^*= ([0:N-1]^{K-1}\times \cP) \cup [1:N]$ for which $\cP=\{\pi\}$ is the set of all bijective mappings $[1:N]\rightarrow [0:N-1]$. This probability distribution of requesting message $k$ is denoted as 
\begin{align}
    \Pb_k(F^*) = \begin{cases} p^{k,F^*}_{(\#)}, & F^* \in [1:N]\\
    p^{k,\pi}_{(f)}, & F^* = (f,\pi)\in [0:N-1]^{K-1}\times \cP
    \end{cases},
\end{align}
which needs to satisfy
\begin{align}
    \sum_{n=1}^N p^{k,n}_{(\#)} + \sum_{f\in \mathcal{F}}\sum_{\pi\in \cP} p^{k,\pi}_{(f)}  = 1,\quad k=1,2,\ldots,K.
\end{align}

The query $Q_n^{[k]}$ to server-$n$ is produced as:
\begin{align}
    Q_n^{[k]} = \begin{cases} \#_k, & F^* = n\\ 
    \underline{0_K}, & F^* \in [1:N],~ F^* \neq n\\ 
    \phi_{n}^*(k,F^*), & F^*\notin [1:N]\end{cases},
\end{align}
where $\underline{0_K}$ is the length-$K$ all-zero vector. The answer $A_n^{[k]}$ from server-$n$ is generated as
\begin{align}
    A_n^{[k]} = \begin{cases} W_k, & q = \#_k\\ 
    \varphi^*(q,W_{1:K}), &  q \notin \{\#_k: k\in[1:K]\}.\end{cases}
\end{align}
The decoding procedure follows directly from the original generalized TSC code when $F^* \notin [1:N]$ and does not require decoding when $F^* \in [1:N]$. We will refer to this code as W-PIR$^{\#}$. A simpler version of W-PIR$^{\#}$, which does not allow all permutations, was first presented in \cite{qian2022improved}.

The correctness of the code is obvious, and the download cost $D$ can be simply computed as 
\begin{align}
    p_d^k &\triangleq \sum_{n=1}^N p^{k,n}_{(\#)} + \sum_{\pi\in \cP} p^{k,\pi}_{(\underline{0_{K-1}})}, k\in [1:K],\label{eq:pd}\\
    D & = \max_k \left(p^k_d + \frac{N}{N-1}(1-p^k_d)\right),
\end{align}
where $p_d^k$ is the overall probability of using a direct download to retrieve message $k$, either by retrieving from $(N-1)$ servers, or by retrieving from only $1$ servers. We defer the analysis of privacy to Sections \ref{sec:maxL} and \ref{sec:MI}.

\subsection{An Example When $K=2$ and $N=3$}

We provide an example of more explicit code construction in Table \ref{tab:N3K2}. Consider the case with $K=2$ messages and $N=3$ servers. Here the dummy symbols $a_0$ and $b_0$ are omitted for conciseness. The random key $F^*$ has a total of $21$ possible realizations, each is either associated with a random key and the downshifted permutation function $(F,\pi)$, or a direct retrieval specified by the server index.  The message length is $L=N-1=2$, and we write the two messages $W_1 = (a_1,a_2)$, $W_2 = (b_1,b_2)$. The queries and answers are given in Table \ref{tab:N3K2}. The queries in the top three rows of the two tables directly request the full message from a single server denoted by $\#_1$ and $\#_2$, while the remaining rows are essentially the TSC code with different probabilities for the queries, assigned according to their interference signals and the permutation functions. Note that the interference signal is controlled by the first $(K-1)$ entries $F$ of the random key $F^*$. Let us denote $\|F\|$ as the \emph{size} of the interference corresponding to the random key $F$, which is also its Hamming weight; in this example, $\|F\|$ can only be $0$ or $1$. Note that $|\mathcal{F}|$ is used to denote the \emph{cardinality} of the set $\mathcal{F}$, different from $\|F\|$.

\begin{table*}[tb!]
\caption{Proposed code for $N=3, K=2$}
\label{tab:N3K2}
    \centering
    \begin{subtable}[t]{0.495\textwidth}
        \caption{Retrieval of $W_{1}$}
        \centering
        \resizebox{\linewidth}{!}{
        \renewcommand{\arraystretch}{1.2}
        \begin{tabular}{|ccccccccc|}
        \hline
        \multicolumn{9}{|c|}{Requesting Message $k=1$}                                                                                                                                                                                                                  \\ \hline
        \multicolumn{1}{|c|}{\multirow{2}{*}{Prob.}} & \multicolumn{1}{c|}{\multirow{2}{*}{$F$}} & \multicolumn{1}{c|}{\multirow{2}{*}{$\pi$ or $n$}}&\multicolumn{2}{c|}{Server 1}                                         & \multicolumn{2}{c|}{Server 2}                                         & \multicolumn{2}{c|}{Server 3}                    \\ \cline{4-9} 
        \multicolumn{1}{|c|}{}                       & \multicolumn{1}{c|}{}      & \multicolumn{1}{c|}{}               & \multicolumn{1}{c|}{$Q_{1}^{[1]}$} & \multicolumn{1}{c|}{$A_{1}$}     & \multicolumn{1}{c|}{$Q_{2}^{[1]}$} & \multicolumn{1}{c|}{$A_{2}$}     & \multicolumn{1}{c|}{$Q_{3}^{[1]}$} & $A_{3}$     \\ \hline\hline
        \multicolumn{1}{|c|}{$p_{(\#)}^{1,1}$}     & \multicolumn{1}{c|}{$\#$} & \multicolumn{1}{c|}{$1$}               & \multicolumn{1}{c|}{$\#_1$}        & \multicolumn{1}{c|}{$a_1,a_2$}   & \multicolumn{1}{c|}{$00$}          & \multicolumn{1}{c|}{$\emptyset$} & \multicolumn{1}{c|}{$00$}          & $\emptyset$ \\ \hline
        \multicolumn{1}{|c|}{$p_{(\#)}^{1,2}$}     & \multicolumn{1}{c|}{$\#$}  & \multicolumn{1}{c|}{$2$}               & \multicolumn{1}{c|}{$00$}          & \multicolumn{1}{c|}{$\emptyset$} & \multicolumn{1}{c|}{$\#_1$}        & \multicolumn{1}{c|}{$a_1,a_2$}   & \multicolumn{1}{c|}{$00$}          & $\emptyset$ \\ \hline
        \multicolumn{1}{|c|}{$p_{(\#)}^{1,3}$}     & \multicolumn{1}{c|}{$\#$} & \multicolumn{1}{c|}{$3$}                & \multicolumn{1}{c|}{$00$}          & \multicolumn{1}{c|}{$\emptyset$} & \multicolumn{1}{c|}{$00$}          & \multicolumn{1}{c|}{$\emptyset$} & \multicolumn{1}{c|}{$\#_1$}        & $a_1,a_2$   \\ \hline\hline
        \multicolumn{1}{|c|}{$p_{(0)}^{1,{[2,1,0]}}$}    & \multicolumn{1}{c|}{$0$}  & \multicolumn{1}{c|}{$(2,1,0)$}                & \multicolumn{1}{c|}{$20$}          & \multicolumn{1}{c|}{$a_2$}       & \multicolumn{1}{c|}{$10$}          & \multicolumn{1}{c|}{$a_1$}       & \multicolumn{1}{c|}{$00$}          & $\emptyset$ \\ \hline
        \multicolumn{1}{|c|}{$p_{(0)}^{1,[2, 0, 1]}$}    & \multicolumn{1}{c|}{$0$}  & \multicolumn{1}{c|}{$(2,0,1)$}                & \multicolumn{1}{c|}{$20$}          & \multicolumn{1}{c|}{$a_2$}       & \multicolumn{1}{c|}{$00$}          & \multicolumn{1}{c|}{$\emptyset$} & \multicolumn{1}{c|}{$10$}          & $a_1$       \\ \hline
        \multicolumn{1}{|c|}{$p_{(0)}^{1,[1,2,0]}$}    & \multicolumn{1}{c|}{$0$}    & \multicolumn{1}{c|}{$(1,2,0)$}              & \multicolumn{1}{c|}{$10$}          & \multicolumn{1}{c|}{$a_1$}       & \multicolumn{1}{c|}{$20$}          & \multicolumn{1}{c|}{$a_2$}       & \multicolumn{1}{c|}{$00$}          & $\emptyset$ \\ \hline
        \multicolumn{1}{|c|}{$p_{(0)}^{1,[1,0,2]}$}    & \multicolumn{1}{c|}{$0$}    & \multicolumn{1}{c|}{$(1,0,2)$}              & \multicolumn{1}{c|}{$10$}          & \multicolumn{1}{c|}{$a_1$}       & \multicolumn{1}{c|}{$00$}          & \multicolumn{1}{c|}{$\emptyset$} & \multicolumn{1}{c|}{$20$}          & $a_2$       \\ \hline
        \multicolumn{1}{|c|}{$p_{(0)}^{1,[0,2,1]}$}    & \multicolumn{1}{c|}{$0$}    & \multicolumn{1}{c|}{$(0,2,1)$}              & \multicolumn{1}{c|}{$00$}          & \multicolumn{1}{c|}{$\emptyset$} & \multicolumn{1}{c|}{$20$}          & \multicolumn{1}{c|}{$a_2$}       & \multicolumn{1}{c|}{$10$}          & $a_1$       \\ \hline
        \multicolumn{1}{|c|}{$p_{(0)}^{1,[0,1,2]}$}    & \multicolumn{1}{c|}{$0$}    & \multicolumn{1}{c|}{$(0,1,2)$}              & \multicolumn{1}{c|}{$00$}          & \multicolumn{1}{c|}{$\emptyset$} & \multicolumn{1}{c|}{$10$}          & \multicolumn{1}{c|}{$a_1$}       & \multicolumn{1}{c|}{$20$}          & $a_2$       \\ \hline
        \multicolumn{1}{|c|}{$p_{(1)}^{1,[2,1,0]}$}    & \multicolumn{1}{c|}{$1$}    & \multicolumn{1}{c|}{$(2,1,0)$}              & \multicolumn{1}{c|}{$11$}          & \multicolumn{1}{c|}{$a_1\oplus b_1$}   & \multicolumn{1}{c|}{$01$}          & \multicolumn{1}{c|}{$b_1$}       & \multicolumn{1}{c|}{$21$}          & $a_2\oplus b_1$   \\ \hline
        \multicolumn{1}{|c|}{$p_{(1)}^{1,[2,0,1]}$}    & \multicolumn{1}{c|}{$1$}    & \multicolumn{1}{c|}{$(2,0,1)$}              & \multicolumn{1}{c|}{$11$}          & \multicolumn{1}{c|}{$a_1\oplus b_1$}   & \multicolumn{1}{c|}{$21$}          & \multicolumn{1}{c|}{$a_2\oplus b_1$}   & \multicolumn{1}{c|}{$01$}          & $b_1$       \\ \hline
        \multicolumn{1}{|c|}{$p_{(1)}^{1,[1,2,0]}$}    & \multicolumn{1}{c|}{$1$}    & \multicolumn{1}{c|}{$(1,2,0)$}              & \multicolumn{1}{c|}{$01$}          & \multicolumn{1}{c|}{$b_1$}       & \multicolumn{1}{c|}{$11$}          & \multicolumn{1}{c|}{$a_1\oplus b_1$}   & \multicolumn{1}{c|}{$21$}          & $a_2\oplus b_1$   \\ \hline
        \multicolumn{1}{|c|}{$p_{(1)}^{1,[1,0,2]}$}    & \multicolumn{1}{c|}{$1$}    & \multicolumn{1}{c|}{$(1,0,2)$}              & \multicolumn{1}{c|}{$01$}          & \multicolumn{1}{c|}{$b_1$}       & \multicolumn{1}{c|}{$21$}          & \multicolumn{1}{c|}{$a_2\oplus b_1$}   & \multicolumn{1}{c|}{$11$}          & $a_1\oplus b_1$   \\ \hline
        \multicolumn{1}{|c|}{$p_{(1)}^{1,[0,2,1]}$}    & \multicolumn{1}{c|}{$1$}    & \multicolumn{1}{c|}{$(0,2,1)$}              & \multicolumn{1}{c|}{$21$}          & \multicolumn{1}{c|}{$a_2\oplus b_1$}   & \multicolumn{1}{c|}{$11$}          & \multicolumn{1}{c|}{$a_1\oplus b_1$}   & \multicolumn{1}{c|}{$01$}          & $b_1$       \\ \hline
        \multicolumn{1}{|c|}{$p_{(1)}^{1,[0,1,2]}$}    & \multicolumn{1}{c|}{$1$}    & \multicolumn{1}{c|}{$(0,1,2)$}              & \multicolumn{1}{c|}{$21$}          & \multicolumn{1}{c|}{$a_2\oplus b_1$}   & \multicolumn{1}{c|}{$01$}          & \multicolumn{1}{c|}{$b_1$}       & \multicolumn{1}{c|}{$11$}          & $a_1\oplus b_1$   \\ \hline
        \multicolumn{1}{|c|}{$p_{(2)}^{1,[2,1,0]}$}    & \multicolumn{1}{c|}{$2$}    & \multicolumn{1}{c|}{$(2,1,0)$}              & \multicolumn{1}{c|}{$02$}          & \multicolumn{1}{c|}{$b_2$}       & \multicolumn{1}{c|}{$22$}          & \multicolumn{1}{c|}{$a_2\oplus b_2$}   & \multicolumn{1}{c|}{$12$}          & $a_1\oplus b_2$   \\ \hline
        \multicolumn{1}{|c|}{$p_{(2)}^{1,[2,0,1]}$}    & \multicolumn{1}{c|}{$2$}    & \multicolumn{1}{c|}{$(2,0,1)$}              & \multicolumn{1}{c|}{$02$}          & \multicolumn{1}{c|}{$b_2$}       & \multicolumn{1}{c|}{$12$}          & \multicolumn{1}{c|}{$a_1\oplus b_2$}   & \multicolumn{1}{c|}{$22$}          & $a_2\oplus b_2$   \\ \hline
        \multicolumn{1}{|c|}{$p_{(2)}^{1,[1,2,0]}$}    & \multicolumn{1}{c|}{$2$}    & \multicolumn{1}{c|}{$(1,2,0)$}              & \multicolumn{1}{c|}{$22$}          & \multicolumn{1}{c|}{$a_2\oplus b_2$}   & \multicolumn{1}{c|}{$02$}          & \multicolumn{1}{c|}{$b_2$}       & \multicolumn{1}{c|}{$12$}          & $a_1\oplus b_2$   \\ \hline
        \multicolumn{1}{|c|}{$p_{(2)}^{1,[1,0,2]}$}    & \multicolumn{1}{c|}{$2$}    & \multicolumn{1}{c|}{$(1,0,2)$}              & \multicolumn{1}{c|}{$22$}          & \multicolumn{1}{c|}{$a_2\oplus b_2$}   & \multicolumn{1}{c|}{$12$}          & \multicolumn{1}{c|}{$a_1\oplus b_2$}   & \multicolumn{1}{c|}{$02$}          & $b_2$       \\ \hline
        \multicolumn{1}{|c|}{$p_{(2)}^{1,[0,2,1]}$}    & \multicolumn{1}{c|}{$2$}    & \multicolumn{1}{c|}{$(0,2,1)$}              & \multicolumn{1}{c|}{$12$}          & \multicolumn{1}{c|}{$a_1\oplus b_2$}   & \multicolumn{1}{c|}{$02$}          & \multicolumn{1}{c|}{$b_2$}       & \multicolumn{1}{c|}{$22$}          & $a_2\oplus b_2$   \\ \hline
        \multicolumn{1}{|c|}{$p_{(2)}^{1,[0,1,2]}$}    & \multicolumn{1}{c|}{$2$}    & \multicolumn{1}{c|}{$(0,1,2)$}              & \multicolumn{1}{c|}{$12$}          & \multicolumn{1}{c|}{$a_1\oplus b_2$}   & \multicolumn{1}{c|}{$22$}          & \multicolumn{1}{c|}{$a_2\oplus b_2$}   & \multicolumn{1}{c|}{$02$}          & $b_2$       \\ \hline
        \end{tabular}
        }
    \end{subtable}
    \begin{subtable}[t]{0.495\textwidth}
        \caption{Retrieval of $W_{2}$}
        \centering
        \resizebox{\linewidth}{!}{
        \renewcommand{\arraystretch}{1.13}
        \begin{tabular}{|ccccccccc|}
        \hline
        \multicolumn{9}{|c|}{Requesting Message $k=2$}                                                                                                                                                                                                                                               \\ \hline
        \multicolumn{1}{|c|}{\multirow{2}{*}{Prob.}} & \multicolumn{1}{c|}{\multirow{2}{*}{$F$}} & \multicolumn{1}{c|}{\multirow{2}{*}{$\pi$ or $n$}}&\multicolumn{2}{c|}{Server 1}                                         & \multicolumn{2}{c|}{Server 2}                                         & \multicolumn{2}{c|}{Server 3}                    \\ \cline{4-9} 
        \multicolumn{1}{|c|}{}                       & \multicolumn{1}{c|}{}      & \multicolumn{1}{c|}{}               & \multicolumn{1}{c|}{$Q_{1}^{[2]}$} & \multicolumn{1}{c|}{$A_{1}$}     & \multicolumn{1}{c|}{$Q_{2}^{[2]}$} & \multicolumn{1}{c|}{$A_{2}$}     & \multicolumn{1}{c|}{$Q_{3}^{[2]}$} & $A_{3}$     \\ \hline\hline
        \multicolumn{1}{|c|}{$p_{(\#)}^{2,1}$}     & \multicolumn{1}{c|}{$\#$} & \multicolumn{1}{c|}{$1$}               & \multicolumn{1}{c|}{$\#_2$}        & \multicolumn{1}{c|}{$b_1,b_2$}   & \multicolumn{1}{c|}{$00$}          & \multicolumn{1}{c|}{$\emptyset$} & \multicolumn{1}{c|}{$00$}          & $\emptyset$ \\ \hline
        \multicolumn{1}{|c|}{$p_{(\#)}^{2,2}$}     & \multicolumn{1}{c|}{$\#$}  & \multicolumn{1}{c|}{$2$}               & \multicolumn{1}{c|}{$00$}          & \multicolumn{1}{c|}{$\emptyset$} & \multicolumn{1}{c|}{$\#_2$}        & \multicolumn{1}{c|}{$b_1,b_2$}   & \multicolumn{1}{c|}{$00$}          & $\emptyset$ \\ \hline
        \multicolumn{1}{|c|}{$p_{(\#)}^{2,3}$}     & \multicolumn{1}{c|}{$\#$} & \multicolumn{1}{c|}{$3$}                & \multicolumn{1}{c|}{$00$}          & \multicolumn{1}{c|}{$\emptyset$} & \multicolumn{1}{c|}{$00$}          & \multicolumn{1}{c|}{$\emptyset$} & \multicolumn{1}{c|}{$\#_2$}        & $b_1,b_2$   \\ \hline\hline
        \multicolumn{1}{|c|}{$p_{(0)}^{2,{[2,1,0]}}$}    & \multicolumn{1}{c|}{$0$}  & \multicolumn{1}{c|}{$(2,1,0)$}                & \multicolumn{1}{c|}{$02$}          & \multicolumn{1}{c|}{$b_2$}       & \multicolumn{1}{c|}{$01$}          & \multicolumn{1}{c|}{$b_1$}       & \multicolumn{1}{c|}{$00$}          & $\emptyset$ \\ \hline
        \multicolumn{1}{|c|}{$p_{(0)}^{2,[2, 0, 1]}$}    & \multicolumn{1}{c|}{$0$}  & \multicolumn{1}{c|}{$(2,0,1)$}                & \multicolumn{1}{c|}{$02$}          & \multicolumn{1}{c|}{$b_2$}       & \multicolumn{1}{c|}{$00$}          & \multicolumn{1}{c|}{$\emptyset$} & \multicolumn{1}{c|}{$01$}          & $b_1$       \\ \hline
        \multicolumn{1}{|c|}{$p_{(0)}^{2,[1,2,0]}$}    & \multicolumn{1}{c|}{$0$}    & \multicolumn{1}{c|}{$(1,2,0)$}              & \multicolumn{1}{c|}{$01$}          & \multicolumn{1}{c|}{$b_1$}       & \multicolumn{1}{c|}{$02$}          & \multicolumn{1}{c|}{$b_2$}       & \multicolumn{1}{c|}{$00$}          & $\emptyset$ \\ \hline
        \multicolumn{1}{|c|}{$p_{(0)}^{2,[1,0,2]}$}    & \multicolumn{1}{c|}{$0$}    & \multicolumn{1}{c|}{$(1,0,2)$}              & \multicolumn{1}{c|}{$01$}          & \multicolumn{1}{c|}{$b_1$}       & \multicolumn{1}{c|}{$00$}          & \multicolumn{1}{c|}{$\emptyset$} & \multicolumn{1}{c|}{$02$}          & $b_2$       \\ \hline
        \multicolumn{1}{|c|}{$p_{(0)}^{2,[0,2,1]}$}    & \multicolumn{1}{c|}{$0$}    & \multicolumn{1}{c|}{$(0,2,1)$}              & \multicolumn{1}{c|}{$00$}          & \multicolumn{1}{c|}{$\emptyset$} & \multicolumn{1}{c|}{$02$}          & \multicolumn{1}{c|}{$b_2$}       & \multicolumn{1}{c|}{$01$}          & $b_1$       \\ \hline
        \multicolumn{1}{|c|}{$p_{(0)}^{2,[0,1,2]}$}    & \multicolumn{1}{c|}{$0$}    & \multicolumn{1}{c|}{$(0,1,2)$}              & \multicolumn{1}{c|}{$00$}          & \multicolumn{1}{c|}{$\emptyset$} & \multicolumn{1}{c|}{$01$}          & \multicolumn{1}{c|}{$b_1$}       & \multicolumn{1}{c|}{$02$}          & $b_2$       \\ \hline
        \multicolumn{1}{|c|}{$p_{(1)}^{2,[2,1,0]}$}    & \multicolumn{1}{c|}{$1$}    & \multicolumn{1}{c|}{$(2,1,0)$}              & \multicolumn{1}{c|}{$11$}          & \multicolumn{1}{c|}{$a_1\oplus b_1$}   & \multicolumn{1}{c|}{$10$}          & \multicolumn{1}{c|}{$a_1$}       & \multicolumn{1}{c|}{$12$}          & $a_1\oplus b_2$   \\ \hline
        \multicolumn{1}{|c|}{$p_{(1)}^{2,[2,0,1]}$}    & \multicolumn{1}{c|}{$1$}    & \multicolumn{1}{c|}{$(2,0,1)$}              & \multicolumn{1}{c|}{$11$}          & \multicolumn{1}{c|}{$a_1\oplus b_1$}   & \multicolumn{1}{c|}{$12$}          & \multicolumn{1}{c|}{$a_1\oplus b_2$}   & \multicolumn{1}{c|}{$10$}          & $a_1$       \\ \hline
        \multicolumn{1}{|c|}{$p_{(1)}^{2,[1,2,0]}$}    & \multicolumn{1}{c|}{$1$}    & \multicolumn{1}{c|}{$(1,2,0)$}              & \multicolumn{1}{c|}{$10$}          & \multicolumn{1}{c|}{$a_1$}       & \multicolumn{1}{c|}{$11$}          & \multicolumn{1}{c|}{$a_1\oplus b_1$}   & \multicolumn{1}{c|}{$12$}          & $a_1\oplus b_2$   \\ \hline
        \multicolumn{1}{|c|}{$p_{(1)}^{2,[1,0,2]}$}    & \multicolumn{1}{c|}{$1$}    & \multicolumn{1}{c|}{$(1,0,2)$}              & \multicolumn{1}{c|}{$10$}          & \multicolumn{1}{c|}{$a_1$}       & \multicolumn{1}{c|}{$12$}          & \multicolumn{1}{c|}{$a_1\oplus b_2$}   & \multicolumn{1}{c|}{$11$}          & $a_1\oplus b_1$   \\ \hline
        \multicolumn{1}{|c|}{$p_{(1)}^{2,[0,2,1]}$}    & \multicolumn{1}{c|}{$1$}    & \multicolumn{1}{c|}{$(0,2,1)$}              & \multicolumn{1}{c|}{$12$}          & \multicolumn{1}{c|}{$a_1\oplus b_2$}   & \multicolumn{1}{c|}{$11$}          & \multicolumn{1}{c|}{$a_1\oplus b_1$}   & \multicolumn{1}{c|}{$10$}          & $a_1$       \\ \hline
        \multicolumn{1}{|c|}{$p_{(1)}^{2,[0,1,2]}$}    & \multicolumn{1}{c|}{$1$}    & \multicolumn{1}{c|}{$(0,1,2)$}              & \multicolumn{1}{c|}{$12$}          & \multicolumn{1}{c|}{$a_1\oplus b_2$}   & \multicolumn{1}{c|}{$10$}          & \multicolumn{1}{c|}{$a_1$}       & \multicolumn{1}{c|}{$11$}          & $a_1\oplus b_1$   \\ \hline
        \multicolumn{1}{|c|}{$p_{(2)}^{2,[2,1,0]}$}    & \multicolumn{1}{c|}{$2$}    & \multicolumn{1}{c|}{$(2,1,0)$}              & \multicolumn{1}{c|}{$20$}          & \multicolumn{1}{c|}{$a_2$}       & \multicolumn{1}{c|}{$22$}          & \multicolumn{1}{c|}{$a_2\oplus b_2$}   & \multicolumn{1}{c|}{$21$}          & $a_2\oplus b_1$   \\ \hline
        \multicolumn{1}{|c|}{$p_{(2)}^{2,[2,0,1]}$}    & \multicolumn{1}{c|}{$2$}    & \multicolumn{1}{c|}{$(2,0,1)$}              & \multicolumn{1}{c|}{$20$}          & \multicolumn{1}{c|}{$a_2$}       & \multicolumn{1}{c|}{$21$}          & \multicolumn{1}{c|}{$a_2\oplus b_1$}   & \multicolumn{1}{c|}{$22$}          & $a_2\oplus b_2$   \\ \hline
        \multicolumn{1}{|c|}{$p_{(2)}^{2,[1,2,0]}$}    & \multicolumn{1}{c|}{$2$}    & \multicolumn{1}{c|}{$(1,2,0)$}              & \multicolumn{1}{c|}{$22$}          & \multicolumn{1}{c|}{$a_2\oplus b_2$}   & \multicolumn{1}{c|}{$20$}          & \multicolumn{1}{c|}{$a_2$}       & \multicolumn{1}{c|}{$21$}          & $a_2\oplus b_1$   \\ \hline
        \multicolumn{1}{|c|}{$p_{(2)}^{2,[1,0,2]}$}    & \multicolumn{1}{c|}{$2$}    & \multicolumn{1}{c|}{$(1,0,2)$}              & \multicolumn{1}{c|}{$22$}          & \multicolumn{1}{c|}{$a_2\oplus b_2$}   & \multicolumn{1}{c|}{$21$}          & \multicolumn{1}{c|}{$a_2\oplus b_1$}   & \multicolumn{1}{c|}{$20$}          & $a_2$       \\ \hline
        \multicolumn{1}{|c|}{$p_{(2)}^{2,[0,2,1]}$}    & \multicolumn{1}{c|}{$2$}    & \multicolumn{1}{c|}{$(0,2,1)$}              & \multicolumn{1}{c|}{$21$}          & \multicolumn{1}{c|}{$a_2\oplus b_1$}   & \multicolumn{1}{c|}{$20$}          & \multicolumn{1}{c|}{$a_2$}       & \multicolumn{1}{c|}{$22$}          & $a_2\oplus b_2$   \\ \hline
        \multicolumn{1}{|c|}{$p_{(2)}^{2,[0,1,2]}$}    & \multicolumn{1}{c|}{$2$}    & \multicolumn{1}{c|}{$(0,1,2)$}              & \multicolumn{1}{c|}{$21$}          & \multicolumn{1}{c|}{$a_2\oplus b_1$}   & \multicolumn{1}{c|}{$22$}          & \multicolumn{1}{c|}{$a_2\oplus b_2$}   & \multicolumn{1}{c|}{$20$}          & $a_2$       \\ \hline
        \end{tabular}}
    \end{subtable}
\end{table*}

\subsection{The Reduced W-PIR$^{\#}$ Code}\label{section-reduced-gTSC}

A simpler scheme can in fact be as good as the general W-PIR$^\#$ code in some cases, and this reduced version plays an instrumental role for us to establish the optimal probability allocation for W-PIR$^{\#}$. 
In this reduced version, we set the probability as follows.
\begin{align}
    \Pb_{k}(F^*) = \begin{cases} p_{\#}, &  F^* \in [1:N]\\
    p_{j}, & \begin{array}{c} F^* = (F,\pi)\in [0:N-1]^{K-1}\times \cP \\ \text{      : $\pi$ is cyclic and $\|F\|=j$}  \end{array}\\
    0, & \text{otherwise}
    \end{cases},\label{eqn:reduced}
\end{align}
where $\|F\|$ is the Hamming weight of the first part of the random key $(F,\pi)$ when $F^*\notin [1:N]$. In other words, only cyclic permutations are allowed, instead of the full set of permutations; moreover, $F$'s with the same Hamming weight are assigned the same probability. We say $\pi$ is \emph{cyclic} when $\pi$ is in the set of $\{ \pi: \pi(n+1)= \left( \pi(n)+1 \right)_N, \forall n\in [1:N] \}$. Note that this reduced W-PIR$^{\#}$ is \emph{symmetric} even when used in the heterogeneous setting. 

The query $Q_n^{[k]}$ can take any possible values in $\cQ$. Denote $t_j\triangleq |\{q \in \mathcal{Q}:\|q\|=j\}|$, which is calculated as 
\begin{align}
    t_j= \binom{K}{j}(N-1)^j, ~\forall j\in[0:K].   \label{coefficient-t-j}
\end{align}
For notational simplicity, let $p_{-1}=p_K=0$. Similarly, we use $s_j$ to denote $|\cF_j|$, that is, the number of random keys $f$ that have Hamming weight $j$, given by
\begin{align}
    s_j= \binom{K-1}{j}(N-1)^j, ~\forall j\in[0:K-1].    \label{coefficient-s-j}
\end{align}
 
The download cost and the corresponding privacy of the reduced W-PIR$^{\#}$ code under the Max-L and MI metric are discussed in Section \ref{sec:maxL} and Section \ref{sec:MI}, respectively.

\section{WPIR: The Maximal Leakage Setting} 
\label{sec:maxL}

\subsection{Main Result}

We summarize the main result with \emph{heterogeneous} trustfulness of the servers under the Max-L metric in Theorem \ref{theorem:allocations}. 

\begin{theorem}
\label{theorem:allocations}
An optimal probability assignment for W-PIR$^{\#}$ under the Max-L metric is given by
\begin{align*}
&p_{(\#)}^{k,1}=\frac{N^{K}(1-D+D/N)-1}{N^{K-1}-1}:=\hat{p}_{\#}, \quad k\in [1:K];\\
&p_{(f)}^{k,\pi^*}=\frac{1-\hat{p}_{\#}}{N^{K-1}},\quad k\in [1:K],\quad f\in \Fc,
\end{align*}
where $\pi^*$ is the mapping $\pi^*(n)=n+1$, and other $p_{(\#)}^{k,n}$ and $p_{(f)}^{k,\pi}$ are assigned value zero. Without loss of generality, let $\gamma_1 \leq \gamma_2 \leq \cdots \leq \gamma_n$. As a consequence, with download cost $D\in[1,D^*]$, we have the optimal surrogate leakage for the W-PIR$^\#$ code as 
\begin{align}
\rho^{*}_{(\text{Max-L})}(D)= \gamma_{1} & \frac{(K-1)\left[N^{K-1}(N-(N-1)D)-1 \right] }{N^{K-1}-1} +\sum^{N}_{n=1} \gamma_{n}.
\end{align}
\end{theorem}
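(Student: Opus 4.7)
The plan is to split the argument into an achievability computation for the stated assignment and a converse showing that the resulting $\rho_{(\text{MaxL})}^{*}(D)$ cannot be improved within W-PIR$^{\#}$. This mirrors the strategy announced in the introduction: first handle the homogeneous sub-problem, then exploit the ordering $\gamma_{1}\le\cdots\le\gamma_{N}$ to reduce the heterogeneous case to it.

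For achievability I would substitute the proposed probabilities directly into (\ref{eq:pd}) and the Max-L definition (\ref{eq:max-leakage-def}). Because only the cyclic permutation $\pi^{*}$ and only server $1$ carry nonzero mass in their respective blocks, the direct-download probability is $p_{d}^{k}=\hat{p}_{\#}+(1-\hat{p}_{\#})/N^{K-1}$, and solving $D=p_{d}^{k}+\tfrac{N}{N-1}(1-p_{d}^{k})$ for $\hat{p}_{\#}$ recovers the expression in the theorem. For server $1$, the $K$ signatures $\#_{1},\ldots,\#_{K}$ each contribute $\hat{p}_{\#}$ to the outer max over $k$ (since each $\#_{j}$ has mass $\hat{p}_{\#}$ only under request $j$), while the $N^{K-1}$ TSC queries have $k$-invariant masses summing to $1-\hat{p}_{\#}$, giving $2^{\mathcal{L}_{1}}=1+(K-1)\hat{p}_{\#}$. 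For any $n\ge 2$, the only $\#$-style query reaching server $n$ is the common $\underline{0_{K}}$ symbol, whose mass $\hat{p}_{\#}$ does not depend on $k$, and the TSC contribution is again $k$-invariant, so $2^{\mathcal{L}_{n}}=1$; weighting these by $\gamma_{n}$ yields the claimed $\rho_{(\text{MaxL})}^{*}(D)$.

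For the converse I would appeal to the homogeneous-case analysis developed earlier in the paper and transport it to the weighted objective. The key structural observation is that $\max_{k}\Pb(Q_{n}^{[k]}=q)$ collapses to a single value independent of $k$ only when the conditional distributions $\Pb(Q_{n}^{[k]}=\cdot)$ are symmetric in $k$, which forces the non-$\#$ block of any optimal assignment to be a uniform distribution over the random-key set $\mathcal{F}$ for a single cyclic permutation (up to relabeling). With $\hat{p}_{\#}$ then pinned by the download constraint, the resulting homogeneous-optimal leakage profile has exactly one inflated server with $2^{\mathcal{L}}=1+(K-1)\hat{p}_{\#}$ and $N-1$ minimal servers with $2^{\mathcal{L}}=1$, so the weighted sum is minimized by placing the inflated server at the smallest-$\gamma$ index, namely server $1$. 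I expect the main obstacle to be the homogeneous converse itself: the Max-L objective is a non-smooth convex program whose active constraints couple the TSC and $\#$ blocks (in particular at the all-zero query when some $\pi(n)=0$), and ruling out more exotic supports likely requires careful KKT analysis or a support-merging exchange argument. Once that lemma is in hand, the weighted-sum reduction to the heterogeneous case is a short monotonicity step and the theorem follows.
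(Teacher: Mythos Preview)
Your achievability computation is correct and matches the paper's: under the stated assignment, server~$1$ sees $K$ distinguishable $\#_{k}$ queries each with mass~$\hat{p}_{\#}$ plus a $k$-invariant TSC block of total mass $1-\hat{p}_{\#}$, giving $2^{\mathcal{L}_{1}}=1+(K-1)\hat{p}_{\#}$, while every server $n\ge 2$ sees only $k$-invariant queries and has $2^{\mathcal{L}_{n}}=1$.

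Your overall converse strategy---establish the homogeneous bound first, then reduce the heterogeneous case to it---is also the paper's, and you correctly flag the homogeneous step (the paper's Proposition~\ref{prop:P1P2} and Theorem~\ref{theorem:homo}) as the analytic core requiring the KKT machinery. But your description of the reduction step has a gap. You argue that ``the homogeneous-optimal leakage profile has exactly one inflated server \ldots\ so the weighted sum is minimized by placing the inflated server at the smallest-$\gamma$ index.'' This reasons about a \emph{particular} optimal code and its server-permutations; it does not lower-bound $\sum_{n}\gamma_{n}2^{\mathcal{L}_{n}}$ for an \emph{arbitrary} feasible code. The structural premise is in fact false: the reduced W-PIR$^{\#}$ scheme of Section~\ref{section-reduced-gTSC} with $p_{\#}$ spread uniformly over all $N$ servers is also homogeneous-optimal, and its profile $(2^{\mathcal{L}_{1}},\ldots,2^{\mathcal{L}_{N}})$ is flat, not one-inflated. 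So homogeneous optimality does not force the profile you describe, and even if it did, characterizing only the profiles of \emph{optimal} codes would not by itself bound the heterogeneous objective over \emph{all} codes.

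The paper's reduction sidesteps this entirely by working pointwise on the objective. Since $2^{\mathcal{L}_{n}}\ge 1$ and $\gamma_{n}\ge\gamma_{1}$, every feasible code satisfies
\[
\sum_{n=1}^{N}\gamma_{n}\bigl(2^{\mathcal{L}_{n}}-1\bigr)\ \ge\ \gamma_{1}\sum_{n=1}^{N}\bigl(2^{\mathcal{L}_{n}}-1\bigr),
\]
and the right-hand side is, up to the additive shift $-N\gamma_{1}$, the homogeneous objective with common weight $\gamma_{1}$, which Theorem~\ref{theorem:homo} lower-bounds by the stated value. Adding $\sum_{n}\gamma_{n}$ back gives exactly $\rho^{*}_{(\text{MaxL})}(D)$, and achievability then shows the bound is tight. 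This is presumably the ``short monotonicity step'' you had in mind, but note that it uses only the homogeneous optimal \emph{value} together with the trivial bound $2^{\mathcal{L}_{n}}\ge 1$; no structural characterization of optimizers is needed or invoked.
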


Theorem \ref{theorem:allocations} implies that without loss of optimality for the W-PIR$^\#$ code,  we can directly use probabilistic sharing between a direct download from the most trustworthy server and the original TSC strategy without any permutation. In other words, it consists of a completely public part (to the most trusted server) and a completely private part, and the proportion of the mixture determines the exact leakage in this trade-off. Intuitively, this strategy makes perfect sense, since the most trusted server will induce the least amount of leakage, and we might as well prefer to retrieve the whole message from it. Note that the probability assignment given in Theorem \ref{theorem:allocations} for the heterogeneous W-PIR$^\#$ is also an optimal probability assignment for the homogeneous setting under this metric.

The proof of Theorem \ref{theorem:allocations} is however quite sophisticated: first, we establish that without the loss of optimality, we can restrict our attention to a special type of probability allocation strategy corresponding to the reduced W-PIR$^\#$ code for the \textit{homogeneous} setting; then we show that a particular probability allocation for the reduced W-PIR$^\#$ code is in fact optimal for the \textit{homogeneous} setting; lastly, we make a reduction based on a special property in the reduced W-PIR$^\#$ code, to yield the optimal probability allocation for the \textit{heterogeneous} setting. 

The download cost and maximal leakage of the reduced W-PIR$^{\#}$ code are given in the following proposition. The proof is relatively straightforward, and we omit it here for brevity.

\begin{prop}\label{prop:DL}
    The reduced W-PIR$^\#$ scheme induces the download cost and Max-L pair $\left(\rho_{(\text{Max-L})}, D\right)$ given by 
    \begin{align}
        D &=\frac{N-(Np_\#+Np_0)}{N-1},  \\
        \rho_{(\text{Max-L})} &=\sum_{n=1}^N \gamma_n2^{\Lc(M \rightarrow Q^{[M]}_n)}\notag \\
        &=\sum_{n=1}^N{\gamma}_{n}\bigg{(}\sum_{j=1}^K t_j\max\{p_{j-1},p_j\}+p_0 +(N+K-1)p_{\#}\bigg{)},    \label{reduced-gTSC-max-leakage} 
    \end{align}
    for $p_\#\in\left[0,1/N\right]$. 
\end{prop}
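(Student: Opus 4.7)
The plan is to prove the two assertions of Proposition \ref{prop:DL} separately, exploiting the symmetry induced by restricting the permutations to cyclic shifts and by assigning probabilities based only on the Hamming weight of $F$. For the download cost, I would partition the realizations of $F^*$ into three disjoint events: (i) $F^*\in[1:N]$, where the user retrieves the full message of length $L=N-1$ directly from one server, contributing normalized cost $1$ with aggregate probability $Np_\#$; (ii) $F^*=(F,\pi)$ with cyclic $\pi$ and $\|F\|=0$, in which case the interference $\mathscr{I}$ is all dummy and the user downloads one symbol from each of $N-1$ servers, again contributing normalized cost $1$ with aggregate probability $Np_0$ (the unique all-zero $F$ paired with $N$ cyclic permutations); and (iii) $F^*=(F,\pi)$ with cyclic $\pi$ and $\|F\|\ge 1$, where all $N$ servers each return one symbol and the normalized cost is $N/(N-1)$, with remaining probability $1-Np_\#-Np_0$. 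Combining these contributions yields $D=\frac{N-Np_\#-Np_0}{N-1}$; since $\Pb_k(F^*)$ does not depend on $k$ in the reduced assignment, the outer $\max_k$ in the definition of $D$ is trivial.

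For the maximal leakage, the key observation is that the reduced scheme is invariant under cyclic relabeling of the servers: the orbit of cyclic shifts combined with weight-based probabilities makes the induced query distribution at every server identical up to relabeling. Hence all $N$ values $\Lc(M\rightarrow Q_n^{[M]})$ coincide, and it suffices to evaluate $2^{\Lc}=\sum_{q\in\Qc_n}\max_k \Pb(Q_n^{[k]}=q)$ at a single server $n$. I would then enumerate the queries that can arrive at server $n$. The special query $\#_k$ appears only when requesting message $k$ and only when $F^*=n$, contributing $p_\#$; summing over $k\in[1:K]$ gives $Kp_\#$. The all-zero query $\underline{0_K}$ is produced either from $F^*\in[1:N]\setminus\{n\}$, contributing $(N-1)p_\#$ independent of $k$, or from the TSC realization with $F=\underline{0_{K-1}}$ and the unique cyclic $\pi$ satisfying $\pi(n)=0$, contributing $p_0$; its maximum over $k$ is therefore $(N-1)p_\#+p_0$. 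For any nonzero query $q$ of weight $j$, exactly one pair $(F,\pi)$ with cyclic $\pi$ produces $q$ at server $n$ when requesting $k$, via $F=(q_1,\dots,q_{k-1},q_{k+1},\dots,q_K)$ and $\pi(n)=(q_k+\sum_i F_i)_N$; this forces $\|F\|=j$ when $q_k=0$ and $\|F\|=j-1$ when $q_k\ne 0$, so $\max_k\Pb(Q_n^{[k]}=q)=\max\{p_{j-1},p_j\}$. Summing over all queries using $t_j$ to count those of weight $j$, together with the convention $p_K=0$ that handles the $j=K$ case, reassembles the bracketed expression in (\ref{reduced-gTSC-max-leakage}); weighting by $\gamma_n$ and summing over $n$ then finishes the proof.

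The main obstacle is the enumeration step above: one must carefully bookkeep, for each query $q$ and each candidate message index $k$, which $(F,\pi)$ with cyclic $\pi$ produces $q$ at server $n$ and with what probability. The fact that a cyclic $\pi$ is uniquely determined by its value at any single point, combined with the explicit inversion $F=(q_1,\dots,q_{k-1},q_{k+1},\dots,q_K)$, makes this bijection transparent but notationally dense. Once it is in place, the cyclic symmetry collapses the per-server sums into a single expression multiplied by $\sum_n\gamma_n$, and the computation follows by summing the $\#$-query, all-zero-query, and weight-$j$ contributions and organizing them according to the combinatorial coefficients $t_j$.
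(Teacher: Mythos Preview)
Your proposal is correct. The paper omits the proof of this proposition entirely, declaring it ``relatively straightforward,'' but your argument---partitioning by retrieval pattern for the download cost, exploiting the cyclic symmetry to reduce to a single server, and inverting $q\mapsto(F,\pi)$ via $F=q|k$ together with the unique cyclic $\pi$ pinned down by $\pi(n)$---is exactly the natural one and matches the bookkeeping the paper itself carries out later in Appendix~\ref{sec:proveProp2} when proving Proposition~\ref{prop:P1P2}.
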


\subsection{Homogeneous Trustfulness: Reduced W-PIR$^\#$ is Optimal}

Let us consider the homogeneous case under the Max-L metric where $\gamma_1=\gamma_2=\ldots=\gamma_N=\gamma$, which we shall refer to as problem $P1$: 
\begin{mini}[2]
 {p^{k,n}_{(\#)}, p^{k,\pi}_{(f)}}{\rho_{(\text{Max-L})}=\sum_{n=1}^N \gamma_{n}2^{\Lc(M \rightarrow Q^{[M]}_n)}}{}{}
  \addConstraint{p^{k,n}_{(\#)}}{\geq 0,\quad\forall k,n}{}
  \addConstraint{p^{k,\pi}_{(f)}}{\geq 0,\quad\forall k,\pi,f}{}
  \addConstraint{\sum^{N}_{n=1} p^{k,n}_{(\#)}+\sum_{f}\sum_{\pi}p^{k,\pi}_{(f)}}{=1,\quad\forall k}{}
  \addConstraint{p^k_d + \frac{N}{N-1}(1-p^k_d)}{\leq D,\quad\forall k.}{}
\end{mini}
Recall that $p^{k,n}_{(\#)}$ is the probability of requesting the server-$n$ only for the entire $k^{th}$ message, and $p^{k,\pi}_{(f)}$ is the probability of querying for the $k^{th}$ message under the random key $(f,\pi)$. For simplicity, we will write $p^{k,\pi}_{(\underline{0_{K-1}})}$ as $p^{k,\pi}_{(0)}$ in the sequel.

We first show that the optimal value of the optimization problem $({P1})$ above, which is achieved under the optimal probability distribution in W-PIR$^\#$ code, is the same as the optimal value of the optimization problem $({P2})$ below, which is achieved by the optimal probability allocation for the reduced W-PIR$^\#$:
\begin{mini}[2]
  {p_\#,p_0, p_1,\ldots,p_{K-1}}{\begin{array}{l}\sum_{j=1}^K t_j\max\{p_{j-1},p_j\}\\+(p_0+(N-1)p_{\#})+Kp_{\#}\end{array}}{}{}
  \addConstraint{p_\#,p_0, p_1,\ldots,p_{K-1} }{\geq 0}{}
  \addConstraint{N p_{\#}+\sum_{j=0}^{K-1}Ns_j p_j}{=1}{}
  \addConstraint{\frac{N-(Np_\#+Np_0)}{N-1}}{\leq D.}{}
\end{mini}

For simplicity of notation, we have taken the convention that $p_{-1}=p_K=0$. The following proposition establishes the optimality of the reduced W-PIR$^\#$ code.
\begin{prop}\label{prop:P1P2}
In the homogeneous setting, $(P1)=(P2)$. 
\end{prop}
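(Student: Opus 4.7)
The plan is to prove $(P1)=(P2)$ by establishing both inequalities separately. The direction $(P1)\leq (P2)$ is immediate: any feasible point of $(P2)$ defines a feasible point of $(P1)$ by zeroing out every $p^{k,\pi}_{(f)}$ and $p^{k,n}_{(\#)}$ that does not conform to the structure in \eqref{eqn:reduced}, so the infimum over the larger feasible set of $(P1)$ is at most that over the smaller feasible set of $(P2)$.

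For the harder direction $(P2)\leq (P1)$, I would start from an arbitrary feasible allocation for $(P1)$ and transform it into a feasible allocation for $(P2)$ of no larger objective in two steps. First, in the homogeneous setting $\gamma_n=1/N$, so both the objective and every constraint are invariant under a server relabeling $\sigma\in S_N$ acting by $p^{k,n}_{(\#)}\mapsto p^{k,\sigma(n)}_{(\#)}$ and $p^{k,\pi}_{(f)}\mapsto p^{k,\sigma\circ\pi}_{(f)}$, under an $S_K$-action that relabels messages (permuting $k$ together with a corresponding reindexing of the $K-1$ entries of $F$), and under the $S_{K-1}$-action permuting the $K-1$ components of $F$. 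Because the feasible set is convex and the Max-L objective is convex in the allocation (a sum of $\sum_q \max_k$ of linear functions), averaging a feasible allocation over the full symmetry group yields a feasible allocation of no larger objective that satisfies $p^{k,n}_{(\#)}=p_\#$ (independent of $k,n$) and $p^{k,\pi}_{(f)}=q_{\|f\|}$ (depending only on the Hamming weight of $f$, uniformly in $k$ and $\pi$).

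Second, the cyclic-concentration step uses the observation that $\phi^*_n(k,(F,\pi))$ depends on $\pi$ only through the single value $\pi(n)$; each value in $[0:N-1]$ is realized by exactly $(N-1)!$ of the $N!$ permutations in $\cP$, and by exactly one of the $N$ cyclic permutations. Given the symmetrized allocation, define a reduced-code allocation by keeping the same $p_\#$ and setting $p_j:=(N-1)!\,q_j$ for $j\in[0:K-1]$, supported only on cyclic permutations. Then the normalization $Np_\# + N\sum_j s_j p_j = Np_\# + N!\sum_j s_j q_j = 1$ is preserved; the download-cost quantity $p_d=Np_\#+Np_0=Np_\#+N!q_0$ is preserved; and the marginal distribution of $Q_n^{[k]}$ at every server is unchanged, so the Max-L leakage is unchanged. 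This provides a feasible point of $(P2)$ with the same objective value as the original feasible point of $(P1)$, proving $(P2)\leq (P1)$.

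The main obstacle is setting up the symmetrization cleanly, particularly the $S_K$-action, because the random key vector $F\in[0:N-1]^{K-1}$ has its $K-1$ entries indexed by the non-requested messages and therefore depends structurally on $k$; defining the induced action on allocations and verifying invariance of the Max-L objective and the download-cost constraint requires careful bookkeeping of how $\phi^*_n$ and $\varphi^*$ transform. Once this invariance is in hand, the averaging and the cyclic-concentration calculation are routine.
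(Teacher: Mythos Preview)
Your symmetrization step has a genuine gap. First, the claimed $S_{K-1}$-action permuting the $K-1$ components of $F$ is \emph{not} a symmetry of the Max-L objective: under $\tau\in S_{K-1}$ the query $\phi^*_n(k,(F,\pi))$ transforms by a permutation of the $K$ query coordinates that fixes position $k$ and therefore depends on $k$, so $\sum_q\max_k \Pb(Q_n^{[k]}=q)$ need not be preserved. Concretely, for $K=3$, $N=2$, fix one $\pi$ and take $p^{1,\pi}_{(0,1)}=p^{2,\pi}_{(0,1)}=p^{3,\pi}_{(0,0)}=1$: the homogeneous objective is $\tfrac12(3+1)=2$, whereas after swapping $F_1\leftrightarrow F_2$ it becomes $\tfrac12(2+3)=2.5$, so averaging over this ``symmetry'' strictly increases the objective. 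Second, and more fundamentally, even the valid symmetries $S_N\times S_K$ only force $p^{k,\pi}_{(f)}$ to depend on the \emph{multiset} of entries of $f$, not on $\|f\|$: for $N\ge 3$ no element of your group sends $f=(1,0,\ldots,0)$ to $f=(2,0,\ldots,0)$. Hence the averaged allocation need not land in the feasible set of $(P2)$, and the bridge fails.

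The paper's proof avoids symmetrization and instead works directly with the two objective functions. Given any feasible $(P1)$-allocation, it \emph{defines} $p_j:=\frac{1}{NKs_j}\sum_k\sum_\pi\sum_{\|f\|=j}p^{k,\pi}_{(f)}$ and shows, via a chain of inequalities using convexity of $\max$ and combinatorial counting, that the $(P2)$-objective at these averages is no larger than the original $(P1)$-objective. The step that replaces ``multiset'' by ``Hamming weight'' is the bound for $\|q\|=j$: one splits $\max_k$ according to whether $\|q|k\|=j-1$ (exactly $j$ values of $k$) or $\|q|k\|=j$ (the remaining $K-j$), lower-bounds each branch by its average over those $k$, and only then pulls the $\max$ outside the sum over $q$. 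Your cyclic-concentration step is fine, but the first step cannot be rescued by group averaging alone; it needs this more hands-on convexity argument.
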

The proof is given in Appendix \ref{sec:proveProp2} by carefully constructing a sequence of inequalities based mainly on the convexity of the maximum function. The optimality of the probabilistic sharing solution of the reduced W-PIR$^\#$ code in the homogeneous setting is established in Theorem \ref{theorem:homo}.

\begin{theorem}
\label{theorem:homo}
With download cost $D\in [1,D^*]$ and homogeneous trustfulness $\gamma$, the optimal surrogate leakage under the Max-L metric is given as
\begin{align}
&\hat{\rho}_{(\text{Max-L})}^*(D) =\gamma \sum_{n=1}^N 2^{\Lc(M \rightarrow Q^{[M]}_n)}\notag\\
&=N\gamma \left( 1+\frac{(K-1)\left[ N^{K-1}(N-(N-1)D)-1\right]  }{N^{K}-N} \right),
\end{align}
which is achieved using the allocation in Theorem \ref{theorem:allocations}.
\end{theorem}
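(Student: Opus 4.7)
The plan is to invoke Proposition \ref{prop:P1P2} to replace the homogeneous version of $(P1)$ by the reduced problem $(P2)$, then solve $(P2)$ in closed form by exhibiting a matching primal-dual pair.

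First I would verify feasibility of, and compute the objective attained by, the symmetric candidate coming from Theorem \ref{theorem:allocations}. In the reduced parameterization this candidate is $p_{\#}=\hat{p}_{\#}/N$ and $p_0=p_1=\cdots=p_{K-1}=(1-\hat{p}_{\#})/N^{K}$. Using the binomial identities $\sum_{j=0}^{K-1}s_j=N^{K-1}$ and $\sum_{j=1}^{K}t_j=N^{K}-1$, a routine substitution confirms the normalization and makes the download constraint $p_{\#}+p_0=(N-(N-1)D)/N$ tight for every $D\in[1,D^{*}]$. Because all the $p_j$'s coincide, every $\max\{p_{j-1},p_j\}$ collapses to the common value $p$, so the objective of $(P2)$ telescopes to $pN^{K}+(N+K-1)p_{\#}=(1-Np_{\#})+(N+K-1)p_{\#}=1+(K-1)p_{\#}$. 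Plugging in $p_{\#}=\hat{p}_{\#}/N$ and multiplying by the homogeneous factor $N\gamma=\sum_n\gamma_n$ produces exactly the closed form stated in the theorem.

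Next I would prove this candidate is actually optimal. Problem $(P2)$ is convex (a sum of max-of-affine functions plus linear terms, with linear constraints), so it suffices to exhibit a dual certificate. I would take the multiplier for the download inequality to be $\nu=(K-1)N^{K-1}/(N^{K-1}-1)$, which arises as the sensitivity of the candidate objective with respect to the right-hand side $(N-(N-1)D)/N$ from Step 1, and the companion $\mu$ fixed by stationarity in $p_{\#}$. I would then apply the bound $\max\{p_{j-1},p_j\}\ge \alpha_j p_{j-1}+(1-\alpha_j)p_j$ with $\alpha_j\in[0,1]$ chosen to cancel every $p$-coefficient of the resulting relaxed Lagrangian. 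The $\alpha_j$'s are pinned down by $\alpha_1=(N+K-2)/[K(N-1)]$ and the forward recursion $\alpha_{i+1}t_{i+1}+(1-\alpha_i)t_i=(N+K-1-\nu)s_i$, which closes consistently with the required $\alpha_K=1$ at the $t_K p_{K-1}$ boundary term. Once the $p$-coefficients vanish, the bound reduces to the constant $1+(K-1)\hat{p}_{\#}/N$, matching the primal value; weak duality then forces equality.

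The hard part is the arithmetic of showing $\alpha_j\in[0,1]$ for every index, which is what makes the bound $\max\{a,b\}\ge \alpha a+(1-\alpha)b$ legitimate. At the endpoints this is tractable: $\alpha_1\le 1$ reduces to $N\ge 2$, and $\alpha_{K-1}=\nu/K\le 1$ reduces to the elementary $N^{K-1}\ge K$; the intermediate indices require a mild monotonicity check on the recursion using the identities $t_i/s_i=K/(K-i)$ and $t_{i+1}/t_i=(K-i)(N-1)/(i+1)$. An equivalent route I would keep in reserve is a direct KKT subdifferential verification at the symmetric primal point, which recasts the range conditions as $\sigma_i=-\mu Ns_i\in[0,t_i+t_{i+1}]$ and $\sigma_{K-1}-t_K\in[0,t_{K-1}]$ at the non-smooth $\max$ terms. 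Either way, combining the matching primal and dual values and then rescaling by $N\gamma$ yields the claimed formula.
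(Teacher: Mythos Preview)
Your proposal is correct and follows essentially the same route as the paper: reduce to $(P2)$ via Proposition~\ref{prop:P1P2}, plug in the symmetric candidate, and certify optimality by LP duality. Your weights $\alpha_j$ are precisely the paper's normalized dual variables $\mu_j/t_j$ (with $1-\alpha_j=\lambda_j/t_j$) from its linearized reformulation $(P2')$, your $\nu$ is the paper's $\lambda$, and your recursion $\alpha_{i+1}t_{i+1}+(1-\alpha_i)t_i=(N+K-1-\nu)s_i$ is exactly the stationarity condition \eqref{eqn:dpj}; the ``mild monotonicity check'' you flag for $\alpha_j\in[0,1]$ is what the paper isolates and proves as Lemma~\ref{lem:mono}.
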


This theorem is proved in Appendix \ref{sec:proveTh2} by analyzing the KKT conditions \cite{BoydBook} of the given convex optimization problem and constructing primal and dual solutions satisfying these conditions.

\begin{figure}[tb]
    \centering
    \includegraphics[width=\linewidth]{./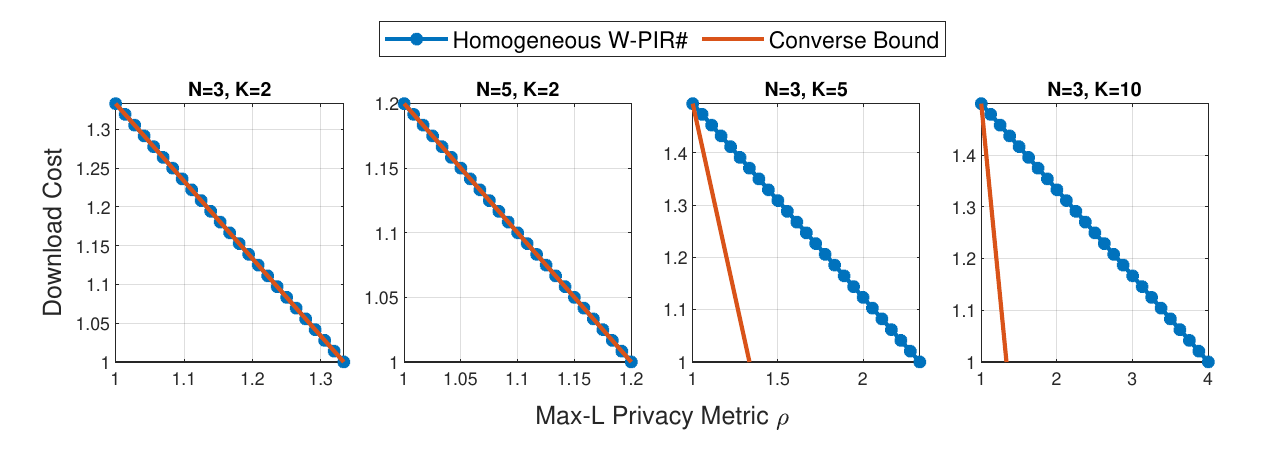}
    \caption{W-PIR$^{\#}$ v.s. Converse Bound in \cite{Lin2021} for the Max-L metric.}
    \label{fig:converse}
\end{figure}

\begin{remark}
(Connections to the reduced W-PIR$^{\#}$ in \cite{qian2022improved}). We first remark that Theorem \ref{theorem:homo} is a special case of the heterogeneous setting, when we set $\gamma_1=\ldots=\gamma_n=\gamma$. The optimal probability allocation is achieved using the same strategy as W-PIR$^{\#}$. Without loss of generality, when $\gamma=1/N$, the optimal pair $(\rho, D)$ is equivalent to the reduced W-PIR$^{\#}$ proposed in \cite{qian2022improved}. However, if $\gamma\neq 1/N$, there exists a multiplicative ratio between $\hat{\rho}_{(\text{Max-L})}^*$ and the optimal $\rho$ in the reduced W-PIR$^{\#}$.
\end{remark}

\begin{remark}
(Connections to the converse bound in \cite{Lin2021}). Consider an $(N,K)$ W-PIR scheme with download cost $D\in [1,D^*]$. Under the assumption that the query size does not grow unbounded and the downloaded answer size per server is lower than or equal to the entire retrieved file size, the converse result for the Max-L metric in \cite{Lin2021} can be written as
\begin{align}
    \rho_{(\text{Max-L})} \geq 1 + \frac{1}{N} - \frac{(D-1)(1-N^{-1})}{1-N^{-(K-1)}}.
\end{align}
The optimal $\hat{\rho}_{(\text{Max-L})}^*$ in Theorem \ref{theorem:homo} in fact matches this outer bound when $K=2$ for any number of servers\footnote{We are grateful to one of the reviewers for bringing this fact to our attention.}. However, for $K>2$, there exists a discrepancy between W-PIR$^{\#}$ and the converse result, which is shown in Fig. \ref{fig:converse}.
\end{remark}

\subsection{Heterogeneous Trustfulness: Proof of Theorem \ref{theorem:allocations}}
\label{sec:proveTh1}

We are now ready to prove Theorem \ref{theorem:allocations}. 

\begin{proof}
 Recall that the loss function, i.e., the objective function, in the \emph{heterogeneous} Max-L setting is 
\begin{align}
\rho_{(\text{Max-L})}= \sum_{n=1}^N \gamma_n2^{\Lc(M \rightarrow Q^{[M]}_n)}.
\end{align}
We can alternatively consider an equivalent loss function $\rho_o$ defined as
\begin{align}
\rho_o= \sum_{n=1}^N \gamma_n\left(2^{\Lc(M \rightarrow Q^{[M]}_n)}-1\right).
\end{align}
We shall denote the optimal value under download cost constraint $D$ as $\rho^*_{(\text{Max-L})}(D)$ for the loss function $\rho_{(\text{Max-L})}$, and similarly for other loss functions in the sequel. 
It is clear that the optimal value ${\rho}^*_{(\text{Max-L})}(D)$ and the optimal value $\rho_o^*(D)$ are related as 
\begin{align}
\rho^*_o(D)= \rho^*_{(\text{Max-L})}(D)-\sum_{n=1}^N \gamma_n.
\end{align}

Next consider a \emph{homogeneous} setting, with the same download cost constraint $D$ and the corresponding surrogate loss function 
\begin{align}
\hat{\rho}_{(\text{Max-L})}= \gamma_1\sum_{n=1}^N 2^{\Lc(M \rightarrow Q^{[M]}_n)},
\end{align}
as well as the corresponding equivalent loss function $\hat{\rho}_o$
\begin{align}
\hat{\rho}_o= \gamma_1\sum_{n=1}^N (2^{\Lc(M \rightarrow Q^{[M]}_n)}-1).
\end{align}
In a similar manner, the optimal $\hat{\rho}^*_{(\text{Max-L})}(D)$ and the optimal $\hat{\rho}_o^*(D)$ are related as
\begin{align}
\hat{\rho}^*_o(D)= \hat{\rho}^*_{(\text{Max-L})}(D)- N\gamma_1.
\end{align}

It is clear that the optimal value of the homogeneous setting $\hat{\rho}^*_o(D)$ is less than or equal to the optimal value of the heterogeneous setting $\rho^*_o(D)$, i.e. $\hat{\rho}^*_o(D)\leq \rho^*_o(D)$, because $\gamma_1\leq \gamma_2\leq\ldots\leq\gamma_N$ and $2^{\Lc(M \rightarrow Q^{[M]}_n)}\geq1$ for any $n$ due to the non-negativity of the maximal leakage metric. Since under this new surrogate loss function the problem is homogeneous, Theorem \ref{theorem:homo} implies that 
\begin{align}
&\hat{\rho}_o^*(D)=\hat{\rho}^*_{(\text{Max-L})}(D)-N\gamma_1\notag\\
=&N\gamma_1 \left(\frac{(K-1)\left[ N^{K-1}(N-(N-1)D)-1\right]  }{N^{K}-N}\right),
\end{align}
which is therefore a lower bound for $\rho^*_o(D)$. It follows that
\begin{align}
&\rho^*_{(\text{Max-L})}(D)=\rho_o^*(D)+\sum_{n=1}^N \gamma_n\geq \hat{\rho}_o^*(D)+\sum_{n=1}^N \gamma_n\notag\\
=&\gamma_1 (K-1)\frac{N^{K-1}(N-(N-1)D)-1 }{N^{K-1}-1}+\sum_{n=1}^N \gamma_n.
\end{align}
However, this lower bound is indeed achieved by the probability distribution assignment in Theorem \ref{theorem:allocations} by assigning 
\begin{align}
\hat{p}_{\#} = \frac{N^K(1-D+D/N)-1}{N^{K-1}-1}.
\end{align}
The proof is thus complete.
\end{proof}

\subsection{Heterogeneous Trustfulness: Numerical Results}
The numerical results shown in Fig. \ref{fig:MaxL} compare W-PIR$^\#$ with TSC (without $p_{\#})$, reduced W-PIR$^\#$ and numerically optimized W-PIR$^\#$. Our newly proposed code (referred to as W-PIR$^\#$) establishes a new benchmark point of minimum download cost and privacy leakage trade-offs, which is validated by numerically solving W-PIR$^\#$ with convex programming tools. The numerical results in both homogeneous and heterogeneous settings corroborate the result established in Theorem \ref{theorem:allocations}: the optimal $(\rho, D)$ trade-offs under the Max-L metric can be achieved by employing a probabilistic sharing of direct download from the most trustworthy server and the original TSC strategy without any permutation. The balance between the fully public segment to the most trustworthy server and the completely private segment is crucial, as it precisely dictates the level of privacy leakage involved under this trade-off. This is logical, as using the most trusted server minimizes leakage, suggesting that retrieving the entire message from this server is a viable strategy. We further note that the probability allocation of W-PIR$^{\#}$ in the heterogeneous setting is equally optimal in the homogeneous setting. Although the reduced W-PIR$^{\#}$ is as good as the W-PIR$^{\#}$ in the homogeneous setting, it is no longer optimal in the heterogeneous setting. As the number of servers and messages increases, W-PIR$^{\#}$ achieves the most favorable $(\rho, D)$ trade-offs compared to other schemes with an increasing divergence.

\begin{figure*}[ht!]
    \centering
    \includegraphics[width=\linewidth]{./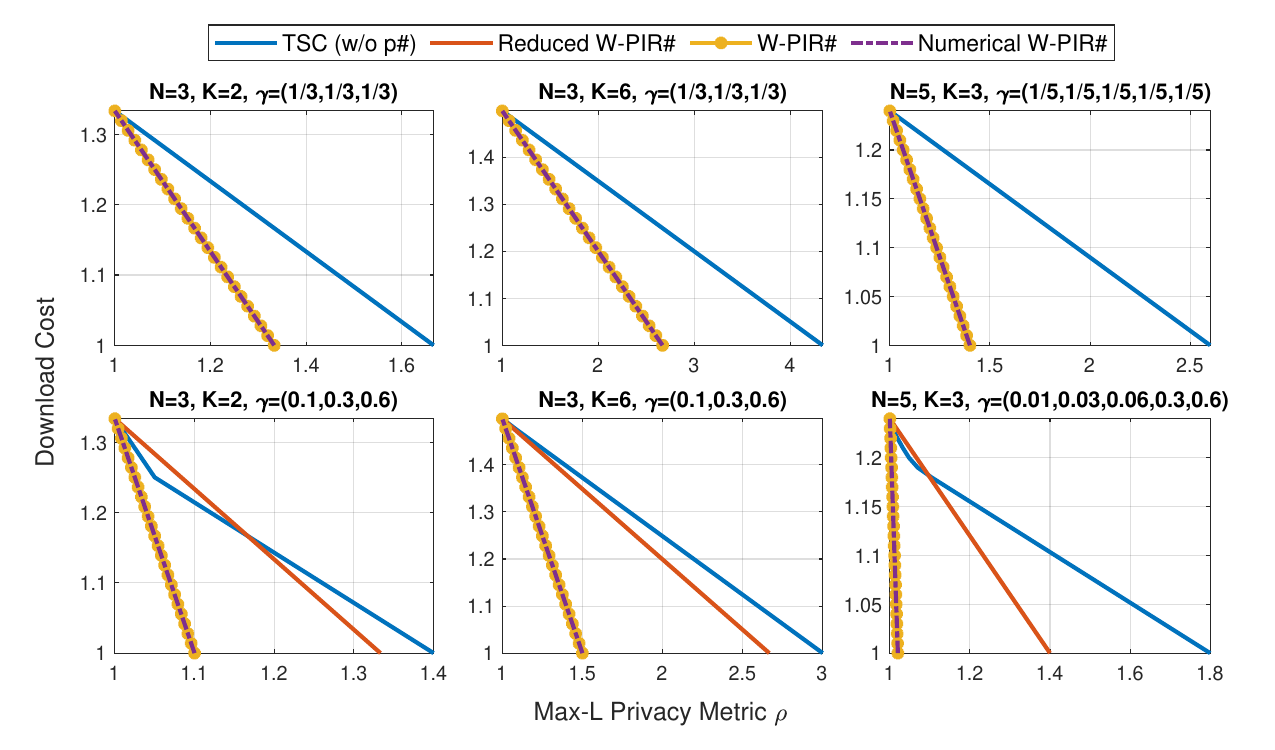}   
    \caption{Numerical comparisons between the proposed code and existing codes under the Max-L metric.}
    \label{fig:MaxL}
\end{figure*}

\vspace{-0.2cm}
\section{WPIR: The Mutual Information Setting} 
\label{sec:MI}

\subsection{Main Result}

Under the MI metric, the analytical probability allocation for W-PIR$^{\#}$ in a \emph{heterogeneous} setting is quite sophisticated. Therefore, we first present the main result with \emph{homogeneous} trustfulness of the servers under the MI metric in Theorem \ref{theorem-MI-mainresult}. 

\begin{theorem}
\label{theorem-MI-mainresult}
An optimal probability allocation for homogeneous W-PIR$^{\#}$ under the MI metric is given by
\begin{align*}
    &p^{k,1}_{(\#) }=Np_{\#}=N\left( 1-p_{0}\right) -\left( N-1\right)D, \quad k\in [1:K],\\
    &p^{k,\pi^{\ast} }_{(0)}=p_0=\left( N-1\right)  (D-1)\left( \sum^{K-1}_{j=1} Ns_{j}\prod^{j}_{i=1} \frac{1}{x_{i}} \right)^{-1},\\
    &p^{k,\pi^{\ast} }_{(f)}=p_j=p_{0}\prod^{j}_{i=1} \frac{1}{x_{i}},\quad
    (f,\pi )\in \left\{ \left( f,\pi^{\ast } \right) : \| f\| =j>0\right\},
\end{align*}
where $\pi^* \in\{ \pi: \pi(n+1)= \left( \pi(n)+1 \right)_N, \forall n\}$, i.e., $\pi^*$ is cyclic, and other $p^{k,n}_{(\#)}$ and $p^{k,\pi}_{(f)}$ are assigned value zero. $\boldsymbol{x}=(x_{1},x_{2},\ldots ,x_{K-1})$ are defined by the following sequence with $x_{1}=\frac{K-1}{K^{\frac{N-2}{N-1} }-1}$:
\begin{align}
    &\log \frac{(K-j)x_{K-j}+j}{K}=\sum^{j-1}_{i=0} (1-N)^{i}\log \frac{(K-1)x_{K-1}+1}{K} \notag\\
    &\qquad\qquad\qquad\qquad\qquad\qquad-\sum^{j-1}_{i=1} (1-N)^{i}\log x_{K-j+i}. \label{theorem-MI-xj}
\end{align}
As a consequence, with download cost $D\in \left[ 1,D^{\ast }\right]$, the optimal surrogate leakage for homogeneous W-PIR$^{\#}$ under the MI metric is 
\begin{align}
  &  \rho^{\ast }_{(\text{MI})} (D) = I(\boldsymbol{p})  := p_{\# }\log K + \frac{1}{K} \sum^{K}_{j=1} \begin{pmatrix}K\\ j\end{pmatrix} (N-1)^{j}\bigg{\{} jp_{j-1}\log p_{j-1} +(K-j)p_{j}\log p_{j} \notag \\
    &\qquad\qquad\qquad\qquad\qquad\qquad-\left[ jp_{j-1}+(K-j)p_{j}\right]\log \frac{jp_{j-1}+(K-j)p_{j}}{K}\bigg{\}}.
\end{align}
\end{theorem}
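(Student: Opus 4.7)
The plan is to mirror the structure of the Max-L analysis: first reduce to the reduced W-PIR$^\#$ parameterization of Section \ref{section-reduced-gTSC}, with one further simplification placing all direct-download mass on a single server, and then solve the resulting finite-dimensional convex program via KKT conditions. The two steps differ substantively from the Max-L case because the objective is now a smooth sum of entropy terms rather than a piecewise-linear max.

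For the reduction, I would exploit the convexity of $I(M;Q_n)$ in $P_{Q_n|M}$ (at the fixed uniform $P_M$) together with the cyclic-group symmetry acting jointly on $(\pi,F)$ for the non-direct portion of the code. Averaging any feasible assignment over this group preserves both normalization and the download cost, while convexity of MI guarantees that $\sum_n I(M;Q_n)$ does not increase. Hence, without loss of optimality, $p^{k,\pi}_{(f)}$ may be restricted to cyclic $\pi$ and to depend on $f$ only through $\|f\|$, matching the reduced form of \eqref{eqn:reduced}. A second symmetrization, over the permutations of server labels applied to the direct-download weights, together with the homogeneity of the objective in $n$, lets one collapse the entire $\#$-mass $Np_\#$ onto a single server (labelled $1$). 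This yields precisely the parameterization $(p_\#,p_0,\ldots,p_{K-1})$ of the theorem.

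The next step is to write $\rho_{(\text{MI})}$ explicitly in these variables. For each $F$ with $\|F\|=j$, exactly one server (determined by the cyclic shift $(\pi^{*}(n)-\sum_i F_i)_N=0$) sees a query of Hamming weight $j$, while the remaining $N-1$ servers see queries of Hamming weight $j+1$. Aggregating over $k$ and $F$ using the counts $s_j$ and $t_j$ from \eqref{coefficient-s-j} and \eqref{coefficient-t-j}, and adding the server-$1$ direct-block contribution $Np_\#\log K$, one arrives after simplification at the closed-form objective $I(\boldsymbol{p})$ displayed in the theorem; each bracketed term is a Jensen gap across one Hamming-weight class.

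The final step is to solve the convex program $\min I(\boldsymbol{p})$ subject to $Np_\#+\sum_j Ns_jp_j=1$ and the download constraint $N-(Np_\#+Np_0)\leq(N-1)D$, which is active at the optimum. The stationarity conditions $\partial I/\partial p_j=\lambda Ns_j$ for $j\geq 1$ are log-linear in the ratios $p_j/p_{j-1}$; introducing the ratio variables $x_j$ (so that $p_j=p_0\prod_{i=1}^{j}x_i^{-1}$) converts them into a coupled telescoping system whose iteration from $j=K-1$ downward yields the explicit recursion \eqref{theorem-MI-xj}. The stationarity condition for $p_\#$ combined with the active download constraint then pins down the initial value $x_1=(K-1)/(K^{(N-2)/(N-1)}-1)$. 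The main obstacle is this last algebraic step: untangling the coupled log-linear relations into the stated closed form involving $(1-N)^i$, and verifying non-negativity of the resulting $p_j$ throughout $D\in[1,D^*]$. Convexity of the MI objective then guarantees that the constructed KKT point is globally optimal, completing the proof.
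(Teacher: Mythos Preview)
Your three-step plan --- reduce to the reduced W-PIR$^\#$ parameters $(p_\#,p_0,\ldots,p_{K-1})$, write out $I(\boldsymbol p)$, then solve the convex program via KKT with the ratio variables $x_j=p_{j-1}/p_j$ --- is exactly the paper's architecture, and your description of steps (ii) and (iii) matches the proof in Section~\ref{sec:proof-MI-mainresult} essentially line by line (the paper also introduces $y_j=\log\tfrac{jx_j+K-j}{K}$ and unwinds the stationarity conditions from $j=K-1$ downward). One small correction: the value of $x_1$ is pinned down by equating the stationarity conditions for $p_\#$ and for $p_0$ (both carry the download-constraint multiplier $\mu$), not by the download constraint itself.

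Where you diverge is step (i). The paper does \emph{not} symmetrize: in Appendix~\ref{proof:P3P4} it takes an arbitrary feasible point of $(P3)$, defines $p_\#,p_j$ as the explicit averages \eqref{def-p-sharp}--\eqref{def-p-j}, and then bounds the MI objective below by $I(\boldsymbol p)$ term by term using the log-sum inequality (Lemma~\ref{lemma-convex-function-MI}) and the entropy-maximizing property of uniform vectors (Lemma~\ref{lemma-concave-entropy}). Your symmetrization route is legitimate in spirit (log-sum \emph{is} the convexity you invoke), but the description has loose ends. The phrase ``cyclic-group symmetry acting jointly on $(\pi,F)$'' needs clarification: the natural action is cyclic server shifts, which act on $\pi$ and on the $\#$-index but not on $F$; averaging over that group yields server-shift invariance rather than the cyclic-$\pi$-only form of \eqref{eqn:reduced}, so you still need the observation that the marginals $P(Q_n\mid M{=}k)$ depend on $\pi$ only through $\pi(n)$ to pass between the two. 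Reducing to dependence on $\|f\|$ alone requires a separate averaging over message-label permutations, whose induced action on the components of $F$ is nontrivial and must be spelled out. And saying a symmetrization ``collapses'' the $\#$-mass onto one server is backwards --- averaging spreads. What is actually true (and what the theorem exploits) is that in the homogeneous case spreading and concentrating the $\#$-mass give the same value of $\sum_n I(M;Q_n)$, so the concentrated form is merely one optimal representative.

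Finally, the paper does not simply assert that the download constraint is active: Lemma~\ref{lemma-MI-achieve-download} proves it by explicitly rescaling any feasible $\boldsymbol p$ with slack and showing the rescaled point has strictly smaller $I(\boldsymbol p)$. Your proposal should include this step.
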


Theorem \ref{theorem-MI-mainresult} implies that under the MI metric, we can achieve the optimal $\mathcal{G}_{\text{MI}}$ for \emph{homogeneous} W-PIR$^{\#}$ using the similar reduction in the Max-L setting: a public part (to the most trusted server) and a private part. Note that in homogeneous cases, our proposed probability allocation is equivalent to the reduced symmetric code proposed in \cite{qian2022improved}, where only cyclic permutations are required instead of the entire set of permutations. This allocation strategy has the property of time sharing between the public and the private parts, as shown in Proposition \ref{prop:MItimesharing}. We also compared $\rho_{(\text{MI} )}^{\ast}$ with the outer bound in \cite{Lin2021}, which is shown in Fig. \ref{fig:converse-MI}.

\begin{figure}[tb]
    \centering
    \includegraphics[width=0.9\linewidth]{./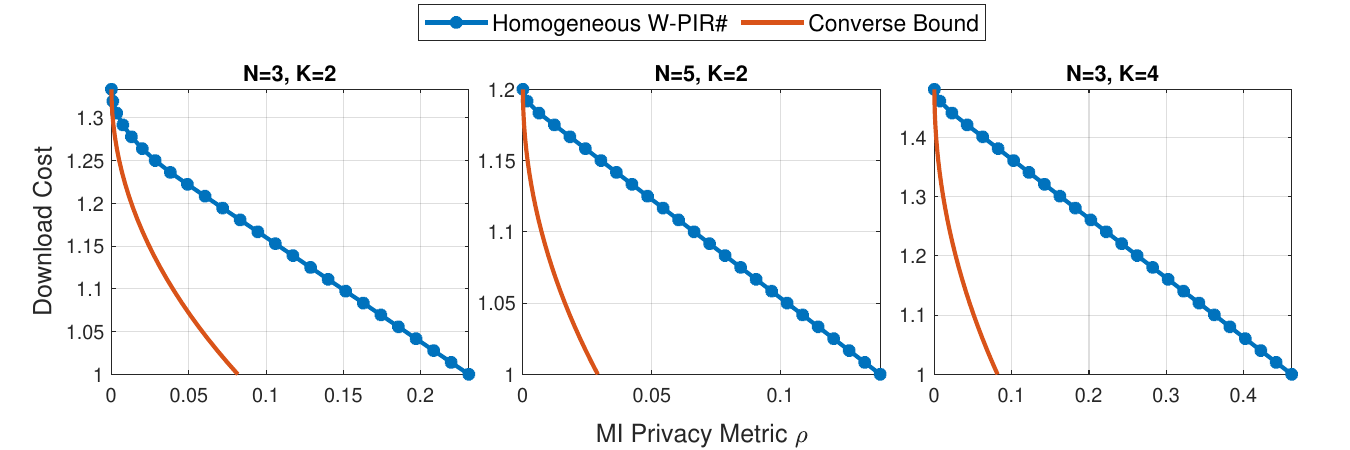}
    \caption{W-PIR$^{\#}$ v.s. Converse Bound in \cite{Lin2021} for the MI metric.}
    \label{fig:converse-MI}
\end{figure}

The construction of probability allocation under the MI metric given above is much more sophisticated than that under the Max-L metric. The auxiliary sequence $\boldsymbol{x}$ is well defined by the above recursive equations when $x_1$ is specified. It is evident that $\boldsymbol{p}=(p_{\#},p_0,p_1,\ldots,p_{K-1})$ generates a probability distribution given to the relevant combinatorial coefficients.

We shall show later that the optimality of this structure is broken down in the \emph{heterogeneous} cases, where a \emph{closed-form} solution becomes intractable. We also present numerical results to illustrate the discrepancy between our proposed homo W-PIR$^{\#}$ allocation and the numerically optimized region $\mathcal{G}_{\text{MI}}$ in a \emph{heterogeneous} setting.

To prove Theorem \ref{theorem-MI-mainresult}, we first establish the reduced W-PIR$^{\#}$ with a special probability assignment strategy in Proposition \ref{prop:P3P4}. We then show that such a reduced W-PIR$^{\#}$ strategy is, in fact, optimal by carefully investigating the properties of MI and the log sum inequality. Lastly, we construct an explicit probability allocation by carefully solving the KKT conditions of the reduced W-PIR$^{\#}$ optimization problem. We defer the complete proof of Theorem \ref{theorem-MI-mainresult} to Section \ref{sec:proof-MI-mainresult}.

\begin{prop}
\label{prop:MItimesharing}
    Denote the achievable region of the clean TSC scheme under the MI metric as $\hat{\Gc}_{\text{MI}}$. In the homogeneous case, the optimal W-PIR$^{\#}$ strategy under the MI metric is time-sharing between the clean TSC scheme and the extreme point using the direct retrieval pattern $\#$ only, i.e., $\Gc_{\text{MI}} = conv\left( \hat{\Gc}_{\text{MI}}\cup \left\{ \left( \gamma_{1}\log K,1\right)\right\}  \right)$.
\end{prop}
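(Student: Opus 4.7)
The plan is to prove both set inclusions, leveraging the explicit optimal assignment from Theorem~\ref{theorem-MI-mainresult}.

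Achievability $\mathrm{conv}(\hat{\Gc}_{\text{MI}}\cup\{(\gamma_{1}\log K,1)\})\subseteq\Gc_{\text{MI}}$ is direct. Clean TSC codes are W-PIR$^{\#}$ codes with $p_{(\#)}^{k,n}=0$, so $\hat{\Gc}_{\text{MI}}\subseteq\Gc_{\text{MI}}$. The extreme point $(\gamma_{1}\log K,1)$ is realized by the code with $p_{(\#)}^{k,1}=1$ for every $k$: server $1$ receives $\#_{M}$ which pinpoints $M$ (so $I(M;Q_{1})=\log K$) while every other server receives the constant $\underline{0_{K}}$ (so $I(M;Q_{n})=0$), and the download is manifestly $D=1$. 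Public-randomness time-sharing then fills in the convex hull.

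For the converse $\Gc_{\text{MI}}\subseteq\mathrm{conv}(\hat{\Gc}_{\text{MI}}\cup\{(\gamma_{1}\log K,1)\})$, I would read off from Theorem~\ref{theorem-MI-mainresult} the structural observation that the relative weights on the TSC patterns, $p_{(f)}^{k,\pi^{\ast}}/p_{(\underline{0})}^{k,\pi^{\ast}}=\prod_{i=1}^{\|f\|}x_{i}^{-1}$, depend only on $(N,K)$ via the $x_{i}$'s and \emph{not} on $D$. Consequently the optimal law admits the decomposition
\begin{align*}
    \boldsymbol{P}^{\mathrm{opt}}_{k}(D)=\lambda(D)\,\boldsymbol{P}^{\#}_{k}+\bigl(1-\lambda(D)\bigr)\,\boldsymbol{P}^{\mathrm{TSC}}_{k},
\end{align*}
where $\boldsymbol{P}^{\#}_{k}$ is the all-escape extreme (unit mass on $F^{\ast}=1$, producing $\#_{k}$ at server $1$) and $\boldsymbol{P}^{\mathrm{TSC}}_{k}$ is a single \emph{fixed} clean-TSC distribution on $\{(\pi^{\ast},f):f\in\Fc\}$ lying in $\hat{\Gc}_{\text{MI}}$. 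It then suffices to show that at mixing weight $\lambda$ both the leakage $\rho_{(\text{MI})}$ and the download $D$ are linear in $\lambda$, so that the optimal trajectory collapses to the straight segment joining the two vertices and is contained in the convex hull.

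The technical heart is linearity of each per-server $I(M;Q_{n})$ in $\lambda$. For any server $n$ at which the query supports of $\boldsymbol{P}^{\#}_{k}$ and $\boldsymbol{P}^{\mathrm{TSC}}_{k}$ are disjoint, the sub-scheme indicator $T\in\{\#,\mathrm{TSC}\}$ is a deterministic function of $Q_{n}$, and the chain rule with $T\perp M$ gives $I(M;Q_{n})=\lambda I^{\#}(M;Q_{n})+(1-\lambda)I^{\mathrm{TSC}}(M;Q_{n})$. At the single server $n^{\ast}$ with $\pi^{\ast}(n^{\ast})=0$, the two supports overlap on $\underline{0_{K}}$, but both sub-schemes emit $\underline{0_{K}}$ there with $M$-independent probability---in $\boldsymbol{P}^{\mathrm{TSC}}_{k}$ this happens iff $F=\underline{0_{K-1}}$, an event whose mass $p_{0}$ does not depend on $k$---so the ambiguous term contributes zero to $I(M;Q_{n^{\ast}})$ and the same linear decomposition persists. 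Summing over $n$ with weights $\gamma_{n}$ yields $\rho_{(\text{MI})}$ linear in $\lambda$; the download cost is linear by direct inspection.

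The main obstacle I anticipate is extracting the $D$-independence of the ratios $c_{j}=\prod_{i=1}^{j}x_{i}^{-1}$, without which the ``TSC vertex'' of the decomposition would itself drift with $D$ and the optimal boundary would generically be a curve rather than a straight segment. This property is built into the recursion defining $x_{1},\ldots,x_{K-1}$ in Theorem~\ref{theorem-MI-mainresult}, which involves only $K$ and $N$; in practice it reflects a clean KKT decoupling in the reduced W-PIR$^{\#}$ optimization between the stationarity conditions on the $p_{j}$'s and the download-cost multiplier.
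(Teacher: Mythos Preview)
Your proof is correct and reaches the same conclusion by a genuinely different route from the paper's. For the converse, the paper proceeds purely algebraically: it plugs the optimal $\boldsymbol{p}$ from Theorem~\ref{theorem-MI-mainresult} into the closed form $I(\boldsymbol{p})$, notes that each summand is $1$-homogeneous in $(p_0,\ldots,p_{K-1})$, and factors out $(1-\alpha)$ with $\alpha=Np_\#$ and $p'_j=p_j/(1-\alpha)$ to obtain $I(\boldsymbol{p})=\alpha\,\frac{\log K}{N}+(1-\alpha)\hat I(\boldsymbol{p}')$, where $\boldsymbol{p}'$ is checked to be a valid clean-TSC assignment. Your argument instead works at the level of the query distributions: you identify the optimal law as a mixture of the escape law and a clean-TSC law and show $I(M;Q_n)$ is affine in the mixing weight via the chain rule, the sub-scheme indicator being a function of $Q_n$ at every server except possibly one. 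Your handling of the single overlapping query $\underline{0_K}$ is exactly right and, in effect, explains \emph{why} the paper's algebra collapses so cleanly. One remark: the ``main obstacle'' you flag---the $D$-independence of the ratios $c_j$---is not actually needed for the proposition. Even if the TSC component drifted with $D$, your linearity-in-$\lambda$ computation would already exhibit the optimal $(\rho,D)$ as a convex combination of the escape point and \emph{some} point of $\hat{\Gc}_{\text{MI}}$, which suffices to place it in the convex hull; the fixed-vertex property you extract is a pleasant structural bonus (it says the optimal W-PIR$^\#$ boundary is literally a straight segment wherever $p_\#>0$), not a prerequisite.
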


\begin{proof}
One direction $conv\left( \hat{\Gc}_{\text{MI}}\cup \left\{ \left( \gamma_{1} \log K,1\right)\right\} \right) \subseteq \Gc_{\text{MI}}$ is trivial because the time-sharing region can be viewed as $\Gc_{\text{MI}}$ under the additional constraints. In the other direction, we need to show that with the same download cost $D$, the time-sharing strategy provides a lower bound of MI leakage $\rho_{(\text{MI})}$. Without loss of generality, we set $\gamma_{1}=\ldots=\gamma_{N}=\gamma=\frac{1}{N}$ in the homogeneous case. Given the optimal surrogate leakage for homogeneous W-PIR$^{\#}$ shown by Theorem \ref{theorem-MI-mainresult}, 
\begin{align}
    \rho_{(\text{MI})} = & \sum^{N}_{n=1} \gamma_{n} \text{MI} \left( M\rightarrow Q^{[M]}_{n}\right) = \sum^{N}_{n=1} \frac{1}{N} I(Q_n;M) \\
    \geq &p_{\# }\log K + \frac{1}{K} \sum^{K}_{j=0} \begin{pmatrix}K\\ j\end{pmatrix} (N-1)^{j} \bigg{\{} jp_{j-1}\log p_{j-1} +(K-j)p_{j}\log p_{j} \notag\\
    &\qquad\qquad\qquad\qquad-\left[ jp_{j-1}+(K-j)p_{j}\right]\log \frac{jp_{j-1}+(K-j)p_{j}}{K}\bigg{\}} \\
    = & \alpha \frac{\log K}{N} + (1-\alpha) \frac{1}{K} \sum^{K}_{j=0} \begin{pmatrix}K\\ j\end{pmatrix} (N-1)^{j} \bigg{\{} jp'_{j-1}\log p'_{j-1} +(K-j)p'_{j}\log p'_{j} \notag\\
    &\qquad\qquad\qquad\qquad-\left[ jp'_{j-1}+(K-j)p'_{j}\right]\log \frac{jp'_{j-1}+(K-j)p'_{j}}{K}\bigg{\}},
\end{align}
where 
\begin{align}
    \alpha &= Np_{\#}, \\
    1-\alpha &= \sum^{K-1}_{j=0} Ns_{j}p_{j}, \\
    p^{\prime}_{j} &= \frac{p_j}{1-\alpha}.
\end{align}

The first term $\log(K)/N$ is achieved by the extreme point using escape pattern $\#$ only to retrieve the whole message from the most trustworthy server. For the second term, note that the probability vector $\boldsymbol{p}'=\left( p'_0, p'_1, \ldots, p'_{K-1}\right)$ induces a valid probability space since 
\begin{align}
    \sum^{K-1}_{j=0} Ns_{j}p'_{j} = \frac{\sum^{K-1}_{j=0} Ns_{j}p_{j} }{1-\alpha} =1.
\end{align}

The optimal probability allocation of clean TSC scheme without the escape retrieval symbol $\#$ can be induced from W-PIR$^{\#}$ by setting $p^{k,1}_{(\#)}$ as zero \cite{qian2022improved}:
\begin{align}
    \hat{I}(\boldsymbol{p}) =& \frac{1}{K} \sum^{K}_{j=0} \begin{pmatrix}K\\ j\end{pmatrix} (N-1)^{j}\bigg{\{} jp_{j-1}\log p_{j-1} +(K-j)p_{j}\log p_{j}\notag\\
    &\qquad\qquad\qquad\qquad-\left[ jp_{j-1}+(K-j)p_{j}\right]\log \frac{jp_{j-1}+(K-j)p_{j}}{K}\bigg{\}},
\end{align}
where $\boldsymbol{p}=\left( p_0, p_1, \ldots, p_{K-1}\right)$ is obtained by using the same allocation strategy in Theorem \ref{theorem-MI-mainresult} but setting $p^{k,1}_{(\#)}$ as zero. The rigorous proof of $\hat{I}(\boldsymbol{p})$ can be found in Appendix \ref{proof:P3P4}.

Therefore, we have the other direction $conv\left( \hat{\Gc}_{\text{MI}}\cup \left\{ \left( \gamma_{1} \log K,1\right)\right\} \right) \supseteq \Gc_{\text{MI}}$, i.e., the optimal W-PIR$^{\#}$ strategy under the MI metric is time-sharing between the extreme point (using the direct retrieval $p^{k,1}_{(\#)}$ only) and the clean TSC scheme (using $p^{k,\pi}_{(f)}$ only).
\end{proof}

Note that this equivalence to the time-sharing (probabilistic sharing) solution is based on one particular structure of the optimal solution in the homogeneous setting. More precisely, it essentially relies on the fact that in the optimal solution, the ``request-for-nothing" query does not induce any privacy leakage.

\subsection{Homogeneous Trustfulness: Reduced W-PIR$^\#$ is Optimal}
Given the parameters of the server trustworthy $\gamma_{n}, n\in [1:N]$, the heterogeneous W-PIR$^\#$ objective under the MI metric is shown in the following optimization problem $(P3)$:

\begin{mini}[2]
{p^{k,n}_{(\#)}, p^{k,\pi}_{(f)}}{\rho_{(\text{MI})}=\sum^{N}_{n=1} \gamma_{n} \text{MI} \left( M\rightarrow Q^{[M]}_{n}\right) }{}{}
\addConstraint{p^{k,n}_{(\#)}}{\geq 0,\quad\forall k,n}{}
\addConstraint{p^{k,\pi}_{(f)}}{\geq 0,\quad\forall k,\pi,f}{}
\addConstraint{\sum^{N}_{n=1} p^{k,n}_{(\#)}+\sum_{f}\sum_{\pi}p^{k,\pi}_{(f)}}{=1,\quad\forall k}{}
\addConstraint{p^k_d + \frac{N}{N-1}(1-p^k_d)}{\leq D,\quad\forall k.}{}
\end{mini}

\begin{prop}
\label{prop:P3P4}
The above optimization problem $({P3})$ in the homogeneous setting, where $\gamma_{1}=\gamma_{2}=\ldots=\gamma_{N}=\gamma=1/N$, has the same optimal value as the following reduced problem $({P4})$:
\begin{mini}[2]
{p_{\# },p_{0},\ldots,p_{K-1}}{I(\boldsymbol{p})\qquad\qquad\qquad\qquad\qquad}{}{}
\addConstraint{p_{\# },p_{0},p_{1},\ldots ,p_{K-1}}{\geq 0}{}
\addConstraint{Np_{\# }+\sum^{K-1}_{j=0} Ns_{j}p_{j}}{=1}{}
\addConstraint{\frac{N-N(p_{\# }+p_{0})}{N-1} }{\leq D.}{}
\end{mini}
\end{prop}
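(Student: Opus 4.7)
The plan is to mirror the symmetrization argument used for Proposition \ref{prop:P1P2} (the Max-L analogue in Appendix \ref{sec:proveProp2}), replacing the convexity of the maximum by convexity of mutual information in the conditional distribution $P(Q_n\mid M)$ (for fixed input marginal) together with the log-sum inequality. The inequality $(P3)\le (P4)$ is immediate, since (P4) is the restriction of (P3) to a particular subset of symmetric probability assignments on which the two objectives agree. The substance is the reverse direction: from any feasible $(p^{k,n}_{(\#)}, p^{k,\pi}_{(f)})$ of (P3) I would produce a symmetrized feasible point of (P4) with no larger objective value.

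The symmetrization proceeds in four stages, each preserving feasibility (since the normalization and download-cost constraints are linear in the probabilities) and not increasing the objective $\rho_{(\text{MI})} = \frac{1}{N}\sum_n I(M;Q_n)$. First, I would average over message relabelings $\sigma\in S_K$: since the homogeneous problem is $k$-symmetric, each relabeled assignment has the same objective value, and convexity of $I(M;Q_n)$ in the channel yields an averaged solution whose dependence on $k$ is only through symmetric combinations. Second, I would average over server relabelings $\tau\in S_N$; in the homogeneous case $\gamma_n = 1/N$ this also preserves the objective, and after averaging the $N$ servers play interchangeable roles, in particular forcing all $p^{k,n}_{(\#)}$ to be independent of $n$. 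Third, I would replace each non-cyclic permutation $\pi$ by a cyclic one, using the fact that $\phi^*_n(k,F^*)$ depends on $\pi$ only through $(\pi(n)-\sum_j F_j)_N$ combined with the shift-symmetry of the uniform $F_k\in[0:N-1]$; this moves all probability mass onto cyclic permutations without altering per-server query distributions, and hence preserves the MI exactly. Fourth, I would apply the log-sum inequality to collapse the probabilities $p^{k,\pi}_{(f)}$ within each Hamming-weight class $\|f\|=j$ into a single parameter $p_j$, arriving at $(p_{\#},p_0,\ldots,p_{K-1})$ and recovering the closed-form objective $I(\boldsymbol{p})$ stated in Theorem \ref{theorem-MI-mainresult}.

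The main obstacle will be the last two stages, the reductions to cyclic permutations and to Hamming-weight classes. Unlike the Max-L case, where the pointwise $\max$ in (P2) makes equalizing probabilities within a class essentially automatic, here each $f$ contributes distinct conditional probabilities $P(Q_n\mid M=k)$ at different servers, and the log-sum inequality has to be applied at precisely the right granularity to control the per-server entropy terms while leaving the marginals untouched. The bulk of the proof will consist of carefully decomposing $I(M;Q_n)$ according to the orbit structure of $(f,\pi)$ under the server-index action and verifying that the fully collapsed objective indeed coincides with $I(\boldsymbol{p})$.
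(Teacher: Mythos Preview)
Your plan is in the right spirit --- $(P3)\le(P4)$ is immediate, and the substance is the reverse inequality via convexity --- but it diverges from the paper's route and one step rests on a mistaken premise.

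The paper does \emph{not} symmetrize first. It defines directly
\[
p_\# = \frac{1}{NK}\sum_{k,n} p^{k,n}_{(\#)},\qquad p_j = \frac{1}{NKs_j}\sum_{k}\sum_\pi\sum_{f\in\Fc_j} p^{k,\pi}_{(f)}
\]
from an \emph{arbitrary} feasible point of $(P3)$, checks the $(P4)$ constraints, and proves $\rho_{(\text{MI})}\ge I(\boldsymbol{p})$ in one chain: decompose $I(M;Q_n)$ by $\|q\|$, drop the nonnegative $\|q\|=0$ contribution (uniform $M$ maximizes $H(M\mid Q_n=0)$), apply the log-sum inequality to pool over $n$, split each $q$ according to whether $\|q|k\|=j-1$ or $j$, and finally invoke $\sum_i a_i\log(a_i/Z)\ge N\bar a\log(\bar a/Z)$ to average the $f$'s within each Hamming-weight class. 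Your Stages~1--3 buy nothing here; what you call Stage~4 \emph{is} the whole proof, and the paper shows it works without any preliminary symmetrization.

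Your Stage~3 justification is also wrong as written. You appeal to ``the shift-symmetry of the uniform $F_k\in[0:N-1]$'', but in W-PIR$^\#$ the marginals of the $F_k$ are not uniform --- their joint distribution is precisely the optimization variable. What would actually make the cyclic reduction legitimate after your Stage~2 is that server-symmetrization forces the marginal $\pi(n)$ to be uniform for each $n$, and the query at server $n$ depends on $\pi$ only through $\pi(n)$; hence any distribution on $\pi$ with uniform one-dimensional marginals produces the same per-server conditional and the same MI. The conclusion is salvageable, but not for the reason you give. Separately, note that Stages~1--2 do not land you in the $(P4)$ parameter space: $S_K$-invariance only forces $p^{k,\pi}_{(f)}$ to depend on the \emph{multiset} of entries of $f$, not on $\|f\|$ alone. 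So Stage~4 must still collapse multiset-dependence to Hamming-weight-dependence --- which is exactly the paper's log-sum chain applied to a more restricted starting point. The symmetrization detour is redundant.
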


We prove this proposition in Appendix \ref{proof:P3P4} mainly based on the properties of mutual information and log sum inequality. The optimal probability assignment in Proposition \ref{prop:P3P4} is established by the scheme in Theorem \ref{theorem-MI-mainresult}, which is proved in Section \ref{sec:proof-MI-mainresult}.

Before the proof of Theorem \ref{theorem-MI-mainresult}, we further specify that the download cost constraint $p_{0}+p_{\# }\geq 1-D+D/N$ under the homogeneous MI setting can always be achieved by letting $p_{0}+p_{\# }=1-D+D/N$ in Lemma \ref{lemma-MI-achieve-download}.

\begin{lemma}
\label{lemma-MI-achieve-download}
Given a legal solution $\boldsymbol{p}=(p_{\#},p_{0},\ldots,p_{K-1})$ of $(P4)$ satisfying $p_{0}+p_{\# }=\hat{p}^{\ast} \geq \hat{p}=1-D+D/N$, we can explicitly construct a new vector $\boldsymbol{p}'=(p^{\prime}_{\#},p^{\prime}_{0},\ldots,p^{\prime}_{K-1})$ by setting
\begin{align*}
    & p^{\prime }_{j}=\frac{1-N\hat{p} }{1-N\hat{p}^{\ast } } p_{j}, \quad j\in[0,K-1],\\
    & p^{\prime }_{\# }=\hat{p} -p^{\prime }_{0},
\end{align*}
such that this newly constructed assignment has lower MI leakage $I(\boldsymbol{p}') \leq I(\boldsymbol{p})$.
\end{lemma}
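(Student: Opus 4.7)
The plan is to prove the lemma in two stages: first verify that $\boldsymbol{p}'$ is a legal solution of $(P4)$, then establish the information inequality $I(\boldsymbol{p}') \le I(\boldsymbol{p})$. Write $\alpha := (1-N\hat{p})/(1-N\hat{p}^*) \ge 1$. The download constraint $p'_\# + p'_0 = \hat{p}$ holds by construction; the normalization follows by direct substitution, using $\sum_{j=0}^{K-1} N s_j p_j = 1 - N p_\#$ from the legality of $\boldsymbol{p}$:
\[
N p'_\# + \sum_{j=0}^{K-1} N s_j p'_j = N\hat{p} + \alpha(1 - N\hat{p}^*) = N\hat{p} + (1 - N\hat{p}) = 1.
\]
Non-negativity of $p'_j$ for $j \ge 0$ is immediate from $\alpha > 0$; the remaining condition $p'_\# = \hat{p} - \alpha p_0 \ge 0$ reduces to $p_0 \le \hat{p}(1-N\hat{p}^*)/(1-N\hat{p})$, which I would verify from the feasibility of $\boldsymbol{p}$ together with $\hat{p}^* \ge \hat{p}$ (handling the degenerate boundary $\hat{p}^* = 1/N$ separately).

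The information inequality rests on a structural property of the formula in Theorem \ref{theorem-MI-mainresult}. Write $I(\boldsymbol{p}) = p_\# \log K + F(p_0, \ldots, p_{K-1})$, where $F$ collects the TSC contribution. Each inner summand
\[
G_j(u,v) := j u \log u + (K-j) v \log v - [j u + (K-j) v] \log\frac{j u + (K-j) v}{K}
\]
is $1$-homogeneous, $G_j(\alpha u, \alpha v) = \alpha G_j(u,v)$, because the three weights $j u$, $(K-j) v$, $-[j u + (K-j) v]$ sum to zero so the $\log \alpha$ cross-terms cancel. Hence $F(\alpha \boldsymbol{p}_{0:K-1}) = \alpha F(\boldsymbol{p}_{0:K-1})$, and substituting into $I(\boldsymbol{p}') - I(\boldsymbol{p})$ and simplifying using $p_\# + p_0 = \hat{p}^*$ together with the definition of $\alpha$ collapses the difference to the compact identity
\[
I(\boldsymbol{p}') - I(\boldsymbol{p}) = \frac{\hat{p}^* - \hat{p}}{1 - N\hat{p}^*}\bigl[N I(\boldsymbol{p}) - \log K\bigr].
\]
Since the prefactor is non-negative, the lemma reduces to the sharpened bound $N I(\boldsymbol{p}) \le \log K$.

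The main obstacle is this last inequality, which is strictly stronger than the elementary per-server bound $I(Q_n;M) \le H(M) = \log K$. Rewriting $p_j = (1-N p_\#) \tilde{p}_j$ with $\tilde{\boldsymbol{p}}$ normalized by $\sum_j N s_j \tilde{p}_j = 1$, and using $1$-homogeneity of $F$ once more, this bound is equivalent to $F(\tilde{\boldsymbol{p}}) \le (\log K)/N$ for the induced clean-TSC distribution, subject to the feasibility-implied constraint $\tilde{p}_0 \le \hat{p}$. I would attempt this by the log-sum inequality applied term-by-term to the $G_j$'s, using the combinatorial identities $t_j = \binom{K}{j}(N-1)^j$ and $s_j = \binom{K-1}{j}(N-1)^j$ to telescope the sum. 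A complementary route is to exploit the per-server symmetry of the reduced code --- $N I(\boldsymbol{p}) = \sum_{n=1}^N I(Q_n;M)$ with all queries $Q_n$ sharing the common key $(F,\pi)$ --- and apply a data-processing/averaging argument over the cyclic permutation $\pi$. This reduction of the lemma to a clean inequality for the normalized TSC distribution is the main technical hurdle I anticipate.
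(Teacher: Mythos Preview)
Your reduction is identical to the paper's: both verify the normalization and download constraints for $\boldsymbol{p}'$, exploit the $1$-homogeneity of the summands $G_j$ to obtain $I(\boldsymbol{p}')=(\hat p-\alpha\hat p^{*})\log K+\alpha I(\boldsymbol{p})$, and collapse the difference to
\[
I(\boldsymbol{p}')-I(\boldsymbol{p})=\frac{N(\hat p^{*}-\hat p)}{1-N\hat p^{*}}\Bigl(I(\boldsymbol{p})-\tfrac{\log K}{N}\Bigr).
\]
Where the paper dispatches the last factor via the two-line chain $I(\boldsymbol{p})\le p_\#\log K+\tfrac1N(1-Np_\#)H(M)=\log K/N$, you instead flag $NI(\boldsymbol{p})\le\log K$ as the main hurdle and sketch log-sum and data-processing attacks.

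That hurdle is in fact insurmountable at the stated generality: the bound $I(\boldsymbol{p})\le\log K/N$ is \emph{false} for arbitrary feasible $\boldsymbol{p}$, and with it the lemma. For $N=3$, $K=2$ take $\boldsymbol{p}=(p_\#,p_0,p_1)=(0,\tfrac13-\epsilon,\tfrac{\epsilon}{2})$ with small $\epsilon>0$; this satisfies all constraints of $(P4)$ for any $D>1$, yet $I(\boldsymbol{p})=2G_1(p_0,p_1)\to\tfrac23\log 2>\tfrac13\log 2=\log K/N$ as $\epsilon\to0$, so the identity forces $I(\boldsymbol{p}')>I(\boldsymbol{p})$. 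The same example gives $p'_\#=\hat p-\alpha p_0<0$, so your planned feasibility verification would fail as well. Consequently neither your proposed routes nor the paper's two-line bound can work for arbitrary feasible $\boldsymbol{p}$. What \emph{is} salvageable is the statement restricted to an optimal $\boldsymbol{p}$: the pure-escape assignment $p_\#=1/N$, $p_j=0$ is always feasible with value $I=\log K/N$, so any minimizer of $(P4)$ satisfies $I(\boldsymbol{p})\le\log K/N$, and then your identity indeed yields $I(\boldsymbol{p}')\le I(\boldsymbol{p})$ --- which is all that Theorem~\ref{theorem-MI-mainresult} actually uses.
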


The proof of Lemma \ref{lemma-MI-achieve-download} is given in Appendix \ref{proof:lemma-MI-achieve-download} by carefully computing the difference between the new constructed assignment $I(\boldsymbol{p}')$ and the original $I(\boldsymbol{p})$. 

\subsection{Homogeneous Trustfulness: Proof of Theorem \ref{theorem-MI-mainresult}}
\label{sec:proof-MI-mainresult}
Now we are ready to prove Theorem \ref{theorem-MI-mainresult}.

\begin{proof}
Using a similar manner in Max-L, we define $\hat{p} =1-D+D/N$, where $\hat{p} \in \left[ N^{-K},N^{-1}\right]$. Without loss of optimality, a feasible download cost $D\in[1,D^{*}]$ in $(P4)$ can be achieved by setting $p_{0}+p_{\# }=\hat{p}$, which is established in Lemma \ref{lemma-MI-achieve-download}. Therefore, the optimization problem under the homogeneous MI setting can be written as $(P4')$:
\begin{mini}[2]
{p_{\# },p_{0},p_{1},\ldots ,p_{K-1}}{I(\boldsymbol{p})\qquad \qquad \qquad \qquad \qquad}{}{}
\addConstraint{p_{\# },p_{0},p_{1},\ldots ,p_{K-1}}{\geq 0}{}
\addConstraint{Np_{\# }+\sum^{K-1}_{j=0} Ns_{j}p_{j}}{=1}{}
\addConstraint{p_{0}+p_{\#} }{= \hat{p}}{}
\end{mini}

The Lagrangian function is 
\begin{align}
    \mathscr{L}=&I(\boldsymbol{p})-\sum^{K-1}_{j=0} \lambda_{j} p_{j}-\eta_{\# } p_{\# } \notag \\
    &+\nu \left( Np_{\# }+\sum^{K-1}_{j=0} Ns_{j}p_{j}-1\right) \notag \\
    &+\mu \left( \hat{p} -p_{0}-p_{\# }\right)   
\end{align}

We introduce the two sets of auxiliary variables for $j=1,2,\ldots,K-1$:
\begin{align}
    x_{j} &\triangleq p_{j-1}/p_{j}, \\
    y_{j} &\triangleq \log \frac{jx_{j}+K-j}{K}.
\end{align}

Then the KKT condition can be derived as follows:
\begin{enumerate}
    \item stationarity:
    \begin{align}
        \begin{cases}\log K-\eta_{\# } +N\nu -\mu =0\\ \begin{array}{l}s_{j}\bigg{(}-y_{j}+(N-1)\big{(}\log x_{j+1}-y_{j+1}\big{)}\\+N\nu \bigg{)} -\lambda_{j}=0,j\in [1:K-2]\end{array}\\ (N-1)^{K-1}\left[ -y_{k-1}+N\nu \right]  -\lambda_{K-1} =0\\ (N-1)(\log x_{1}-y_{1})+N\nu -\lambda_{0} -\mu =0\end{cases} 
    \end{align}

    \item primal feasibility:
    \begin{align}
        \begin{cases}Np_{\# }+N\sum^{K-1}_{j=0} s_{j}p_{j}-1=0\\ p_{j}\geq 0,p_{\# }\geq 0,j\in [0:K-1]\\ p_{0}+p_{\# }= \hat{p}\end{cases} 
    \end{align}

    \item dual feasibility:
    \begin{align}
        \begin{cases}\eta_{\# } \geq 0\\ \lambda_{j} \geq 0,j\in [0:K-1] \end{cases} 
    \end{align}

    \item complementary slackness:
    \begin{align}
        \begin{cases}\eta_{\# } p_{\# }=0\\ \lambda_{j} p_{j}=0\\ \mu \left( \hat{p} -p_{0}-p_{\# }\right)  =0\end{cases} 
    \end{align}
\end{enumerate}

We give the solution to the KKT conditions as the following variable assignments:
\begin{enumerate}[1)]
\item primal variables:
\begin{align}
    \begin{cases}p_{\# }=\hat{p} -p_{0}\\ p_{0}=(N-1)(D-1)\left( \sum^{K-1}_{j=1} Ns_{j}\prod^{j}_{i=1} \frac{1}{x_{i}} \right)^{-1}  \\ p_{j}=p_{0}\prod^{j}_{i=1} \frac{1}{x_{i}} ,j\in [1:K-1] \end{cases} 
\end{align}
where $\boldsymbol{x}=(x_{1},x_{2},\ldots ,x_{K-1})$ is defined as the following sequence with $x_{1}=\frac{K-1}{K^{\frac{N-2}{N-1} }-1}$:
\begin{align}
    \log \frac{(K-j)x_{K-j}+j}{K} = &\sum^{j-1}_{i=0} (1-N)^{i}y_{K-1} \notag \\
    & -\sum^{j-1}_{i=1} (1-N)^{i}\log x_{K-j+i}.
\end{align}

\item dual variables:
\begin{align}
\begin{cases}\eta_{\# } =0\\ \lambda_{j} =0,j\in [0:K-1]\\ \nu =\frac{y_{K-1}}{N}\\ \mu =\log K+y_{K-1}\end{cases} 
\end{align}
\end{enumerate}

It can be verified that the solution given above satisfies all KKT conditions. Firstly with $\lambda_{j}=0$ for $j=0,1,\ldots,K-1$, $x_{j}$ and $y_{j}$ assigned by (\ref{theorem-MI-xj}), with $y_{j}$'s eliminated, can properly satisfy stationarity and primal feasibility conditions. The dual feasibility and complementary slackness can be easily verified by simply plugging the variables, and this completes the proof.
\end{proof}

\begin{figure*}[tb]
    \centering
    \includegraphics[width=\linewidth]{./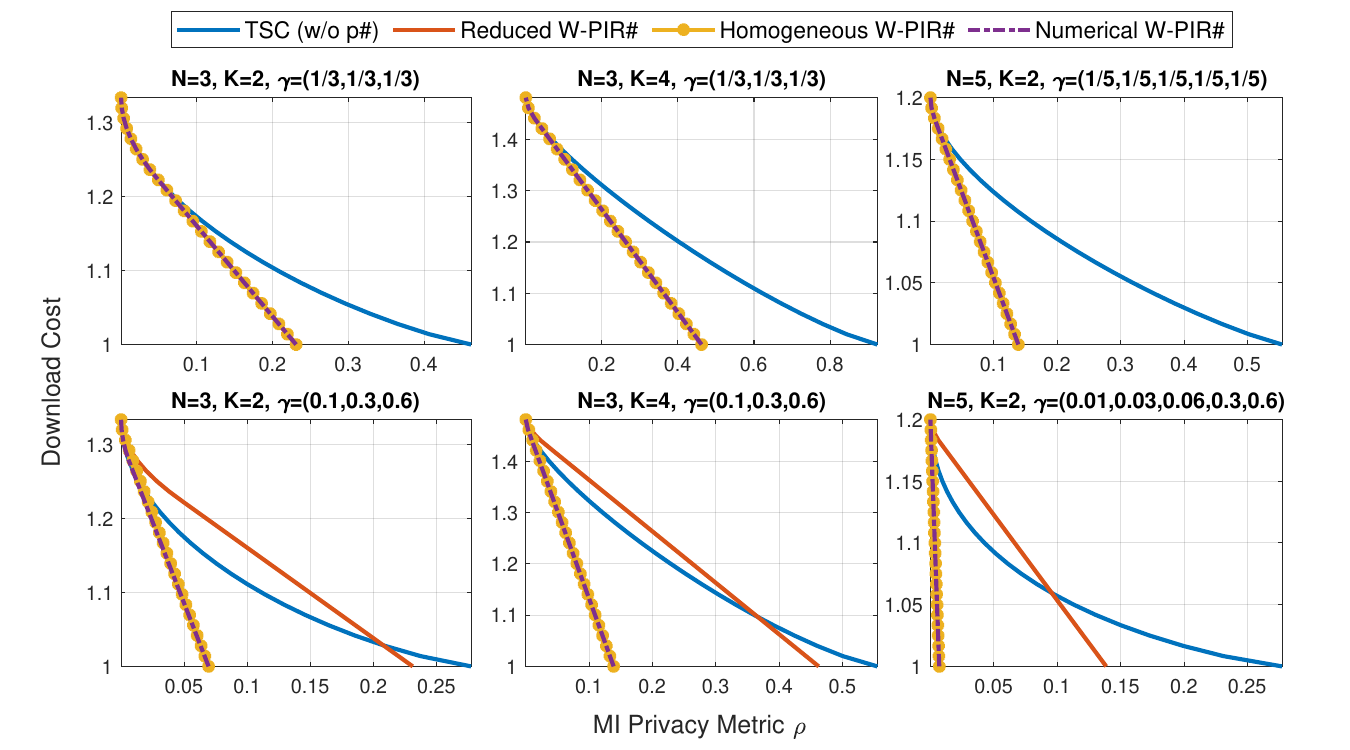}    
    \caption{Numerical comparisons between the proposed code and existing codes under the MI metric.}
    \label{fig:MI}
\end{figure*}

\begin{figure*}[h!]
    \centering
    \begin{subfigure}{.45\textwidth}
        \includegraphics[width=\textwidth]{./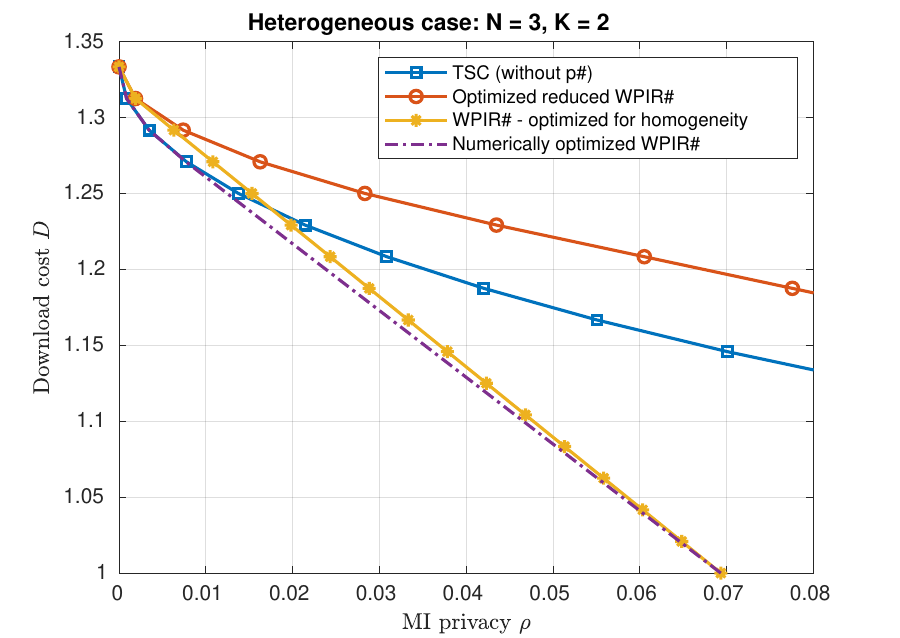}
        \caption{$\gamma = [0.1, 0.3, 0.6]$}
    \end{subfigure}
    \begin{subfigure}{.45\textwidth}
        \includegraphics[width=\textwidth]{./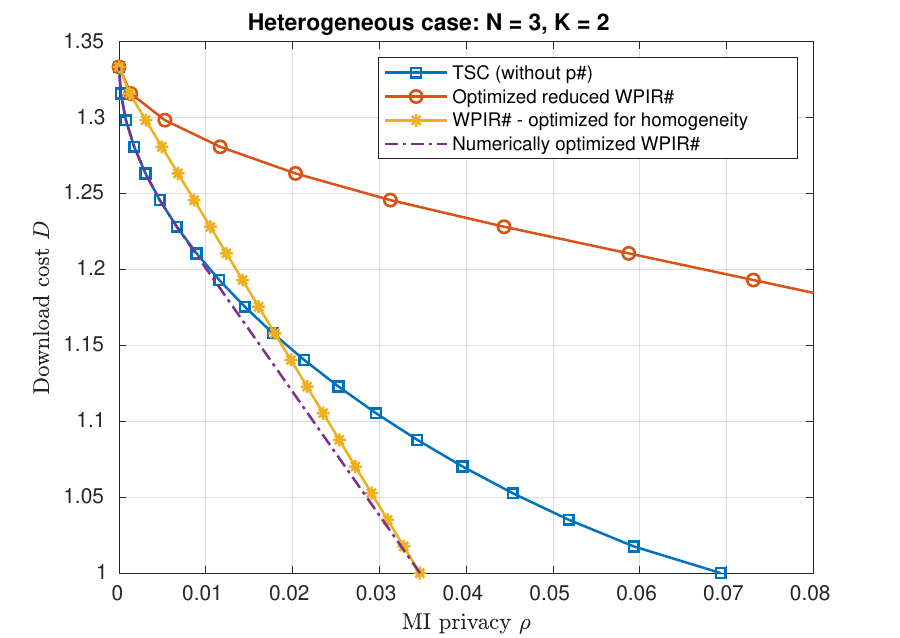}
        \caption{$\gamma = [0.05, 0.05, 0.9]$}
    \end{subfigure}
    \caption{Numerical comparisons with different trustworthy weights $\gamma$ under the MI metric.}
    \label{fig:MIgamma}
\end{figure*}

\subsection{Heterogeneous Trustfulness: Numerical Results}
The optimal regions $\mathcal{G}_{\text{MI}}$ under different allocation strategies are illustrated in Fig. \ref{fig:MI}. In homogeneous settings, our proposed allocation (WPIR$^{\#}$ - optimized for homogeneity) in Theorem \ref{theorem-MI-mainresult} provides the least MI privacy leakage under the same download cost constraint, which indeed matches the result numerically solved by convex programming tools. However, in the heterogeneous case, the WPIR$^{\#}$ optimized for homogeneity is no longer optimal, and a closed-form solution for heterogeneous settings becomes intractable. Therefore, a numerically optimized WPIR$^{\#}$ with convex programming tools is provided to corroborate the theoretical analysis. With an increasing number of servers and messages, the WPIR$^{\#}$ achieves better $(\rho, D)$ trade-offs than other PIR schemes. Although the analytical allocation in Theorem \ref{theorem-MI-mainresult} may not be optimal in the heterogeneous setting, it appears to be quite close to the numerical solution when $N$ and $K$ are large.

We compared the W-PIR schemes with different weights on the trustworthiness of the servers in Fig. \ref{fig:MIgamma}. The allocation of W-PIR$^{\#}$ in Theorem \ref{theorem-MI-mainresult} approaches the numerical solution as the value of $\Delta \gamma \triangleq \gamma_N - \gamma_1$ becomes smaller, that is, there do not exist extremely trustworthy and extremely untrustworthy servers. When $\Delta\gamma \rightarrow 0$, the analytical allocation of W-PIR$^{\#}$ converges to the numerical solution, as shown in homogeneous settings. It is also observed that the original TSC code and the optimized reduced WPIR$^\#$ suffer in a heterogeneous setting because they do not take into account the more trustworthy server.

The structure of the probabilistic sharing mechanism becomes more intricate in the context of MI metric than in the Max-L metric.  In particular, it is observed that when the leakage parameter $\rho$ falls below a certain threshold, these new direct download patterns become ineffective, implying that they are not used in the retrieval process. This numerical result also validates the correctness of Proposition \ref{prop:MItimesharing}: the optimal W-PIR$^{\#}$ is time-sharing between the extreme point of direct retrieval pattern and the clean TSC scheme without the $\#$ pattern for the homogeneous setting, and appears to even hold for the heterogeneous case, though we are not able to establish this conjecture rigorously. However, we can still apply the allocation in Theorem \ref{theorem-MI-mainresult}, which is optimal for the homogeneous setting, which appears to be quite close to the optimal $\mathcal{G}_{\text{MI}}$ numerically computed in this setting, as shown in Fig. \ref{fig:MI}.

\section{Conclusion}
\label{sec:conclusion}
We studied the problem of weakly private information retrieval when there is heterogeneity in the servers' trustfulness, and identified the optimal probability allocation of a general class of W-PIR code, which we refer to as the W-PIR$^\#$ code. This optimal distribution is notably straightforward, essentially being a probabilistic sharing of a capacity-achieving PIR code and a direct download from the most reliable server. Intriguingly, we found that a specific optimal code designed for a homogeneous trust environment is equally effective in a heterogeneous trust scenario, particularly under the Max-L metric. Further explorations in our subsequent work delved into the W-PIR$^\#$ code within the context of the MI metric, examining both homogeneous and heterogeneous settings, where the optimal solutions for the two settings showed significant divergence.

\bibliographystyle{IEEEtran}

\begin{thebibliography}{10}
\providecommand{\url}[1]{#1}
\csname url@samestyle\endcsname
\providecommand{\newblock}{\relax}
\providecommand{\bibinfo}[2]{#2}
\providecommand{\BIBentrySTDinterwordspacing}{\spaceskip=0pt\relax}
\providecommand{\BIBentryALTinterwordstretchfactor}{4}
\providecommand{\BIBentryALTinterwordspacing}{\spaceskip=\fontdimen2\font plus
\BIBentryALTinterwordstretchfactor\fontdimen3\font minus
  \fontdimen4\font\relax}
\providecommand{\BIBforeignlanguage}[2]{{%
\expandafter\ifx\csname l@#1\endcsname\relax
\typeout{** WARNING: IEEEtran.bst: No hyphenation pattern has been}%
\typeout{** loaded for the language `#1'. Using the pattern for}%
\typeout{** the default language instead.}%
\else
\language=\csname l@#1\endcsname
\fi
#2}}
\providecommand{\BIBdecl}{\relax}
\BIBdecl

\bibitem{Chor1995}
B.~Chor, O.~Goldreich, E.~Kushilevitz, and M.~Sudan, ``Private information
  retrieval,'' in \emph{IEEE 36th Annual Foundations of Computer Science},
  Milwaukee, WI, USA, Oct. 1995, pp. 41--50.

\bibitem{Sun2017}
H.~Sun and S.~A. Jafar, ``The capacity of private information retrieval,''
  \emph{IEEE Transactions on Information Theory}, vol.~63, no.~7, pp.
  4075--4088, 2017.

\bibitem{tian2019capacity}
C.~Tian, H.~Sun, and J.~Chen, ``Capacity-achieving private information
  retrieval codes with optimal message size and upload cost,'' \emph{IEEE
  Transactions on Information Theory}, vol.~65, no.~11, pp. 7613--7627, 2019.

\bibitem{t1}
K.~Banawan and S.~Ulukus, ``The capacity of private information retrieval from
  {Byzantine} and colluding databases,'' \emph{IEEE Transactions on Information
  Theory}, vol.~65, no.~2, pp. 1206--1219, Feb. 2019.

\bibitem{t2}
H.~Sun and S.~A. Jafar, ``The capacity of robust private information retrieval
  with colluding databases,'' \emph{IEEE Transactions on Information Theory},
  vol.~64, no.~4, pp. 2361--2370, Apr. 2018.

\bibitem{zhou2022two}
R.~Zhou, C.~Tian, H.~Sun, and J.~S. Plank, ``Two-level private information
  retrieval,'' \emph{IEEE Journal on Selected Areas in Information Theory},
  vol.~3, no.~2, pp. 337--349, 2022.

\bibitem{c1}
R.~Zhou, C.~Tian, H.~Sun, and T.~Liu, ``Capacity-achieving private information
  retrieval codes from {MDS}-coded databases with minimum message size,''
  \emph{IEEE Transactions on Information Theory}, vol.~66, no.~8, pp.
  4904--4916, Aug. 2020.

\bibitem{c2}
T.~Guo, R.~Zhou, and C.~Tian, ``New results on the storage-retrieval tradeoff
  in private information retrieval systems,'' \emph{IEEE Journal on Selected
  Areas in Information Theory}, vol.~2, no.~1, pp. 403--414, Mar. 2021.

\bibitem{c3}
C.~Tian, ``On the storage cost of private information retrieval,'' \emph{IEEE
  Transactions on Information Theory}, vol.~66, no.~12, pp. 7539--7549, Dec.
  2020.

\bibitem{tian2023shannon}
C.~Tian, H.~Sun, and J.~Chen, ``A {Shannon}-theoretic approach to the
  storage--retrieval trade-off in pir systems,'' \emph{Information}, vol.~14,
  no.~1, p.~44, 2023.

\bibitem{c5}
H.~Sun and C.~Tian, ``Breaking the {MDS-PIR} capacity barrier via joint storage
  coding,'' \emph{Information}, vol.~10, no.~9, Aug. 2019.

\bibitem{c6}
K.~Banawan and S.~Ulukus, ``The capacity of private information retrieval from
  coded databases,'' \emph{IEEE Transactions on Information Theory}, vol.~64,
  no.~3, pp. 1945--1956, Mar. 2018.

\bibitem{c7}
R.~Tajeddine, O.~W. Gnilke, and S.~El~Rouayheb, ``Private information retrieval
  from {MDS} coded data in distributed storage systems,'' \emph{IEEE
  Transactions on Information Theory}, vol.~64, no.~11, pp. 7081--7093, Nov.
  2018.

\bibitem{c8}
R.~Freij-Hollanti, O.~W. Gnilke, C.~Hollanti, and D.~A. Karpuk, ``Private
  information retrieval from coded databases with colluding servers,''
  \emph{SIAM Journal on Applied Algebra and Geometry}, vol.~1, no.~1, pp.
  647--664, Nov. 2017.

\bibitem{c9}
H.~Sun and S.~A. Jafar, ``Private information retrieval from {MDS} coded data
  with colluding servers: Settling a conjecture by {Freij-Hollanti} et al.''
  \emph{IEEE Transactions on Information Theory}, vol.~64, no.~2, pp.
  1000--1022, Feb. 2018.

\bibitem{c10}
S.~Kumar, H.-Y. Lin, E.~Rosnes, and A.~Graell~i Amat, ``Achieving maximum
  distance separable private information retrieval capacity with linear
  codes,'' \emph{IEEE Transactions on Information Theory}, vol.~65, no.~7, pp.
  4243--4273, Jul. 2019.

\bibitem{zhu2019new}
J.~Zhu, Q.~Yan, C.~Qi, and X.~Tang, ``A new capacity-achieving private
  information retrieval scheme with (almost) optimal file length for coded
  servers,'' \emph{IEEE Transactions on Information Forensics and Security},
  vol.~15, pp. 1248--1260, 2019.

\bibitem{vardy2023private}
A.~Vardy and E.~Yaakobi, ``Private information retrieval without storage
  overhead: Coding instead of replication,'' \emph{IEEE Journal on Selected
  Areas in Information Theory}, vol.~4, pp. 286--301, Jul. 2023.

\bibitem{d1}
T.~Guo, R.~Zhou, and C.~Tian, ``On the information leakage in private
  information retrieval systems,'' \emph{IEEE Transactions on Information
  Forensics and Security}, vol.~15, pp. 2999--3012, Mar. 2020.

\bibitem{d2}
H.~Sun and S.~A. Jafar, ``The capacity of symmetric private information
  retrieval,'' \emph{IEEE Transactions on Information Theory}, vol.~65, no.~1,
  pp. 322--329, Jan. 2019.

\bibitem{d3}
Z.~Wang, K.~Banawan, and S.~Ulukus, ``Private set intersection: A multi-message
  symmetric private information retrieval perspective,'' \emph{IEEE
  Transactions on Information Theory}, vol.~68, no.~3, pp. 2001--2019, 2021.

\bibitem{s1}
R.~Tandon, ``The capacity of cache aided private information retrieval,'' in
  \emph{2017 55th Annual Allerton Conference on Communication, Control, and
  Computing (Allerton)}, Monticello, IL, USA, Oct. 2017, pp. 1078--1082.

\bibitem{s2}
Y.-P. Wei, K.~Banawan, and S.~Ulukus, ``Fundamental limits of cache-aided
  private information retrieval with unknown and uncoded prefetching,''
  \emph{IEEE Transactions on Information Theory}, vol.~65, no.~5, pp.
  3215--3232, May 2019.

\bibitem{s3}
S.~Kadhe, B.~Garcia, A.~Heidarzadeh, S.~El~Rouayheb, and A.~Sprintson,
  ``Private information retrieval with side information,'' \emph{IEEE
  Transactions on Information Theory}, vol.~66, no.~4, pp. 2032--2043, Apr.
  2020.

\bibitem{s4}
Z.~Chen, Z.~Wang, and S.~A. Jafar, ``The capacity of {$T$-}private information
  retrieval with private side information,'' \emph{IEEE Transactions on
  Information Theory}, vol.~66, no.~8, pp. 4761--4773, Aug. 2020.

\bibitem{s5}
Y.-P. Wei and S.~Ulukus, ``The capacity of private information retrieval with
  private side information under storage constraints,'' \emph{IEEE Transactions
  on Information Theory}, vol.~66, no.~4, pp. 2023--2031, Apr. 2020.

\bibitem{s6}
S.~Li and M.~Gastpar, ``Single-server multi-message private information
  retrieval with side information: the general cases,'' in \emph{2020 IEEE
  International Symposium on Information Theory (ISIT)}, Los Angeles, CA, USA,
  Jun. 2020, pp. 1083--1088.

\bibitem{s7}
Z.~Wang and S.~Ulukus, ``Symmetric private information retrieval with user-side
  common randomness,'' in \emph{2021 IEEE International Symposium on
  Information Theory (ISIT)}, Melbourne, Victoria, Australia, Jul. 2021, pp.
  2119--2124.

\bibitem{lu2023single}
Y.~Lu and S.~A. Jafar, ``On single server private information retrieval with
  private coded side information,'' \emph{IEEE Transactions on Information
  Theory}, vol.~69, no.~5, pp. 3263--3284, Mar. 2023.

\bibitem{ulukus2022private}
S.~Ulukus, S.~Avestimehr, M.~Gastpar, S.~A. Jafar, R.~Tandon, and C.~Tian,
  ``Private retrieval, computing, and learning: Recent progress and future
  challenges,'' \emph{IEEE Journal on Selected Areas in Communications},
  vol.~40, no.~3, pp. 729--748, 2022.

\bibitem{Asonov2002}
D.~Asonov and J.~C. Freytag, ``Repudiative information retrieval,'' in
  \emph{2002 ACM Workshop on Privacy in the Electronic Society}, Washington,
  DC, USA, Nov. 2002, pp. 32--40.

\bibitem{Toledo2016}
R.~R. Toledo, G.~Danezis, and I.~Goldberg, ``Lower-cost $\epsilon$-private
  information retrieval,'' in \emph{2016 Privacy Enhancing Technologies
  Symposium (PETS)}, Darmstadt, Germany, Jul. 2016, pp. 184--201.

\bibitem{Samy2019}
I.~Samy, R.~Tandon, and L.~Lazos, ``On the capacity of leaky private
  information retrieval,'' in \emph{2019 IEEE International Symposium on
  Information Theory (ISIT)}, Paris, France, Jul. 2019, pp. 1262--1266.

\bibitem{ZhuqingJia2019}
Z.~Jia, ``On the capacity of weakly-private information retrieval,'' Master's
  thesis, University of California, Irvine, CA, 2019.

\bibitem{Lin2019}
H.-Y. Lin, S.~Kumar, E.~Rosnes, A.~Graell~i Amat, and E.~Yaakobi,
  ``Weakly-private information retrieval,'' in \emph{2019 IEEE International
  Symposium on Information Theory (ISIT)}, Paris, France, Jun. 2019, pp.
  1257--1261.

\bibitem{Zhou2020a}
R.~Zhou, T.~Guo, and C.~Tian, ``Weakly private information retrieval under the
  maximal leakage metric,'' in \emph{2020 IEEE International Symposium on
  Information Theory (ISIT)}, Los Angeles, CA, USA, Jun. 2020, pp. 1089--1094.

\bibitem{Lin2021}
H.-Y. Lin, S.~Member, S.~Kumar, E.~Rosnes, A.~{Graell i Amat}, and E.~Yaakobi,
  ``Multi-server weakly-private information retrieval,'' \emph{IEEE
  Transactions on Information Theory}, vol.~68, no.~2, pp. 1197--1219, 2022.

\bibitem{Samy2021}
I.~Samy, M.~Attia, R.~Tandon, and L.~Lazos, ``Asymmetric leaky private
  information retrieval,'' \emph{IEEE Transactions on Information Theory},
  vol.~67, no.~8, pp. 5352--5369, Aug. 2021.

\bibitem{lin2021capacity}
H.-Y. Lin, S.~Kumar, E.~Rosnes, A.~Graell~i Amat, and E.~Yaakobi, ``The
  capacity of single-server weakly-private information retrieval,'' \emph{IEEE
  Journal on Selected Areas in Information Theory}, vol.~2, no.~1, pp.
  415--427, 2021.

\bibitem{yakimenka2022optimal}
Y.~Yakimenka, H.-Y. Lin, E.~Rosnes, and J.~Kliewer, ``Optimal
  rate-distortion-leakage tradeoff for single-server information retrieval,''
  \emph{IEEE Journal on Selected Areas in Communications}, vol.~40, no.~3, pp.
  832--846, 2022.

\bibitem{Issa2020}
I.~Issa, A.~B. Wagner, and S.~Kamath, ``An operational approach to information
  leakage,'' \emph{IEEE Transactions on Information Theory}, vol.~66, no.~3,
  pp. 1625--1657, Mar. 2020.

\bibitem{qian2022improved}
C.~Qian, R.~Zhou, C.~Tian, and T.~Liu, ``Improved weakly private information
  retrieval codes,'' in \emph{Proc. 2022 IEEE International Symposium on
  Information Theory (ISIT)}, Jul. 2022, pp. 2827--2832.

\bibitem{BoydBook}
S.~Boyd and L.~Vandenberghe, \emph{Convex Optimization}, 1st~ed.\hskip 1em plus
  0.5em minus 0.4em\relax Cambridge University Press, 2004.

\end{thebibliography}

\appendix

\section{Proof of Proposition \ref{prop:P1P2}}
\label{sec:proveProp2}

\begin{proof}
\noindent\textit{The direction $(P1)\leq(P2)$:} Given the explicit formulas given in Proposition \ref{prop:DL}, this direction is trivially true since $(P2)$ can be viewed as $(P1)$ under the additional constraints enforced through (\ref{eqn:reduced}).

\noindent\textit{The direction $(P1)\geq(P2)$:} Given an optimal solution in $(P1)$, we can find the following assignment of $p_\#,p_0,\ldots,p_{K-1}$:
	\begin{align}
		p_j&=\frac{1}{NKs_j}\sum_{k=1}^{K}\sum_{\pi\in \cP}\sum_{f\in\cF_j} p_{(f)}^{k,\pi}, \quad j\in [0:K-1],   \label{def-p-j-proof}\\
		p_\# &= \frac{1}{NK}\sum_{n=1}^N\sum_{k=1}^K p_{(\#)}^{k,n},\label{def-p-sharp-proof}
	\end{align}
	With the relation (\ref{def-p-j-proof}), we have 
	$p_j \geq 0$ for any $j \in [0:K-1]$, and moreover, 
	\begin{align}
       &\sum_{j=0}^{K-1} Ns_jp_j + Np_\#\ \notag \\
       = &\frac{1}{K}\left[\sum_{k=1}^{K}\left(\sum_{\pi\in \cP}\sum_{f\in\cF}p_{(f)}^{k,\pi} +\sum_{n=1}^N p_{(\#)}^{k,n}\right)\right]=1, \label{prob-constraint-proof}
	\end{align}
	and 
	\begin{align}
	    &\frac{N - (Np_\#+ Np_0)}{N - 1}\notag\\
	    \leq & \frac{N- \frac{1}{K}\sum_{n=1}^N\sum_{k=1}^K p_{(\#)}^{k,n}-\frac{1}{K}\sum_{k=1}^{K}\sum_{\pi\in \cP} p_{(0)}^{k,\pi}}{N-1} \label{D-constraint-proof-1} \\
	    \leq & \frac{\frac{1}{K}\sum_{k=1}^K\left(N- \sum_{n=1}^N p_{(\#)}^{k,n}-\sum_{\pi} p_{(0)}^{k,\pi}\right)}{N-1}\leq D, \label{D-constraint-proof-2}
	\end{align}
	due to the last set of constraints in (P1). 
	
	Therefore, $\{p_\#,p_0,\ldots,p_{K-1}\}$ indeed satisfies the constraints in problem $P2$. It remains to show that this assignment leads to a lower objective function value in $(P2)$ than the optimal value of $(P1)$. For this purpose, we write the inequalities (\ref{permute-reduce-3}-\ref{eqn:maxs}).

\begin{figure*}[!t]
\setcounter{MYtempeqncnt}{\value{equation}}
\hrulefill
		\begin{align}
			&\frac{1}{N}\sum_{n=1}^N 2^{\Lc(M \rightarrow Q^{[M]}_n)}\notag\\
			&= \frac{1}{N}\sum_{n=1}^{N}\left[ \sum_{k=1}^K p_{(\#)}^{k,n}+\max_k\left(\sum_{j\in [1:N]: j\neq n} p_{(\#)}^{k,j}	+\sum_{\pi:\phi_{n}^*(k,(0,\pi))=0}p_{(0)}^{k,\pi}\right)+\sum_{\|q\|\neq 0}\max_{k} \left(\sum_{(f,\pi):\phi_{n}^*(k,(f,\pi))=q}p_{(f)}^{k,\pi}\right)\right]    \label{permute-reduce-3} \\
			&\geq \frac{1}{N}\sum_{n=1}^{N}\sum_{k=1}^K p_{(\#)}^{k,n}+ \max_k\left(\frac{1}{N}\sum_{n=1}^N\left(\sum_{j\in [1:N]: j\neq n} p_{(\#)}^{k,j}+\sum_{\pi:\phi_{n}^*(k,(0,\pi))=0}p_{(0)}^{k,\pi}\right)\right)\notag\\
			&\qquad\qquad \qquad\qquad +\sum_{\|q\|\neq 0}\max_{k} \left(\frac{1}{N}\sum_{n=1}^{N}\left(\sum_{(f,\pi):\phi_{n}^*(k,(f,\pi))=q}p_{(f)}^{k,\pi}\right)\right)   \label{permute-reduce-4} \\
			& = Kp_{\#}+\underbrace{
			\max_k\left(\frac{N-1}{N}\sum_{n=1}^Np_{(\#)}^{k,n}+\frac{1}{N}\sum_{n=1}^N\sum_{\pi:\phi_{n}^*(k,(0,\pi))=0}p_{(0)}^{k,\pi}\right)}_{T_1}\notag\\
			&\qquad+ \underbrace{\sum_{\|q\|=K}\max_{k} \left(\frac{1}{N}\sum_{n=1}^{N}\left(\sum_{(f,\pi):\phi_{n}^*(k,(f,\pi))=q}p_{(f)}^{k,\pi}\right)\right)}_{T_2}+\underbrace{\sum_{j=1}^{K-1}\sum_{\|q\|=j}\max_{k} \left(\frac{1}{N}\sum_{n=1}^{N}\left(\sum_{(f,\pi):\phi_{n}^*(k,(f,\pi))=q}p_{(f)}^{k,\pi}\right)\right)}_{\sum_{j=1}^{K-1}T_3(j)}.\label{eqn:maxs}
			\end{align}
\hrulefill
\setcounter{equation}{\value{MYtempeqncnt}+3}
\end{figure*}

Let us consider the last three terms in (\ref{eqn:maxs}) individually. For the first term, observe that: 
\begin{align}
T_1&\geq \frac{1}{K}\sum_{k=1}^K\left(\frac{N-1}{N}\sum_{n=1}^Np_{(\#)}^{k,n}+\frac{1}{N}\sum_{n=1}^N\sum_{\pi:\phi_{n}^*(k,(0,\pi))=0}p_{(0)}^{k,\pi}\right) \notag \\
&=(N-1)p_{\#}+\frac{1}{KN}\sum_{k=1}^K\sum_{n=1}^N\sum_{\pi:\phi_{n}^*(k,(0,\pi))=0}p_{(0)}^{k,\pi}\\
&=(N-1)p_{\#}+\frac{1}{KN}\sum_{k=1}^K \sum_{\pi\in \cP} p_{(0)}^{k,\pi}\label{eqn:count0}\\
&=(N-1)p_{\#}+p_0,
\end{align}
 where the inequality is due to the convexity of the max function, and (\ref{eqn:count0}) is by the fact that for each $k$, each permutation is counted exactly once in the summation $\sum_{n=1}^N\sum_{\pi:\phi_{n}^*(k,(0,\pi))=0}$. To see that latter, observe that each $\pi$ must map $f=0$ to a query $q=0$ at one and only one of the servers. 
 In a similar manner
 \begin{align}
 T_2 &\geq      	\sum_{\|q\|=K} \frac{1}{K}\sum_{k=1}^{K} \left(\frac{1}{N}\sum_{n=1}^{N}\left(\sum_{(f,\pi):\phi_{n}^*(k,(f,\pi))=q}p_{(f)}^{k,\pi}\right)\right)\notag\\
 &=\frac{1}{KN} \sum_{k=1}^{K}  \sum_{\|q\|=K} \sum_{n=1}^{N} \sum_{(f,\pi):\phi_{n}^*(k,(f,\pi))=q}p_{(f)}^{k,\pi}\\
 &=\frac{N-1}{KN} \sum_{k=1}^{K} \sum_{\|f\|=K-1}\sum_{\pi}  p_{(f)}^{k,\pi}\label{eqn:countK}\\
 &= t_Kp_{K-1}=t_K\max(p_{K-1},p_K),
 \end{align}
 where the inequality is again due to the convexity of the max function.  To see (\ref{eqn:countK}), let us introduce the notation
 \begin{align}
 q|k = (q_1,q_2,\ldots,q_{k-1},q_{k+1},\ldots,q_{K}),
 \end{align}
 i.e., the query vector with the $k$-th symbol removed. Then 
 \begin{align}
 \sum_{n=1}^{N} \sum_{(f,\pi):\phi_{n}^*(k,(f,\pi))=q}p_{(f)}^{k,\pi}= \sum_{\pi}p_{(q|k)}^{k,\pi},
 \end{align}
 because for that fixed $q$, the corresponding $f$ is fixed, and for each $\pi$, there is one and only one $n$ such that $\phi_{n}^*(k,(f,\pi))=q$ holds. Therefore, 
 \begin{align}
&\sum_{\|q\|=K} \sum_{n=1}^{N} \sum_{(f,\pi):\phi_{n}^*(k,(f,\pi))=q}p_{(f)}^{k,\pi}\notag\\
&\qquad= \sum_{\|q\|=K}\sum_{\pi}p_{(q|k)}^{k,\pi}=(N-1)\sum_{\|f\|=K-1}\sum_{\pi}p_{(f)}^{k,\pi},
\end{align} 
because for each $k$, each $f$ with $\|f\|=K-1$ corresponds to exactly $N-1$ queries with $\|q\|=K$. 

For the last term, consider a fixed $j$, and then
	
		\begin{align}
			T_3(j)&\geq \sum_{\|q\|=j}\max_{k}\left(\frac{1}{N}\sum_{n=1}^{N}\left(\sum_{(f,\pi):\phi_{n}^*(k,(f,\pi))=q}p_{(f)}^{k,\pi}\right)\right)  \nonumber \\
			&=\frac{1}{N}\sum_{\|q\|=j}\max_{k} \sum_{\pi}p_{(q|k)}^{k,\pi}\label{eqn:countj}\\
			&\geq \frac{1}{N}\sum_{\|q\|=j}\max\left\{\frac{1}{j}\sum_{k=1}^{K}\mathds{1}(\|q|k\|=j-1)\sum_{\pi} p_{(q|k)}^{k,\pi},\right.  \nonumber  \\
			&\qquad \left.\frac{1}{K-j}\sum_{k=1}^{K}\mathds{1}(\|q|k\|=j)\sum_{\pi} p_{(q|k)}^{k, \pi}\right\}  \label{permute-reduce-9-2}  \\
			&\geq \frac{1}{N}\max\left\{\frac{1}{j}\sum_{k=1}^{K}\sum_{\|q\|=j}\mathds{1}(\|q|k\|=j-1)\sum_{\pi}p_{(q|k)}^{k,\pi},\right.  \nonumber  \\
			&\qquad \left.\frac{1}{K-j}\sum_{k=1}^{K}\sum_{\|q\|=j}\mathds{1}(\|q|k\|=j)\sum_{\pi}p_{(q|k)}^{k,\pi}\right\}  \label{permute-reduce-9-3}  \\
			&= \frac{1}{N}\max\left\{\frac{(N-1)}{j}\sum_{k=1}^{K}\sum_{\|{f}\|=j-1}\sum_{\pi}p_{(f)}^{k,\pi}\right., \nonumber \\
			& \qquad\qquad \left.\frac{1}{K-j}\sum_{k=1}^{K}\sum_{\|f\|=j}\sum_{\pi} p_{(f)}^{k,\pi}\right\}  \label{permute-reduce-9-4}  \\
			&=\frac{1}{N}\max\left\{\frac{t_j}{Ks_{j-1}}\sum_{k=1}^{K}\sum_{\|f\|=j-1}\sum_{\pi} p_{(f)}^{k,\pi}\right., \nonumber \\
			& \qquad\qquad \left.\frac{t_j}{Ks_j}\sum_{k=1}^{K}\sum_{\|f\|=j}\sum_{\pi} p_{(f)}^{k,\pi}\right\}  \label{permute-reduce-9-5}  \\
			&=t_j\cdot \max\left\{p_{j-1},p_j\right\},  \label{permute-reduce-9-6} 
		\end{align}
		where \eqref{permute-reduce-9-2} follows from $|\{f:f=q|k,\|f\|=j-1\}|=j$ and $|\{f:f=q|k,\|f\|=j\}|=K-j$ for a given $q$, and $\mathds{1}(\cdot)$ is the indicator function, 
		\eqref{permute-reduce-9-3} follows from the convexity of the max function, 
		\eqref{permute-reduce-9-4} follows by counting the number of $f$ for each $q$, 
		\eqref{permute-reduce-9-5} follows from $\frac{t_j}{s_{j-1}}=\frac{K(N-1)}{j}$ and $\frac{t_j}{s_j}=\frac{K}{K-j}$, 
		and \eqref{permute-reduce-9-6} follows by the assignment of $p_j$ in \eqref{def-p-j-proof}. This proves the inequality $(P1) \geq (P2)$, and the proof is complete.
  
\end{proof}

\section{Proof of Theorem \ref{theorem:homo}}
\label{sec:proveTh2}

\begin{proof}
The proof follows a similar line as that in \cite{Zhou2020a}, however, with the additional escape $\#$ retrieval patterns, the construction of the dual variables becomes considerably more complex. 

First define $\hat{p}=1-D+D/N$, where $\hat{p}\in [N^{-K},N^{-1}]$. We can rewrite the problem (P2) as a linear program denoted as $P2'$
\begin{mini}[2]
  {\substack{p_\#,p_0,\ldots,\\p_{K-1},m_1,\ldots,m_K}}{\sum_{j=1}^K t_jm_j +p_0+(N+K-1)p_{\#}}{}{}
  \addConstraint{p_\#,p_0, p_1,\ldots,p_{K-1}}{\geq 0}{}
  \addConstraint{N p_{\#}+\sum_{j=0}^{K-1}Ns_j p_j}{=1}{}
  \addConstraint{p_{j-1}-m_j}{\leq 0,\quad\forall j\in [1:K]}{}
  \addConstraint{p_{j}-m_j}{\leq 0,\quad\forall j\in [1:K]}{}
  \addConstraint{p_0+p_\#}{\geq \hat{p}.}{}
\end{mini}

The Lagrangian is
	\begin{align}
		&\mathscr{L} = p_0+(N+K-1)p_{\#} + \sum_{j=1}^K t_j m_j - \sum_{j=0}^{K-1} \eta_j p_j \notag\\
		&\quad-\eta_\# p_\#+ \mu\left(Np_\#+ \sum_{j=0}^{K-1} N s_j p_j -1 \right)   \nonumber \\
		&\quad+ \sum_{j=1}^K \left[ \lambda_j(p_j - m_j) + \mu_j(p_{j-1} - m_j) \right] \notag \\
            &\quad+ \lambda(\hat{p} - p_0-p_\#). 
	\end{align}
	
	Then we can write the KKT conditions as follows:
	\begin{enumerate}
		\item stationarity: 
		\begin{numcases}{}
		      (N-1+K)-\eta_\#+N\mu-\lambda=0\label{eqn:dpsharp}\\
			1 - \eta_0 + N\mu + \mu_1 - \lambda = 0   \label{eqn:dp0}\\
			-\eta_j +N\mu s_j + \lambda_j + \mu_{j+1} = 0, ~ j\in[1:K-1]   \label{eqn:dpj}\\
			t_j -\lambda_j - \mu_j = 0, ~ j\in[1:K],   \label{eqn:dmj}
		\end{numcases}
		
		\item primal feasibility: 
		\begin{numcases}{}
			Np_\#+N\sum_{j=0}^{K-1} s_j p_j -1  = 0 \label{eqn:peq} \\
			0 \leq p_{\#}, 0 \leq p_{j-1},p_{j} \leq m_j,  ~ j\in[1:K]   \label{eqn:pineq}\\
			p_0+p_\# \geq \hat{p},
		\end{numcases}
		
		\item dual feasibility: 
		\begin{numcases}{}
			\lambda \geq 0,~\eta_\#\geq 0 \\
			\lambda_j \geq 0,~\mu_j \geq 0,~ \eta_{j-1} \geq 0,~ j\in[1:K],   \label{eqn:dual} 
		\end{numcases}
		
		\item complementary slackness: 
		\begin{numcases}{}
		       \eta_\# p_\#=0 \label{eqn:spsharp} \\
			\eta_j p_j = 0, ~j\in[1:K-1]    \label{eqn:spj} \\
			\lambda_j(p_j - m_j) =0,~j\in[1:K]    \label{eqn:smj}\\
			\mu_j(p_{j-1} - m_j) = 0,~j\in[1:K]     \\
			\lambda(\hat{p}- p_0-p_\#) = 0.   \label{eqn:sp0}
		\end{numcases}
	\end{enumerate}
	Since the problem is linear and feasible, any solution to the KKT conditions is optimal. Therefore, finding such a solution whose primal variables yield the values in Theorem \ref{theorem:homo} would conclude the proof. We claim that such a solution to the KKT conditions is as follows:
	\begin{enumerate}[1)]
		\item primal variables: 
		\begin{numcases}{}
			p_\#  =  \frac{N^{K}\hat{p}-1}{N^K-N} \label{eqn:psharp_value} \\
			p_j  = \frac{1-Np_\#}{N^{K}},~ j\in[0:K-1]\\
			m_j = p_{j-1}, ~ j\in[1:K], 
		\end{numcases}
		
		\item dual variables: 
		\begin{numcases}{}
		      \eta_\# =0 \\
			\eta_j = 0, ~ j \in[0:K-1]\\
			\lambda = \frac{N^{K-1}(K-1)}{N^{K-1}-1}\\
			\mu  =  \frac{N^{K-2}(K-1)}{N^{K-1}-1}-\frac{N+K-1}{N} \label{def:mu} \\
			\lambda_j  = \sum_{i=0}^{j}t_i-(N+K-1)\sum_{i=0}^{j-1}s_i+ \lambda \sum_{i=1}^{j-1}s_i, \notag\\
			\qquad\qquad\qquad\qquad\qquad\qquad j\in[1:K] \\
			\mu_j =  -\sum_{i=0}^{j-1}t_i+(N+K-1)\sum_{i=0}^{j-1}s_i- \lambda \sum_{i=1}^{j-1}s_i, \notag\\
			\qquad\qquad\qquad\qquad\qquad\qquad j\in[1:K]   
		\end{numcases}
	\end{enumerate}	
	It remains to show that the solution above indeed satisfies all the KKT conditions. First note that with the given assignment, $\lambda_K=0$. With this observation, the conditions of stationarity, primal feasibility, and complementary slackness can be verified by simply plugging in the variable assignments. For dual feasibility, observe that $\lambda\geq0$, and it remains to show that $\lambda_k\geq 0$ and $\mu_k\geq $ for $j=1,2,\ldots,K$. This is established in Lemma \ref{lem:mono} below. By Proposition \ref{prop:DL}, with homogeneous trustworthy $\gamma_1=\ldots=\gamma_n=\gamma$, it is straightforward to verify the primal variable assignment indeed leads to the surrogate leakage
    \begin{align}
         &\rho_{(\text{Max-L})} =\sum^{N}_{n=1}\gamma \left( \sum_{j=1}^K t_jm_j +(p_0+(N-1)p_{\#})+Kp_{\#} \right) \notag \\
         = &N\gamma \left( 1+\frac{(K-1)\left[ N^{K-1}(N-(N-1)D)-1\right]  }{N^{K}-N} \right)
    \end{align}
    Now it remains to show that the allocation in Theorem \ref{theorem:allocations} provides us with the same optimal surrogate maximal leakage given above. By assigning $\hat{p}_{\#} = Np_{\#}$, we have
    \begin{align}
        \rho_{(\text{Max-L})} & = \sum_{n=1}^N \gamma 2^{\Lc(M \rightarrow Q^{[M]}_n)}\\
        & = \gamma\left( N^{K-1}(\frac{1-\hat{p}_{\#}}{N^{K-1}})+K\hat{p}_{\#} \right) \notag \\
        &\qquad + (N-1)\gamma \left( N^{K-1}(\frac{1-\hat{p}_{\#}}{N^{K-1}})+\hat{p}_{\#} \right) \\
        & = \gamma \left( 1+(K-1)\hat{p}_{\#} \right) +(N-1)\gamma \\ 
        & = N\gamma + \gamma(K-1)\hat{p}_{\#} \\
        & =  N \gamma \left(1+\frac{(K-1)\left[ N^{K-1}(N-(N-1)D)-1\right]  }{N^{K}-N} \right)  
    \end{align}
    This completes the proof.
    
	\end{proof}
	
	\begin{lemma}\label{lem:mono}
		The solution given above satisfies
		\begin{align*}
			&\frac{(N+K-1)\sum_{i=0}^{j-1}s_i- \lambda \sum_{i=1}^{j-1}s_i}{\sum_{i=0}^{j-1} t_i } \geq 1 \notag\\
			&\quad\geq \frac{(N+K-1)\sum_{i=0}^{j-1}s_i- \lambda \sum_{i=1}^{j-1}s_i}{\sum_{i=0}^{j} t_i }, ~ j\in[1:K]. 
		\end{align*}
	\end{lemma}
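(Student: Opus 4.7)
The plan is to first rewrite the two claimed inequalities in a form that exposes them as the dual-feasibility conditions $\mu_j \ge 0$ and $\lambda_j \ge 0$ already set up in the proof of Theorem \ref{theorem:homo}. Let $A_j := (N+K-1)\sum_{i=0}^{j-1} s_i - \lambda\sum_{i=1}^{j-1} s_i$ denote the common numerator and $T_j := \sum_{i=0}^{j} t_i$. Then the left inequality becomes $A_j \ge T_{j-1}$ (i.e., $\mu_j \ge 0$) and the right one becomes $A_j \le T_j$ (i.e., $\lambda_j \ge 0$). So the task reduces to establishing non-negativity of the two sequences $\{\lambda_j\}_{j=1}^K$ and $\{\mu_j\}_{j=1}^K$.

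My approach is to show that both sequences are \emph{unimodal} on $[1:K]$, i.e.\ first non-decreasing and then non-increasing, so that their minima are attained at the endpoints $j=1$ or $j=K$. Setting $\alpha := N+K-1-\lambda$, a one-line telescoping gives $A_{j+1}-A_j = \alpha\, s_j$ for every $j \ge 1$, which combined with the elementary ratio identities $t_{j+1}/s_j = K(N-1)/(j+1)$ and $t_j/s_j = K/(K-j)$ yields
\begin{align*}
\lambda_{j+1}-\lambda_j &= t_{j+1} - \alpha s_j = s_j\left[\tfrac{K(N-1)}{j+1} - \alpha\right],\\
\mu_{j+1}-\mu_j &= \alpha s_j - t_j = s_j\left[\alpha - \tfrac{K}{K-j}\right].
\end{align*}
Since $K(N-1)/(j+1)$ is strictly decreasing in $j$ and $K/(K-j)$ strictly increasing, each forward-difference sequence changes sign at most once, and necessarily from positive to negative; this is precisely unimodality of $\lambda_j$ and $\mu_j$.

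Endpoint verification is then elementary. At $j=1$, $\lambda_1 = 1 + K(N-1) - (N+K-1) = (K-1)(N-1) \ge 0$ and $\mu_1 = -1 + (N+K-1) = N+K-2 \ge 0$. At $j=K$, the defining identity $\lambda\,(N^{K-1}-1) = (K-1)N^{K-1}$ gives $A_K = (N+K-1)N^{K-1} - (K-1)N^{K-1} = N^K$, whence $\lambda_K = 0$ and $\mu_K = N^K - (N^K - (N-1)^K) = (N-1)^K \ge 0$. Unimodality combined with non-negative endpoints forces $\lambda_j, \mu_j \ge 0$ for every $j \in [1:K]$, which is exactly the claim of the lemma.

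I expect the main step, rather than a genuine obstacle, to be identifying the correct structural property: neither sequence is monotone in $j$ in general — for instance $\lambda_j$ can rise above $\lambda_1$ before descending back to $\lambda_K = 0$ — and neither convexity nor concavity of $\lambda_j$ or $\mu_j$ as a function of $j$ is tight enough on its own. The right tool is the sign-monotonicity of the forward differences, which in turn comes from the clean dependence of the ratios $t_j/s_j$ and $t_{j+1}/s_j$ on $j$; once this observation is made, the remaining algebra is routine.
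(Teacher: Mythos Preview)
Your proof is correct and takes a genuinely different route from the paper's. The paper argues via a mediant-inequality style analysis: it writes the left- and right-hand ratios as $\frac{\sum_{i\le j} L_i}{\sum_{i\le j}\hat L_i}$ and $\frac{\sum_{i\le j} R_i}{\sum_{i\le j}\hat R_i}$ respectively, shows the incremental ratios $L_j/\hat L_j$ and $R_j/\hat R_j$ are monotone in $j$ (using exactly the same binomial ratio identities you found), and then invokes the mediant property $\min(a/b,c/d)\le\frac{a+c}{b+d}\le\max(a/b,c/d)$ to bound the cumulative ratios by their value at $j=K$, which equals $1$. Your approach instead works directly with the sequences $\lambda_j$ and $\mu_j$ themselves, proving unimodality from the sign-monotonicity of their forward differences and then checking the four endpoint values. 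This is more transparent: it avoids splitting the $j=1$ term from $j\ge2$ and avoids tracking where $R_1/\hat R_1$ sits among the $R_j/\hat R_j$, which the paper has to do explicitly. Both arguments ultimately rest on the same combinatorial identities $t_{j+1}/s_j=K(N-1)/(j+1)$ and $t_j/s_j=K/(K-j)$.

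One small arithmetic slip: $\lambda_1 = 1 + K(N-1) - (N+K-1) = KN - 2K - N + 2 = (K-1)(N-2)$, not $(K-1)(N-1)$. This does not affect the argument, since $(K-1)(N-2)\ge 0$ for the standing assumptions $K\ge 2$, $N\ge 2$.
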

	
        \begin{proof}
            
    It is straightforward to verify that $(N+K-1-\lambda)\geq 0$ whenever $N\geq 2$ and $K\geq 1$. When $j=1$, let $R_{1} = (N+K-1)s_{0} $ and $\hat{R}_{1} = t_0 + t_1 $, the RHS function satisfies	      
	      \begin{align*}
			\frac{R_{1}}{\hat{R}_{1}}=\frac{(N+K-1)}{1+K(N-1)} \leq 1,
		\end{align*}
		when $N\geq 2$ and $K\geq 1$. Next, notice that the function below is monotonically increasing with respect to $j$ for $j\in[2:K] $, because 
		\begin{align}
			&\frac{R_{j}}{\hat{R}_{j}} = \frac{(N+K-1-\lambda)s_{j-1}}{t_{j}} \notag\\
			= &\frac{(N+K-1-\lambda)\binom{K-1}{j-1}(N-1)^{j-1}}{\binom{K}{j}(N-1)^{j} } \notag\\
			= &\frac{(N+K-1-\lambda)j}{K(N-1)},  ~ j\in[2:K].
		\end{align}
        Observe that when $j = K$, 
            \begin{align}
			\frac{R_{K}}{\hat{R}_{K}} = \frac{(N+K-1-\lambda)K}{K(N-1)} \geq 1. 
		\end{align}
        Clearly there exists a $j^{\ast} \in [2:K-1]$ s.t. 
        \begin{align}
            \frac{R_{2}}{\hat{R}_{2} } \leq \cdots \leq \frac{R_{j^{\ast }}}{\hat{R}_{j^{\ast }} } \leq \frac{R_{1}}{\hat{R}_{1} } \leq \frac{R_{j^{\ast }+1}}{\hat{R}_{j^{\ast }+1} } \leq \cdots \leq \frac{R_{K}}{\hat{R}_{K} }. 
        \end{align}
        For $j \leq j^{\ast}$, the RHS inequality holds because 
        \begin{align}
			RHS = \frac{R_{1}+\cdots+R_{j}}{\hat{R}_{1}+\cdots+\hat{R}_{j}} \leq \frac{R_{1}}{\hat{R}_{1} } \leq 1, ~j\in[1:j^{\ast}].
		\end{align}
       On the other hand, for $j > j^{\ast}$, the right-hand function is monotonically increasing with respect to $j$ since
        \begin{align}
            \frac{R_{1}+\cdots+R_{j-1}}{\hat{R}_{1}+\cdots+\hat{R}_{j-1}} \leq \frac{R_{1}+\cdots+R_{j}}{\hat{R}_{1}+\cdots+\hat{R}_{j}} \leq \frac{R_{j}}{\hat{R}_{j} },~j\in[j^{\ast}+1:K].
        \end{align}
	It follows that the RHS inequalities hold also for $j > j^{\ast}$, since the RHS reaches its maximum when $j = K$ and
		\begin{align}
			&\frac{(N+K-1)\sum_{i=0}^{j-1}s_i- \lambda \sum_{i=1}^{j-1}s_i}{\sum_{i=0}^{j} t_i } \notag\\
			\leq & \frac{(N+K-1)\sum_{i=0}^{K-1}s_i- \lambda \sum_{i=1}^{K-1}s_i}{\sum_{i=0}^{K} t_i }= 1, ~j\in[1:K]. 
		\end{align}		
		
	For the LHS, the function below is monotonically decreasing with respect to $j \in [1:K]$, since 
		\begin{align}
			&\frac{L_{j}}{\hat{L}_{j}} = \frac{(N+K-1-\lambda)s_{j-1}}{t_{j-1}} \notag\\
			= &\frac{(N+K-1-\lambda)\binom{K-1}{j-1}(N-1)^{j-1}}{\binom{K}{j-1}(N-1)^{j-1} } \notag\\
			= &\frac{(N+K-1-\lambda)(K-j+1)}{K}, ~ j\in[2:K].
		\end{align}
        and 
            \begin{align}
			&\frac{L_1}{\hat{L}_{1}} = \frac{(N+K-1)s_{0}}{t_0} =(N+K-1) \notag\\
			\geq &\frac{L_2}{\hat{L}_{2}} = \frac{(N+K-1-\lambda)(K-1)}{K}. 
		\end{align}
        The left-hand function is decreasing in $j$ because 
        \begin{align}
            \frac{L_{1}+\cdots +L_{j-1}}{\hat{L}_{1} +\cdots +\hat{L}_{j-1} } \geq \frac{L_{1}+\cdots +L_{j}}{\hat{L}_{1} +\cdots +\hat{L}_{j} } \geq \frac{L_{j}}{\hat{L}_{j} }, ~  j\in[2:K]. 
        \end{align}
	Therefore, the LHS inequalities hold because the function on the left-hand side reaches its minimum when $j = K $ and 
		\begin{align}
		&\frac{(N+K-1)\sum_{i=0}^{j-1}s_i- \lambda \sum_{i=1}^{j-1}s_i}{\sum_{i=0}^{j-1} t_i }\notag\\
		\geq &\frac{(N+K-1)\sum_{i=0}^{K-1}s_i- \lambda \sum_{i=1}^{K-1}s_i}{\sum_{i=0}^{K-1} t_i }\notag\\
		=&\frac{(N+K-1)N^{K-1}-\lambda(N^{K-1}-1)}{N^K-(N-1)^K}\notag\\
		=&\frac{N^K}{N^K-(N-1)^K}>1, ~ \forall j\in[1:K].
		\end{align}
		This completes the proof.
  
        \end{proof}

\section{Proof of Proposition \ref{prop:P3P4}}
\label{proof:P3P4}
\begin{proof}
\noindent\textit{The direction (P3)$\leq$(P4):} Proof of this direction is trivial since $({P4})$ can be viewed as $({P3})$ with additional constraints enforced through (\ref{eqn:reduced}).

\noindent\textit{The direction (P3)$\geq$(P4):} Given an optimal solution in $({P3})$, we can find the assignment of $p_{\# },p_{0},p_{1},\ldots ,p_{K-1}$:
\begin{align}
    p_{\# } &=\frac{1}{NK} \sum^{K}_{k=1} \sum^{N}_{n=1} p^{k,n}_{(\#) },  \label{def-p-sharp} \\
    p_{j} &=\frac{1}{NKs_{j}} \sum^{K}_{k=1} \sum_{\pi \in \Pc} \sum_{f\in \Fc_{j}} p^{k,\pi }_{(f)}, j\in[0:K-1].  \label{def-p-j}
\end{align}

The constraints in problem $({P4})$ can be easily verified using the same nature in the Max-L setting (Eq. \ref{prob-constraint-proof}-\ref{D-constraint-proof-2}). We only need to show that $({P4})$ using this assignment gives a lower bound on the objective function in $({P3})$. We firstly introduce the notation
\begin{align}
    p_{n}(0|k) &\triangleq \sum_{j\in [1:N]:j\neq n} p^{k,j}_{(\# )}+\sum_{\pi :\phi^{\ast }_{n} \left( k,(0,\pi )\right)  =0} p^{k,\pi }_{(0)}, \\
    p_{n}(q|k) &\triangleq \sum_{(f,\pi ):\phi^{\ast }_{n} \left( k,(f,\pi )\right)  =q} p^{k,\pi }_{(f)}
\end{align}

The homogeneous MI leakage can be written as:
\begin{align}
    &\rho_{(\text{MI})} =  \frac{1}{N} \sum^{N}_{n=1} \text{MI} \left( M\rightarrow Q^{[M]}_{n}\right)  \\
    &=  \frac{1}{N} \sum^{N}_{n=1} \sum_{k=1}^K\sum_{q=\# \& q\in \Qc} \frac{p_{n}(q|k)}{K} \notag\\
    &\qquad\qquad\cdot\bigg{\{} H(M) 
     -H(M|Q^{[M]}_{n}=q) \bigg{\}} \\
    &= p_{\# }\log K+ \frac{1}{N} \sum^{N}_{n=1} \sum_{k=1}^K\sum_{\| q\| =0} \frac{p_{n}(0|k)}{K}\notag\\
    &\qquad\qquad\cdot\bigg{\{} H(M)-H(M|Q^{[M]}_{n}=0)\bigg{\}} \notag \\
    &\qquad + \frac{1}{N} \sum^{N}_{n=1} \sum_{k=1}^K\sum_{\| q\| >0} \frac{p_{n}(q|k)}{K} 
     \notag\\
     &\qquad\qquad\cdot\bigg{\{} H(M)-H(M|Q^{[M]}_{n}=q)\bigg{\}} \\
    &\geq p_{\# }\log K \notag\\
    &+\underbrace{ \frac{1}{N} \sum^{N}_{n=1} \sum_{k=1}^K\sum_{\| q\| >0} \frac{p_{n}(q|k)}{K}\bigg{\{} H(M) 
    -H(M|Q^{[M]}_{n}=q)\bigg{\}} }_{\mathcal{T}},
\end{align}
where the second eqaulity is obtained by splitting the queries into those of the type $\#$ and otherwise, and the inequality is because the uniform distribution maximizes the entropy. On the other hand, for the last term $\Tc$, we have
\begin{align}
    \Tc &= \frac{1}{K} \sum^{K}_{j=1} \sum_{\| q\| =j} \bigg{\{} \frac{1}{N} \sum^{N}_{n=1} \sum^{K}_{k=1} p_{n}(q|k)\log \frac{Kp_{n}(q|k)}{\sum^{K}_{k^{\prime }=1} p_{n}(q|k^{\prime })} \bigg{\}} \label{permute-mi-T} \\
    &\geq \frac{1}{K} \sum^{K}_{j=1} \sum_{\| q\| =j} \bigg{\{} \frac{1}{N} \sum^{K}_{k=1} \left(\sum^{N}_{n=1} p_{n}(q|k)\right) \notag \\
    &\qquad \log \left(\frac{K\sum^{N}_{n=1} p_{n}(q|k)}{\sum^{N}_{n=1} \sum^{K}_{k^{\prime }=1} p_{n}(q|k^{\prime })}\right) \bigg{\}} \label{permute-mi-convexity}  \\
    &\geq \frac{1}{K} \sum^{K}_{j=1} \bigg{\{} \frac{1}{N}\sum^{K}_{k=1}\sum_{\| q\| =j} \mathds{1}\left( \| q|k\| =j-1\right) \notag \\
    &\qquad\qquad\qquad \sum_{\pi } p^{k,\pi }_{(q|k)}\log \frac{K\sum_{\pi } p^{k,\pi }_{(q|k)}}{\sum^{K}_{k^{\prime }=1} \sum_{\pi } p^{k,\pi }_{(q|k^{\prime })}} \bigg{\}}  \\
    & \quad + \frac{1}{K} \sum^{K}_{j=1} \bigg{\{} \frac{1}{N}\sum^{K}_{k=1}\sum_{\| q\| =j} \mathds{1}\left( \| q|k\| =j\right) \notag \\
    &\qquad\qquad\qquad \sum_{\pi } p^{k,\pi }_{(q|k)}\log \frac{K\sum_{\pi } p^{k,\pi }_{(q|k)}}{\sum^{K}_{k^{\prime }=1} \sum_{\pi } p^{k,\pi }_{(q|k^{\prime })}} \bigg{\}} \label{permute-mi-1} \\
    & = \frac{1}{NK} \sum^{K}_{j=1} \bigg{\{} (N-1) \sum^{K}_{k=1} \sum_{\| f\| =j-1} \sum_{\pi } p^{k,\pi }_{(f)} \notag \\
    &\qquad\qquad\qquad\qquad\qquad \log \frac{K\sum_{\pi } p^{k,\pi }_{(f)}}{\sum^{K}_{k^{\prime }=1} \sum_{\pi } p^{k,\pi }_{(q|k^{\prime })}} \bigg{\}}  \\
    & \quad + \frac{1}{NK} \sum^{K}_{j=1} \bigg{\{} \sum^{K}_{k=1} \sum_{\| f\| =j} \sum_{\pi } p^{k,\pi }_{(f)} \notag \\
    &\qquad\qquad\qquad\qquad\qquad \log \frac{K\sum_{\pi } p^{k,\pi }_{(f)}}{\sum^{K}_{k^{\prime }=1} \sum_{\pi } p^{k,\pi }_{(q|k^{\prime })}} \bigg{\}}  \\
    & \geq \frac{1}{K} \sum^{K}_{j=1} \bigg{\{} \frac{t_{j}j}{NKs_{j-1}} \sum^{K}_{k=1} \sum_{\| f\| =j-1} \sum_{\pi } p^{k,\pi }_{(f)} \notag \\
    &\qquad\qquad\qquad \log \frac{K\frac{1}{NKs_{j-1}} \sum^{K}_{k=1} \sum_{\pi } \sum_{f} p^{k,\pi }_{(f)}}{\sum^{K}_{k^{\prime }=1} \sum_{\pi } p^{k,\pi }_{(q|k^{\prime })}} \bigg{\}} \\
    & \quad + \frac{1}{K} \sum^{K}_{j=1} \bigg{\{} \frac{t_{j}(K-j)}{NKs_{j}} \sum^{K}_{k=1} \sum_{\| f\| =j} \sum_{\pi } p^{k,\pi }_{(f)} \notag \\
    &\qquad\qquad\qquad \log \frac{K\frac{1}{NKs_{j}} \sum^{K}_{k=1} \sum_{\pi } \sum_{f} p^{k,\pi }_{(f)}}{\sum^{K}_{k^{\prime }=1} \sum_{\pi } p^{k,\pi }_{(q|k^{\prime })}} \bigg{\}}   \label{permute-mi-2} \\
    & = \frac{1}{K}\sum^{K}_{j=1} t_{j} \bigg{\{} jp_{j-1}\log \frac{Kp_{j-1}}{jp_{j-1}+(K-j)p_{j}} \notag \\
    &\qquad\qquad +(K-j)p_{j}\log \frac{Kp_{j}}{jp_{j-1}+(K-j)p_{j}} \bigg{\}} \\
    & =\frac{1}{K} \sum^{K}_{j=1} \begin{pmatrix}K\\ j\end{pmatrix} (N-1)^{j} \notag \\
    &\qquad \bigg{\{} jp_{j-1}\log p_{j-1} +(K-j)p_{j}\log p_{j} \notag \\
    &-\left[ jp_{j-1}+(K-j)p_{j}\right]\log \frac{jp_{j-1}+(K-j)p_{j}}{K}\bigg{\}},
\end{align}
where (\ref{permute-mi-convexity}) follows from the log sum inequality shown in Lemma \ref{lemma-convex-function-MI} below; (\ref{permute-mi-1}) follows from $|\{f:f=q|k,\|f\|=j-1\}|=j$ and $|\{f:f=q|k,\|f\|=j\}|=K-j$ for a given $q$, and $\mathds{1}(\cdot)$ is the indicator function; (\ref{permute-mi-2}) follows from Lemma \ref{lemma-concave-entropy} and counting the number of $f$ for each $q$. This proves the direction $({P3})\geq({P4})$ and completes the proof.

\end{proof}

\begin{lemma}
\label{lemma-convex-function-MI}
(Log sum inequality) The term $\Tc$ given above satisfies (\ref{permute-mi-T})$\geq$(\ref{permute-mi-convexity}).
\end{lemma}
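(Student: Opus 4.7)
The plan is to apply the log sum inequality separately for each fixed triple $(j, q, k)$ with $\|q\| = j$, and then reassemble the bound. For such a fixed triple, set
\[
a_n = p_n(q|k), \qquad b_n = \frac{1}{K}\sum_{k'=1}^{K} p_n(q|k'), \quad n \in [1:N].
\]
Both sequences are nonnegative, so the log sum inequality (a direct consequence of the convexity of $x \mapsto x\log x$, or equivalently Jensen's inequality applied to the nonnegative measure induced by the $b_n$'s) yields
\[
\sum_{n=1}^{N} a_n \log\frac{a_n}{b_n} \;\geq\; \left(\sum_{n=1}^{N} a_n\right)\log \frac{\sum_{n=1}^{N} a_n}{\sum_{n=1}^{N} b_n}.
\]
Substituting back and noting that the denominator $b_n$ absorbs the factor $K$ into the numerator of the logarithm gives exactly
\[
\sum_{n=1}^{N} p_n(q|k)\log \frac{K\, p_n(q|k)}{\sum_{k'=1}^{K} p_n(q|k')} \;\geq\; \left(\sum_{n=1}^{N} p_n(q|k)\right)\log \frac{K\sum_{n=1}^{N} p_n(q|k)}{\sum_{n=1}^{N}\sum_{k'=1}^{K} p_n(q|k')},
\]
which is the $(j,q,k)$-term of the desired inequality.

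To finish, I would sum the above bound over $k \in [1:K]$, over all $q$ with $\|q\| = j$, and over $j \in [1:K]$, and multiply by the common factor $\tfrac{1}{NK}$. Since each of these outer summations is over nonnegative weights that are identical on both sides (they do not depend on $n$), the term-by-term inequality is preserved, and the resulting aggregate is precisely the passage from line \eqref{permute-mi-T} to line \eqref{permute-mi-convexity}.

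There is no real obstacle here; the only subtlety is being careful about terms where $a_n$ or $b_n$ vanishes. These are handled by the standard convention $0\log 0 = 0$ and $0\log(0/0) = 0$, under which the log sum inequality remains valid. Note also that whenever $a_n > 0$, the quantity $b_n$ is strictly positive since the $k$-th summand $p_n(q|k) = a_n$ appears inside it, so no division-by-zero issue arises. Thus the claim follows by a single application of the log sum inequality inside the inner sum.
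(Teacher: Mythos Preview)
Your proof is correct and is essentially the same as the paper's: the paper also fixes $k$ (and implicitly $q$), then derives the log sum inequality via Jensen applied to $f(t)=t\log t$ with weights $\alpha_n = \sum_{k'} p_n(q|k')/\sum_{n,k'} p_n(q|k')$ and points $t_n = p_n(q|k)/\sum_{k'} p_n(q|k')$, which is exactly your $a_n,b_n$ formulation. Your treatment is slightly more careful in that you explicitly address the degenerate $a_n=0$ and $b_n=0$ cases, which the paper leaves implicit.
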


\begin{proof}
Notice that to show (\ref{permute-mi-T})$\geq$(\ref{permute-mi-convexity}), we only need to show for each fixed $k$
\begin{align}
    &\sum^{N}_{n=1} p_{n}(q|k)\log \frac{Kp_{n}(q|k)}{\sum^{K}_{k^{\prime }=1} p_{n}(q|k^{\prime })} \geq \notag \\
    &\left(\sum^{N}_{n=1} p_{n}(q|k)\right)\log \left(\frac{K\sum^{N}_{n=1} p_{n}(q|k)}{\sum^{N}_{n=1} \sum^{K}_{k^{\prime }=1} p_{n}(q|k^{\prime })}\right).
\end{align}
That is to show
\begin{align}
    &\sum^{N}_{n=1} p_{n}(q|k)\log \frac{p_{n}(q|k)}{\sum^{K}_{k^{\prime }=1} p_{n}(q|k^{\prime })} \geq \notag \\
    &\left(\sum^{N}_{n=1} p_{n}(q|k)\right)\log \left(\frac{\sum^{N}_{n=1} p_{n}(q|k)}{\sum^{N}_{n=1} \sum^{K}_{k^{\prime }=1} p_{n}(q|k^{\prime })}\right).
\end{align}

Let $f(t)=t\log t$. Notice that $f(t)$ is strictly convex since $f^{\prime \prime }(t)=\frac{1}{t} >0$. Hence by Jensen's inequality, we have 
\begin{align}
    \sum^{N}_{n=1} \alpha_{n} f(t_{n})\geq  f\left( \sum^{N}_{n=1} \alpha_{n} t_{n}\right),
\end{align}
for $\alpha_{n} \geq 0$, $\sum_{n} \alpha_{n} = 1$. Hence we can obtain the inequality we want by setting
\begin{align}
    \alpha_{n} &=\frac{\sum^{K}_{k^{\prime }=1} p_{n}(q|k^{\prime })}{\sum^{N}_{j=1} \sum^{K}_{k^{\prime }=1} p_{j}(q|k^{\prime })} ,\\
    t_{n} &=\frac{p_{n}(q|k)}{\sum^{K}_{k^{\prime }=1} p_{n}(q|k^{\prime })}.
\end{align}
The proof is thus completed.
\end{proof}

\begin{lemma}
\label{lemma-concave-entropy}
\begin{align*}
    \sum^{N}_{n=1} p_{n}\log \frac{p_{n}}{Z} \geq Np^{\prime }\log \frac{p^{\prime }}{Z},
\end{align*}
where $p^{\prime }=\sum^{N}_{n=1} p_{n}/N$ and $Z$ is a positive constant.
\end{lemma}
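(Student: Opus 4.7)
The plan is to recognize the inequality as a direct instance of Jensen's inequality applied to the convex function $f(t) = t\log(t/Z)$. Since $Z>0$ is a constant independent of $n$, we can write $f(t) = t\log t - t\log Z$, whose second derivative $f''(t) = 1/t$ is strictly positive on $(0,\infty)$, so $f$ is strictly convex. Jensen's inequality applied to the $N$ values $p_1,\ldots,p_N$ under the uniform weights $1/N$ then yields
$$\frac{1}{N}\sum_{n=1}^N f(p_n) \geq f\!\left(\frac{1}{N}\sum_{n=1}^N p_n\right) = f(p'),$$
and multiplying both sides by $N$ delivers exactly the claimed bound. Boundary cases with some $p_n = 0$ are handled by the standard convention $0\log 0 = 0$, and equality is attained iff all $p_n$ coincide, consistent with the strict convexity of $f$.

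An equally short alternative, which echoes the style of the preceding Lemma \ref{lemma-convex-function-MI}, would be to invoke the log sum inequality directly with $a_n = p_n$ and $b_n = Z$ for every $n$; the right-hand side then collapses to $\bigl(\sum_n p_n\bigr)\log\bigl(\frac{\sum_n p_n}{NZ}\bigr) = N p' \log(p'/Z)$, giving the desired inequality at once. There is no real obstacle in either approach; I would present the proof as a one-line appeal to Jensen's inequality, since it is the most self-contained and makes the equality condition transparent, while briefly noting the equivalent log sum viewpoint for continuity with the earlier lemma.
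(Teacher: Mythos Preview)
Your proposal is correct and essentially identical to the paper's proof: both rest on the convexity of $t\mapsto t\log t$. The only cosmetic difference is that the paper first splits off the linear term $-t\log Z$ and then phrases the remaining inequality $\sum_n p_n\log p_n \geq Np'\log p'$ as ``uniform distribution maximizes the entropy,'' whereas you apply Jensen directly to $f(t)=t\log(t/Z)$ without separating the linear part.
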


\begin{proof}
The LHS is equivalent to 
\begin{align}
    \text{LHS} &=\sum^{N}_{n=1} p_{n}\log \frac{p_{n}}{Z} =\sum^{N}_{n=1} p_{n}\log p_{n}-\sum^{N}_{n=1} p_{n}\log Z \notag \\
    &=\sum^{N}_{n=1} p_{n}\log p_{n}-Np^{\prime }\log Z.
\end{align}
Since uniform distribution maximizes the entropy, we have
\begin{align}
    \sum^{N}_{n=1} p_{n}\log p_{n}-Np^{\prime }\log Z &\geq Np^{\prime }\log p^{\prime }-Np^{\prime }\log Z \notag \\
    &=Np^{\prime }\log \frac{p^{\prime }}{Z} =\text{RHS} 
\end{align}
\end{proof}

\section{Proof of Lemma \ref{lemma-MI-achieve-download}}
\label{proof:lemma-MI-achieve-download}
\begin{proof} Supposing $\boldsymbol{p}=(p_{\#},p_{0},\ldots,p_{K-1})$ is a valid solution, we first let $\hat{p}^{\ast}\triangleq p_{0}+p_{\# }$, $\hat{p}\triangleq 1-D+D/N \leq \hat{p}^{\ast}$, $\alpha=(1-N\hat{p})/(1-N\hat{p}^{\ast })$. We explicitly construct the following new assignment $\boldsymbol{p}'=(p^{\prime}_{\#},p^{\prime}_{0},\ldots,p^{\prime}_{K-1})$:
\begin{align}
    & p^{\prime }_{j}=\alpha p_{j}, \quad j\in[0,K-1],\\
    & p^{\prime }_{\# }=\hat{p} -p^{\prime }_{0},
\end{align}

The download cost constraint is obtained trivially since $p^{\prime }_{\# }+p^{\prime }_{0}=\hat{p}$. The total probability constraint can be verified by
\begin{align}
    Np^{\prime }_{\# }+\sum^{K-1}_{j=0} Ns_{j}p^{\prime }_{j} &=N\left( p^{\prime }_{\# }+p^{\prime }_{0}\right)  +\sum^{K-1}_{j=1} Ns_{j}p^{\prime }_{j}\\
    &=N\hat{p} +\frac{1-N\hat{p} }{1-N\hat{p}^{\ast } } \sum^{K-1}_{j=1} Ns_{j}p_{j} \\
    &=N\hat{p} +\frac{1-N\hat{p} }{1-N\hat{p}^{\ast } } \left( 1-N\hat{p}^{\ast } \right) \notag \\
    &=1.
\end{align}

The MI leakage under this newly constructed assignment is given by
\begin{align}
    &I(\boldsymbol{p}') =p^{\prime }_{\# }\log K+\frac{1}{K} \sum^{K}_{j=1} \begin{pmatrix}K\\ j\end{pmatrix} (N-1)^{j} \notag \\
    &\quad \bigg{\{} j\alpha p_{j-1}\log \alpha p_{j-1}+(K-j)\alpha p_{j}\log \alpha p_{j}  \\
    &\qquad -\left[ j\alpha p_{j-1}+(K-j)\alpha p_{j}\right]  \log \frac{j\alpha p_{j-1}+(K-j)\alpha p_{j}}{K} \bigg{\}} \notag \\
    & =p^{\prime }_{\# }\log K+\frac{\alpha }{K} \sum^{K}_{j=1} \begin{pmatrix}K\\ j\end{pmatrix} (N-1)^{j} \bigg{\{} jp_{j-1}\log \alpha p_{j-1} \notag \\
    &\qquad +(K-j)p_{j}\log \alpha p_{j} -\left[ jp_{j-1}+(K-j)p_{j}\right] \notag \\
    &\qquad \left( \log \alpha +\log \frac{jp_{j-1}+(K-j)p_{j}}{K} \right) \bigg{\}} \\
    & =p^{\prime }_{\# }\log K+\frac{\alpha }{K} \sum^{K}_{j=1} \begin{pmatrix}K\\ j\end{pmatrix} (N-1)^{j} \bigg{\{} jp_{j-1}\log p_{j-1} \notag \\
    &\qquad +(K-j)p_{j}\log p_{j} \\
    &\qquad -\left[ jp_{j-1}+(K-j)p_{j}\right] \log \frac{jp_{j-1}+(K-j)p_{j}}{K} \bigg{\}}  \notag \\
    & =p^{\prime }_{\# }\log K+\alpha \left( I\left( \boldsymbol{p}\right)  -p_{\# }\log K\right) \\
    & =\left( p^{\prime }_{\# }-\alpha p_{\# }\right)  \log K+\alpha I\left( \boldsymbol{p}\right) \\
    & =\left( \hat{p} -\alpha \hat{p}^{\ast } \right)  \log K+\alpha I\left( \boldsymbol{p}\right)  
\end{align}
Therefore, we have
\begin{align}
    &I(\boldsymbol{p}^{\prime })- I(\boldsymbol{p}) =\left( \hat{p} -\alpha \hat{p}^{\ast } \right)  \log K+(\alpha -1) I\left( \boldsymbol{p} \right)  \\
    = &\left( \hat{p} -\frac{1-N\hat{p} }{1-N\hat{p}^{\ast } } \hat{p}^{\ast } \right)  \log K+(\frac{1-N\hat{p} }{1-N\hat{p}^{\ast } } -1) I\left( \boldsymbol{p} \right)  \\
    = &\frac{\hat{p} -\hat{p}^{\ast } }{1-N\hat{p}^{\ast } } \log K+\frac{N\hat{p}^{\ast } -N\hat{p} }{1-N\hat{p}^{\ast } } I\left( \boldsymbol{p} \right)  \\
    = &\frac{N(\hat{p}^{\ast } -\hat{p} )}{1-N\hat{p}^{\ast } } \left( I\left( \mathbf{p} \right)  -\frac{\log K}{N} \right)  \leq 0,
\end{align}
where the inequality follows from the fact that
\begin{align}
    I(\boldsymbol{p}) &\leq p_{\# }\log K+ \frac{1}{N} \sum^{N}_{n=1} \sum_{k=1}^K\sum_{\| q\| >0} \frac{p_{n}(q|k)}{K} \notag \\
    &\qquad\qquad\qquad \bigg{\{} H(M)-H(M|Q^{[M]}_{n}=q) \bigg{\}} \\
    &\leq p_{\# }\log K+\frac{1}{N} \left( 1-Np_{\# }\right)  H\left( M\right)  =\frac{\log K}{N}.
\end{align}
Therefore, $I(\boldsymbol{p}^{\prime })\leq I(\boldsymbol{p})$, and this completes the proof.

\end{proof}

\end{document}